\newcommand{\email}[1]{\href{mailto:#1}{\nolinkurl{#1}}}
\pgfplotsset{compat=1.10}
\newtheorem{theorem}{Theorem}
\newtheorem{proposition}{Proposition}
\newtheorem{lemma}{Lemma}
\newtheorem{observation}{Observation}
\newtheorem{corollary}{Corollary}
\newcommand{\R}{\mathbb{R}}
\newcommand{\E}{\mathbb{E}}
\newcommand{\PP}{\mathbb{P}}
\newcommand{\OPT}{\ensuremath{\mathrm{OPT}}}
\newcommand{\ALG}{\ensuremath{\mathrm{ALG}}}
\newcommand{\ceil}[1]{\left\lceil #1 \right\rceil}
\crefname{ineq}{Inequality}{inequalities}
\begin{document}

\title{Trading Prophets\thanks{This work was initiated during the Fall 2021 Virtual Chair Semester on Prophet Inequalities~\citep{CorreaIH2022}. %.
We would like to thank the organizers and the participants of the semester for their feedback.
Kevin Schewior is partially supported by the Independent Research Fund Denmark, Natural Sciences, under Grant No.\ DFF-0135-00018B. MohammadTaghi Hajiaghayi and Jan Olkowski are partially supported by DARPA QuICC, NSF AF:Small 2218678, and NSF AF:Small 2114269.}}

\author{Jos\'e Correa\thanks{Universidad de Chile. Email: \email{correa@uchile.cl}, \email{andres.cristi@ing.uchile.cl}} \and Andr\'es Cristi\footnotemark[2] \and Paul D\"utting\thanks{Google Research. Email: \email{duetting@google.com}} \and Mohammad Hajiaghayi\thanks{University of Maryland. Email: \email{hajiagha@umd.edu}, \email{olkowski@umd.edu}} \and Jan Olkowski\footnotemark[4] \and Kevin Schewior\thanks{University of Southern Denmark. Email: \email{kevs@sdu.dk}}}

\date{}

\maketitle 

\begin{abstract}
In this work we initiate the study of buy-and-sell prophet inequalities. We start by considering what is arguably the most fundamental setting. In this setting the online algorithm observes a sequence of prices one after the other. At each time step, the online algorithm can decide to buy and pay the current price if it does not hold the item already; or it can decide to sell and collect the current price as a reward if it holds the item.

We show that for i.i.d.~prices a single-threshold online algorithm achieves at least $1/2$ of the expected profit of the optimal offline algorithm and we prove that this is optimal. For non-i.i.d.~prices in random order, where prices are no longer independent, we give a single-threshold online algorithm that achieves at least a $1/16$ fraction of the expected profit of the optimal offline algorithm. We also show that for this setting no online algorithm can yield a better than $1/3$ approximation, and thus establish a formal separation from the i.i.d.~case. On the other hand, we present a threshold-based online algorithm for this setting that yields a $1/2-o(1)$ approximation. For non-i.i.d.~prices no approximation is possible.

We use the results for these base cases to solve a variety of more complex settings. For instance, we show a $1/2-o(1)$ approximation for settings where prices are affiliated and the online algorithm has only access to a single sample. We also extend our upper and lower bounds for the single item case to $k$ items, and thus in particular show that it is impossible to achieve $1-o(1)$ approximations. For the budgeted version, where fractions of an item can be bought, and gains can be reinvested, we show a constant-factor approximation to the optimal offline algorithm's growth rate. In a setting with $k$ item types and price streams, we achieve a $\Omega(1/k)$ approximation for the unit-capacity case, which is optimal.
\end{abstract}

\section{Introduction}

Consider a trader, let's call her Alice, buying and selling a certain good or commodity (e.g., paintings, used cars, barrels of oil, etc.). Suppose Alice has some limit on how many units she can store. 
Consider the situation where Alice faces a sequence of potential buyers/sellers that arrive online over
time and approach her with offers, and wants to maximize her profit. Of course, she can only sell a unit of the good if she
has it, and she can only buy a unit if she has enough place to store it. How should she
proceed?

Or consider Bob, who wants to invest $\$1000$ in Bitcoin. Suppose Bob seeks to buy and sell coins (which is possible in essentially arbitrary fractions) to make as much money as he can. He is willing to reinvest any gains but he doesn't want to put in extra money. So Bob's constraint is his current budget, and not his capacity. Suppose that on any given day, there is a single market price at which Bob can buy and sell coins. He can buy as many coins as he can afford, and he can sell some or all of his current shares. How should he proceed?

Of course, there is a whole range of situations like this, and different situations will justify different modelling assumptions \cite[e.g.][]{CharnesEtAl66, Osborne59, MyersonS83}. For example, in Alice's case, one could model buyer/seller valuations as independent draws from distinct distributions that arrive in random order. Another plausible assumption could be an affiliated-valuations model, where buyers/sellers share a random base value but their individual preferences are expressed by independent increments.

In this work, we initiate the study of this type of problems from a prophet-inequality perspective. That is, we assume that the trader has some prior (distributional) knowledge about the sequence of prices, and compare the expected performance of a given online algorithm to the expected performance of the best offline algorithm. 
A fundamental challenge of our buy-and-sell prophet-inequality problem and important departure from the classic prophet inequality problem is the mixed-sign objective.
We start by considering what is arguably the most basic variant of a buy-and-sell problem, where the online algorithm can hold up to one indivisible item, she is not budget-constrained, and prices are either i.i.d.\ random variables, non-i.i.d.~presented in random order, or non-i.i.d.~presented in adversarial order. We then show how to use the results for these base cases to solve a variety of more complex settings, including settings with limited information about the distribution of prices, settings with additional forms of correlation among prices
(such as the aforementioned affiliated-valuations model),
settings with more than one item, settings with multiple types of items, settings with divisible items, and settings with budgets.
We also observe that in the arbitrarily correlated case and in the case where prices form a martingale, no approximation is possible.

\subsection{The Basic Trading Prophet Problem}
At the heart of our work is the following basic  \emph{trading prophet} problem. A ``gambler'' (online algorithm) observes a sequence of $n$ prices. The prices are generated by a stochastic process that is known to the agent, but the agent only observes the realized prices one-by-one in an online fashion. 
In the simplest and most fundamental version of the problem that we consider, the agent trades an indivisible good and can hold at most one copy of the good.
In this case whenever the agent does not have the item, she can either buy it at the current price $p_j$ (at a reward of $-p_j$) or she can skip this price. Similarly, when she holds the item, she can either sell it at the current price $p_j$ (collecting a reward of $p_j$) or she can hold on to the item. 

The agent can base her buy/sell decisions on knowledge about the stochastic process, the current state (e.g., whether she currently holds the item or not), and the history of prices. 
We seek to compare the expected reward that the agent can achieve this way to the expected reward of an all-knowing ``prophet'' (offline algorithm), who knows the entire sequence of prices in advance and can make optimal buy and sell decisions. 

Our goal are worst-case approximation guarantees (``prophet inequalities'') that state that in the worst-case over all stochastic processes from a certain class, the expected reward of the online algorithm is at least an $\alpha \in (0,1]$ fraction of the expected reward of the optimal offline algorithm.

\subsection{Results for the Basic Trading Prophet Problem}

We show a number of tight and near-tight bounds for the fundamental version of the trading-prophet problem in which the agent can hold up to one item. 
Our results for the base case are summarized in~\cref{tab:results}. We discuss a number of extensions, including additional results for correlated prices (beyond those discussed below) and budgets in~\cref{sec:extensions-overview}.

\smallskip

\paragraph{Results for I.I.D.~Prices.} We start with the case where the prices $p_j$ are i.i.d.~draws from~$F$. We show that the worst-case optimal online policy is a simple threshold policy and achieves a $1/2$-approximation. Note that we are assuming continuous distributions without loss of generality.

\medskip

\noindent{\bf Main Result 1.} If the $p_i$ are i.i.d.~draws from $F$, then buying whenever the price is below the median and selling otherwise yields a $1/2$-approximation; and no online policy can do better.

\medskip

We formally prove this result in~\cref{sec:iid}, but we give a high-level overview here: We argue that the optimal offline strategy buys whenever the price is a local minimum and sells at local maxima, and use this to show that the expected reward of the prophet is exactly $(n-1)/2$ times the expected absolute difference $\E[|X_i-X_j|]$ between any two draws $X_i, X_j, i \neq j$ from $F$, say between $X_1$ and $X_2$. A simple application of the triangle inequality in combination with the fact that we are looking at i.i.d.~random variables then leads to an upper bound of $(n-1) \cdot \E[|X_1 - T|]$ for any $T$. On the other hand, for $T$ set to the median of $F$, we show the agent has the item at any given step with probability $1/2$. This means that for any intermediate step $1 < i < n$, with half probability whenever $X_i < T$ the agent can buy and with half probability whenever $X_i \geq T$ the agent can sell. So the expected reward from an intermediate step is $1/2 \cdot (\E[X_1 \cdot \mathds{1}_{X_1 \geq T}] - \E[X_1 \cdot \mathds{1}_{X_1 < T}])$. Now since $T$ is the median, $\E[T \cdot \mathds{1}_{X_1 \geq T}] = \E[T \cdot \mathds{1}_{X_1 < T}]$. So we can add and subtract this term from the previous formula, and obtain that the expected reward from an intermediate step is $1/2 \cdot \E[|X_1-T|]$.
Together with a careful analysis of the boundary cases this shows that the expected reward achieved by the agent over all steps is exactly equal to $(n-1)/2 \cdot \E[|X_1 - T|]$.

To establish that no online policy can achieve a better approximation guarantee than $\frac{1}{2}$, consider constructing an i.i.d. instance with $n = 2$ random variables $X_i \sim F$ for $i \in \{1,2\}$, where $F$ is supported on three values: $0$, $\frac{1}{2}$, and $1$. By appropriately selecting the probability distribution for these values, it is possible to ensure that the best online policy only buys in period 1 when it observes the lowest possible value. Namely, our distribution is such that the expected period 2 price is not larger than the middle value. In contrast, the prophet will occasionally buy when it observes the middle value in period 1 and when the period-2 price is the highest possible value. This idea can also be generalized to $n \ge 2$, demonstrating that increasing $n$ does not lead to better online policies.

\begin{table}[t]
    \centering
    \begin{tabular}{r>{\centering}m{3cm}>{\centering}m{3cm}>{\centering\arraybackslash}m{3cm}}
\toprule         \multirow{2}{*}{}&\multirow{2}{*}{i.i.d.\ prices}&\multicolumn{2}{c}{non-i.i.d.~prices}\\\cmidrule{3-4}
&&random order&worst order\\
         \midrule
        upper bound & $2$\hspace{3cm}{\scriptsize (\cref{thm:iid_half_approx})} & $16$ ($2$ as $n\rightarrow\infty$)\hspace{3cm}{\scriptsize (\cref{thm:random-order,thm:random-order-v2})} & --\\[.5cm]
        lower bound & $2$\hspace{3cm}{\scriptsize (\cref{prop:iid_lb})} & $3$ \hspace{3cm}{\scriptsize (\cref{prop:rdm_order_lb})} & $\infty$\hspace{3cm}{\scriptsize (simple example)}\\
        \bottomrule
    \end{tabular}
    \caption{The achievable approximation ratios in the basic versions of the trading-prophet problem.}
    \label{tab:results}
\end{table}

\paragraph{Results for Non-I.I.D.~Prices, Random Order.} 
Now assume that prices $p_j$ are first drawn independently from non-identical distributions $F_j$, and that they are then presented to the algorithms in random order.
Note that when prices are generated this way, then prices observed on different days are no longer independent of each other. 
We show that despite this, it's still possible to achieve a $1/16$ approximation guarantee with a 
threshold policy, which sets a single non-adaptive threshold.
We also show that this case is strictly harder than the i.i.d.~case by showing that no online algorithm (whether single threshold or not) can achieve a better than $1/3$ approximation. Interestingly, this impossibility applies even if the order of random variables, not their realizations, is revealed to the algorithm before the sequence starts. We also present a threshold-based online algorithm that achieves a $1/2-\frac{c}{n}$-approximation, for some constant $c$. 

\medskip

\noindent{\bf Main Result 2.} For non-i.i.d.~prices $p_j \sim F_j$ presented in random order there exists a threshold $T$ (different from the median of the mixture distribution) such that the corresponding threshold policy that buys below $T$ and sells above $T$ achieves a $1/16$-approximation, and no online policy can achieve a better than $1/3$-approximation.

\medskip

\noindent{\bf Main Result 3.} For non-i.i.d.~prices $p_j \sim F_j$ presented in random order, setting the threshold $T$ to the median of the mixture distribution yields a $(1/2-\frac{c}{n})$-approximation, for some constant $c$.

\medskip

We give formal proofs in~\cref{sec:rand-order}, but we again give some overview here. The core of our argument is a reduction to a two-period problem with two (random) random variables $X_{\sigma(1)}$ and $X_{\sigma(2)}$, where $\sigma$ is a uniform random permutation. Generalizing our approach for the i.i.d.~case we show that the expected reward of the prophet is $(n-1)/2 \cdot \E[|X_{\sigma(1)} - X_{\sigma(2)}|]$, while the expected reward of a threshold policy with threshold $T$ is equal to
\[
(n-1)/2 \cdot \E[|X_{\sigma(1)} - X_{\sigma(2)}| \cdot \mathds{1}_{T \in [ \min(X_{\sigma(1)},X_{\sigma(2)}),\max(X_{\sigma(1)},X_{\sigma(2)})]}].
\]

The main difficulty for relating the two quantities is that, unlike in the i.i.d.~case where the comparison was via $X_1, X_2$ and these were independent random variables, now we are looking at $X_{\sigma(1)}, X_{\sigma(2)}$, which are no longer independent. 
For the universal $1/16$-approximation we address this correlation by arguing that we can approximate the two quantities of interest up to constant factors with two independent random variables $X_{a'}, X_{b'}$. For this we consider all possible ways of splitting the random variables into two equal halves and drawing $X_{a'}, X_{b'}$ from the two halves, and argue that there must be a split that leads to the required property.
For the asymptotic approximation guarantee of $(1/2-o(1))$ we formalize the intuitive claim that, the larger $n$, the less correlated the two random variables $X_{\sigma(1)}, X_{\sigma(2)}$ should be.

To show that no online policy can obtain a better than $1/3$-approximation, we consider the following instance with $n = 2$ random variables. Let $M$ be a large constant. The first random variable is equal to $M+2$ with probability $M/(M+2)$ and zero otherwise. The second random variable is $M$ with probability $M/(M+2)$ and $2M+2$ otherwise. With probability $1/2$ we first see $X_1$ and then $X_2$ and with probability $1/2$ we first see $X_2$ and then $X_1$. The best online policy (found by backward induction) can be shown to buy only when $X_{\sigma(1)} = 0$ which leads to an expected profit of $1$. The prophet in turn buys and sells except when $X_{\sigma(1)} = 2M+2$ or $X_{\sigma(2)} = 0$, which leads to an expected value of $3- O(1/M)$.

\paragraph{Results for Non-I.I.D.~Prices, Adversarial Order.}
If prices are distributed non-identically and presented in arbitrary order, then no constant approximation factor is possible. Somewhat reminiscent of the classic lower-bound example for the standard prophet inequality, for some $\varepsilon>0$, let $n=2$ and $$X_1=1;\; X_2=\begin{cases}\frac1\varepsilon&\text{w.p. }\varepsilon,\\0&\text{w.p. }1-\varepsilon.\end{cases}$$ An optimal algorithm is indifferent between buying $X_1$ and not doing so, therefore obtaining an expected value of $0$. The prophet, however, buys $X_1$ precisely when $X_2=1/\varepsilon$ (which happens with probability $\varepsilon$), therefore obtaining an expected value of $\varepsilon\cdot(1/\varepsilon-1)$.
Note how the exact same example can be interpreted as a joint distribution over pairs of prices, namely $(1,\frac{1}{\varepsilon})$ with probability $\varepsilon$ and $(1,0)$ with probability $1-\varepsilon$, and so it also shows that with arbitrarily correlated prices no approximation is possible.

\paragraph{(Non-)Connection to Bilateral Trade.} Our analysis---especially the random-order case---reveals an interesting (non-)connection to bilateral trade. Namely, the two-period problem that we reduce to can be seen as the problem of maximizing \emph{gains from trade} when the item is allocated to one of the two parties at random.

For the classic variant of this problem where the item is allocated to a fixed side of the market, classic work by \citet{MyersonS83} showed that the ``second best'' (i.e., the gains from trade achievable with a truthful mechanism) is generally strictly smaller than the ``first best'' (theoretical optimum). \citet{McAfee08} showed that a fixed price mechanism achieves a $1/2$ approximation when the two sides are identically distributed.

For general non-identical distributions it remained open whether a constant-factor approximation is possible. Only very recently, \citet{DengSW21} were able to resolve this question in an affirmative way. They showed that---unlike in our case where no constant-factor approximation is possible in the general case---the classic problem admits a $1/8.23$ approximation.

\subsection{Extensions and Applications}\label{sec:extensions-overview}

Just like the classic single-item prophet inequality \citep{KrengelS77,KrengelS78,SamuelC84}, the basic trading-prophet problem that we introduce in this paper can be used as a tool to solve a number of related problems, and it can be extended in different directions. This includes settings where prices rather then being independent are affiliated, or settings where the trader is subject to a budget constraint. We give an overview here, which is made precise in~\cref{sec:extensions}.

\paragraph{Unknown Distribution and Affiliated Prices.} We show how to handle the version of the i.i.d.\ problem in which the distribution is not known. We give a simple trading strategy that achieves a $(1/2-o(1))$-approximation. The idea is to use a random sample rather than the median as a threshold, and economize on the use of samples using a variation of the fresh-looking samples idea of \citet{CorreaDFS19,CorreaDFSZ21}. The same algorithm works and yields a $(1/2-o(1))$-approximation when prices exhibit a form of correlation known as affiliation in Economics, i.e., there is a distribution $G$ and n distributions $F_1, \ldots, F_n$ and price $p_j = x_j + y$ where $x_j \sim F_j$ and $y \sim G$.

\paragraph{More Than One Item.} We show that unlike in the single-item prophet inequality problem, where better approximation guarantees can be obtained when there are $k$ copies of the good \citep{Alaei14}, both the optimal offline algorithm and the optimal online algorithm for the trading-prophet problem satisfy an ``all or nothing'' property. So they either buy all $k$ copies that they are allowed to buy, or they sell all of the copies that they currently hold. So a simple scaling argument implies that the upper and lower bounds that we identified for the single-item case also apply in the multi-item case.

\paragraph{Budgeted Version with Fractional Purchase and Re-investment of Gains.} We show how our result can be applied in a budgeted-setting similar to that faced by a budget-constrained investor in the stock market. We assume that the agent starts with a budget of $B_1 = 1$ and no goods $S_1 = 0$. In each step $j$ the agent can either buy a fractional amount of the good (at most $B_j/p_j$ units, where $B_j$ is the budget in period $j$) or sell any fraction of the amount $S_j$ of the good that she currently possesses. We show through reduction that it's possible to achieve constant-factor approximations to the growth rate of the optimal offline policy.
 
\paragraph{Multi-Armed Bandit Version.} We present results for a variant of the basic trading-prophet problem in which there are $k$ different types of good.  We model this by running $k$ parallel copies of our base model, so there are $k$ streams of prices, one for each type of good. Both the agent and the prophet can buy any of the items at the current price if they currently don't hold an item, or they can sell the item that they hold at the current price of that item. We use our results for the basic trading-prophet problem to show that it is possible to achieve $O(k)$ approximations, and present an asymptotically tight $\Omega(k)$ lower bound.

\paragraph{Random Walks and Martingales.} A natural extension/variant of our model concerns situations where the prices form a balanced random walk, i.e., $\smash{p_j=p+\sum_{i=1}^{j}x_i}$ where $p$ is some base price and $x_1,\dots,x_n$ are i.i.d.\ variables, each either $-1$ or $1$ with equal probability. Unfortunately, the resulting sequence of prices is a martingale, and it follows from the well-known \emph{optional stopping theorem} that the expected value of any online algorithm is non-positive. Hence, no approximation ratio can be achieved by an online algorithm. Identifying a class of supermartingales that admits a meaningful approximation ratio is an interesting direction for future research.

\subsection{Further Related Work}

To the best of our knowledge, ours is the first work to address the repeated buying and selling problem from a prophet inequality lens. In particular, the aspect that we allow the agent to buy and sell multiple times seems new.

On a technical level our work is related to the vast literature on prophet inequalities.
This includes work on the classic single-item prophet inequality \citep{KrengelS77,KrengelS78,SamuelC84},  the i.i.d.~case \citep{AbolhassaniEEHK17,CorreaFHOV21,KleinbergK18,Singla18,LiuLPSS21}, and the prophet secretary problem \citep{EsfandiariHLM17,EhsaniHKS18,CorreaSZ21}. Our unknown distributions result is related to a recent stream of work that looks at prophet inequalities with samples \citep{CorreaDFS19,RubinsteinWW20,CorreaCES20,CaramanisEtAl22}. Ideas from the prophet inequalities literature have also been used to construct simple near-optimal mechanisms for two-sided markets \citep{Colini-Baldeschi16,Colini-Baldeschi17,BraunK21,DuttingFLLR21}.

Relevant pointers into the literature on two-sided markets include the aforementioned seminal work by \citet{MyersonS83} and their celebrated impossibility result for the bilateral trade problem, as well as  \citep{McAfee08,Colini-Baldeschi17a,BrustleCWZ17,DengSW21} who show constant-factor approximation guarantees for maximizing 
the ``gains from trade''.

Related stopping problems have been studied in the finance/applied probability literature. First, \citet{GraversenEtAl06,DuToitP07} study the problem of stopping a Brownian motion, with the benchmark of minimizing the expected squared difference between the value at which the online algorithm stops and the maximum value in the sequence. Second, \citet{DuToitP09}, for the same stochastic process, study the problem of minimizing the expected ratio between the maximum value in the sequence and the value at which the online algorithm stops. In addition to studying different benchmarks, these works also do not quantify the worst-case performance guarantee.

There is also a rich literature on online portfolio selection problems, including work that takes a worst-case competitive analysis approach  \cite[Chapter 14]{BorodinEY98}. We refer the reader to the survey of \citet{LiHoi14} for details. 

\section{Model}
In our basic model we trade a single unit of an item. We are given $n$ distributions $F_1,\ldots, F_n$ and, in random order, in each of $n$ periods we are offered a price for the item, drawn from one of the distributions. More precisely, we have independent prices $X_i\sim F_i$ for $1\leq i\leq n$, and an independent uniformly random permutation $\sigma: \{1,\ldots,n\}\rightarrow \{1,\ldots,n\}$; and on period $i$ we get to observe the price $X_{\sigma(i)}$. In each period we have two possible states: We either have the item, or we do not. In each period $i$, immediately after observing the price $X_{\sigma(i)}$ we must make a decision. If we have the item, we can sell it at price $X_{\sigma(i)}$, or pass. If we do not have the item, we can buy it at price $X_{\sigma(i)}$, or pass. In period $1$ we do not have the item, and the state of period $i+1$ is determined by the decision of period $i$. 

Our objective is to design a decision rule that maximizes the expectation of the profit we make, where the profit is the sum of the prices on periods in which we sell minus the sum of prices on periods in which we buy. For an algorithm $\ALG$, slightly abusing notation, we also denote its (random) profit by $\ALG$ and its expected profit by $\E(\ALG)$.

We denote by $\OPT$ the optimal profit in hindsight. This is, $\OPT$ is the (random) maximum profit given the realization of the prices and the permutation $\sigma$, over all possible sequences of buy/sell decisions. We compare ourselves with $\E(\OPT)$, where the expectation is over the realizations of the prices and the permutation $\sigma$. We say a decision rule or algorithm $\ALG$ is an $\alpha$-approximation for a family of instances if
\[
\E(\ALG)\geq \alpha \cdot \E(\OPT)
\]
for all instances in that family. 

The following simple observation about $\OPT$, which can be proved by doing local changes in the decisions of $\OPT$, will be important in our analysis. It states that $\OPT$ buys in local minima and sells in local optima. See \cref{fig:opt} for an illustration.

\begin{observation}
\label{obs:opt_characterization}
To handle corner cases in a unified manner, define $X_{\sigma(0)} := \infty$ and $X_{\sigma(n+1)} := -\infty$. Then 
$\OPT$ buys in period $1 \leq i < n$ if and only if the price $X_{\sigma(i)}$ is a local minimum, i.e., $X_{\sigma(i)}<X_{\sigma(i-1)}$ and $X_{\sigma(i)}\leq X_{\sigma(i+1)}$. Analogously, $\OPT$ sells in period $1 < i \leq n$ if and only if the price $X_{\sigma(i)}$ is a local maximum, i.e., if 
$X_{\sigma(i)}\geq X_{\sigma(i-1)}$ and $X_{\sigma(i)}> X_{\sigma(i+1)}$.\footnote{The strict inequalities break ties in the border case where there is a set of consecutive periods with equal prices. In such a set $\OPT$ buys or sells at most once, and furthermore, it can do the operation in any period and the result is the same. With this particular choice, $\OPT$ will only buy in the first period of such a set, and sell only in the last one.}
\end{observation}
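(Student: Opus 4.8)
The plan is to pin down OPT's action in every state of every period using the optimal value functions, which is a shade cleaner than the local-exchange argument hinted at in the statement. For $1\le i\le n+1$ and $k\in\{0,1\}$ let $V_i(k)$ be the maximum profit obtainable over periods $i,\dots,n$ when period $i$ is entered holding $k$ copies of the item; so $V_{n+1}(0)=V_{n+1}(1)=0$, $V_1(0)$ realizes $\OPT$, and for every realization of the prices and of $\sigma$,
\[
V_i(1)=\max\{X_{\sigma(i)}+V_{i+1}(0),\ V_{i+1}(1)\},\qquad V_i(0)=\max\{V_{i+1}(1)-X_{\sigma(i)},\ V_{i+1}(0)\}.
\]
Reading these off, at period $i$ selling (when holding) is weakly optimal iff $X_{\sigma(i)}\ge D_{i+1}$, and buying (when empty) is weakly optimal iff $X_{\sigma(i)}\le D_{i+1}$, where $D_{i+1}:=V_{i+1}(1)-V_{i+1}(0)$ --- the \emph{same} threshold governs both decisions.

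The key identity is $D_i=X_{\sigma(i)}$ for every $1\le i\le n$. Indeed, the two equalities above give $D_i=\max\{X_{\sigma(i)}+V_{i+1}(0),\,V_{i+1}(1)\}-\max\{V_{i+1}(1)-X_{\sigma(i)},\,V_{i+1}(0)\}$, and since $\max\{x+b,a\}=x+\max\{b,a-x\}$ this collapses to $D_i=X_{\sigma(i)}$; no induction, case analysis, or assumption on the signs of the prices is needed. Nonnegativity of the $X_j$ --- which we assume throughout (prices being nonnegative) --- is used only at the last period, where it gives $X_{\sigma(n)}\ge 0=D_{n+1}$, so that OPT always sells at period $n$ whenever it holds the item there.

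Substituting $D_{i+1}=X_{\sigma(i+1)}$ for $1\le i<n$ (the case $i=n$ handled by the previous remark, which is exactly what the sentinel $X_{\sigma(n+1)}:=-\infty$ is for), the per-period rules become: when holding, OPT sells iff $X_{\sigma(i)}>X_{\sigma(i+1)}$ (breaking the equality tie in favor of not selling, as in the footnote); when empty, OPT buys iff $X_{\sigma(i)}\le X_{\sigma(i+1)}$ (breaking the tie in favor of buying). A one-line forward induction then tracks the state: OPT enters period $1$ empty, and in \emph{either} state at period $i$ the rules above send OPT into the holding state at period $i+1$ exactly when $X_{\sigma(i)}\le X_{\sigma(i+1)}$. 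Hence, with $X_{\sigma(0)}:=\infty$, OPT is empty at the start of period $i$ iff $X_{\sigma(i)}<X_{\sigma(i-1)}$; intersecting this with the per-period decision rules gives precisely the claimed ``buy at local minima, sell at local maxima'' characterization.

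The only real obstacle is the bookkeeping around ties and the two boundary periods. One has to verify that the footnote's convention (buy at the first, sell at the last period of a maximal constant stretch) is genuinely consistent with the rules ``buy iff $X_{\sigma(i)}\le X_{\sigma(i+1)}$'' and ``sell iff $X_{\sigma(i)}>X_{\sigma(i+1)}$'' --- the crucial point being that OPT is never empty in the interior of a constant stretch, so the non-strict inequality in the buy rule cannot trigger a spurious repeat purchase --- and that the sentinels $X_{\sigma(0)}=\infty$ and $X_{\sigma(n+1)}=-\infty$ correctly encode ``empty at period $1$'' and ``no purchase at period $n$''. None of this is deep. A value-function-free alternative is the exchange argument alluded to in the statement: from any optimal schedule, whenever a purchase is not at a local minimum (or a sale is not at a local maximum) one shifts it one period toward the relevant extremum, or cancels/merges the adjacent pair of trades, never losing profit and terminating at a schedule of the stated form; this trades the Bellman bookkeeping for a somewhat more delicate disjointness-and-termination analysis.
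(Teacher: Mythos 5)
Your proof is correct and takes a genuinely different route from the one the paper gestures at. The paper states that the observation ``can be proved by doing local changes in the decisions of $\OPT$,'' i.e.\ an exchange argument, and gives no further details; you instead set up the Bellman recursion for the hindsight value functions $V_i(0),V_i(1)$ and extract the characterization from the identity $D_i:=V_i(1)-V_i(0)=X_{\sigma(i)}$, which indeed follows in one line from $\max\{x+b,a\}=x+\max\{b,a-x\}$ with no induction and no sign assumptions. The forward state-tracking step (holding at $i+1$ iff $X_{\sigma(i)}\le X_{\sigma(i+1)}$, regardless of the state at $i$) is also right, and intersecting it with the per-period rules reproduces the local-min/local-max characterization with exactly the paper's tie-breaking. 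You are also correct that nonnegativity of prices is the one place the sentinels are doing real work: $D_{n+1}=0$, not $-\infty$, so ``always sell at $n$, never buy at $n$'' requires $X_{\sigma(n)}\ge 0$ (with the $X_{\sigma(n)}=0$ corner being a profit-neutral tie that the paper's continuity assumption makes measure-zero anyway). Compared with the exchange argument, your value-function proof is more mechanical and self-contained --- it yields the optimal policy directly rather than showing any optimal schedule can be massaged into the stated form, and it avoids the termination/disjointness bookkeeping you correctly flag as the delicate part of the exchange route. The one small quibble is presentational: you invoke ``the previous remark'' and the sentinel $-\infty$ as if they were the same thing when handling $i=n$, but in fact $D_{n+1}=0\ne -\infty$; the sentinel is only a faithful encoding because of nonnegativity, which you do say, so this is a matter of phrasing rather than a gap.
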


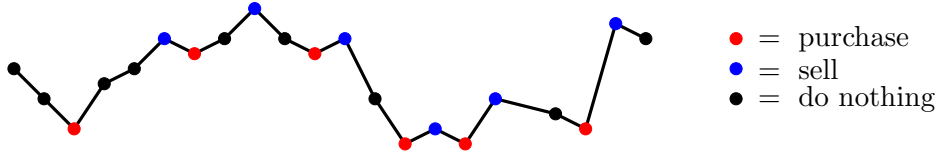
\begin{figure}
    \def\z{0.5}
    \centering
    \begin{tikzpicture}[scale=.8]
    \draw[very thick] 
    (0,\z*0) -- 
    (0.5,\z*-1) -- 
    (1.0,\z*-2) -- 
    (1.5,\z*-0.5) -- 
    (2.0,\z*0) -- 
    (2.5,\z*1) -- 
    (3.0,\z*0.5) -- 
    (3.5,\z*1.0) -- 
    (4.0,\z*2.0) -- 
    (4.5,\z*1) -- 
    (5.0,\z*0.5) -- 
    (5.5,\z*1.0) -- 
    (6.0,\z*-1.0) -- 
    (6.5,\z*-2.5) -- 
    (7,\z*-2) -- 
    (7.5,\z*-2.5) -- 
    (8,\z*-1) -- 
    (9,\z*-1.5) --
    (9.5,\z*-2) -- 
    (10,\z*1.5) -- 
    (10.5,\z*1);
    \draw[fill, color = black](0,\z*0) circle (0.1);
    \draw[fill, color = black](0.5,\z*-1) circle (0.1);
    \draw[fill, color = red](1.0,\z*-2) circle (0.1);
    \draw[fill, color = black](1.5,\z*-0.5) circle (0.1);
    \draw[fill, color = black](2.0,\z*0) circle (0.1);
    \draw[fill, color = blue](2.5,\z*1) circle (0.1);
    \draw[fill, color = red](3.0,\z*0.5) circle (0.1);
    \draw[fill, color = black](3.5,\z*1.0) circle (0.1);
    \draw[fill, color = blue](4.0,\z*2.0) circle (0.1);
    \draw[fill, color = black](4.5,\z*1) circle (0.1);
    \draw[fill, color = red](5.0,\z*0.5) circle (0.1);
    \draw[fill, color = blue](5.5,\z*1.0) circle (0.1);
    \draw[fill, color = black](6.0,\z*-1.0) circle (0.1);
    \draw[fill, color = red](6.5,\z*-2.5) circle (0.1);
    \draw[fill, color = blue](7,\z*-2) circle (0.1);
    \draw[fill, color = red](7.5,\z*-2.5) circle (0.1);
    \draw[fill, color = blue](8,\z*-1) circle (0.1);
    \draw[fill, color = black](9,\z*-1.5) circle (0.1);
    \draw[fill, color = red](9.5,\z*-2) circle (0.1);
    \draw[fill, color = blue](10,\z*1.5) circle (0.1);
    \draw[fill, color = black](10.5,\z*1) circle (0.1);
    \draw[fill, color = red] (12,0.5) circle (0.1);
    \node[anchor=west] at (12.2,0.5) {=\;  purchase};
    \draw[fill, color = blue] (12,0) circle (0.1);
    \node[anchor=west] at (12.2,0) {=\; sell};
    \draw[fill, color = black] (12,-0.5) circle (0.1);
    \node[anchor=west] at (12.2,-0.5) {=\; do nothing};
    \end{tikzpicture}
    \caption{Example of decisions made by $\OPT$, the optimal buy/sell decisions in hindsight.}
    \label{fig:opt}
\end{figure}

 In our proofs we assume all distributions are absolutely continuous. In particular, this implies that for every $i$ there always exists $T_i\in \R$ (a median) such that $\PP(X_i<T_i)=\PP(X_i\geq T_i)=1/2$. This assumption is w.l.o.g., as we can add to each price an auxiliary independent perturbation $\varepsilon_i\sim\text{Uniform}[-\varepsilon,\varepsilon]$, for an $\varepsilon>0$ such that $n\cdot \varepsilon \ll\E(\OPT)$.

\section{I.I.D.~Prices}
\label{sec:iid}

In this section we consider the case where all distributions are equal, so the prices are i.i.d. For this setting we remove the dependency on the permutation $\sigma$, and simply denote by $X_i$ the price in period $i$.
We present a simple algorithm that achieves a $1/2$-approximation, and show this factor is tight.

Our algorithm is a single-threshold rule, i.e., there is a threshold $T\in \R$ such that our decision in each period $1\leq i< n$ is to sell whenever $X_i\geq T$, and to buy whenever $X_i<T$. In period $n$ the algorithm sells if we have the item and passes if we do not, regardless of the price.
We denote such an algorithm as $\ALG_T$.

\begin{theorem}
\label{thm:iid_half_approx}
If the prices are i.i.d.~and $T$ is the median of the distribution, then $\ALG_T$ is a $1/2$-approximation, i.e., $\E(\ALG_T)\geq \frac{1}{2}\E(\OPT)$.
\end{theorem}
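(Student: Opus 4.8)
The plan is to evaluate both $\E(\OPT)$ and $\E(\ALG_T)$ essentially exactly, reducing everything to a single quantity $\E[|X_1-T|]$, where $X_1,X_2$ denote two i.i.d.\ draws from the common distribution.

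First I would pin down $\E(\OPT)$. Using \cref{obs:opt_characterization}, $\OPT$ buys at the last local minimum at or before period $i$ and sells at the next local maximum at or after period $i+1$, so it holds the item ``across the gap'' between periods $i$ and $i+1$ precisely when $X_i<X_{i+1}$; summing the telescoping contributions gives $\OPT=\sum_{i=1}^{n-1}(X_{i+1}-X_i)^+$ with $(x)^+:=\max(x,0)$. Since the $X_i$ are i.i.d., $X_{i+1}-X_i$ is symmetric about $0$, so $\E[(X_{i+1}-X_i)^+]=\tfrac12\E[|X_{i+1}-X_i|]=\tfrac12\E[|X_1-X_2|]$, hence $\E(\OPT)=\tfrac{n-1}{2}\E[|X_1-X_2|]$. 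The triangle inequality $|X_1-X_2|\le|X_1-T|+|X_2-T|$ together with i.i.d.-ness then yields $\E(\OPT)\le(n-1)\,\E[|X_1-T|]$ for every $T$.

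Next I would evaluate $\E(\ALG_T)$ for $T$ the median. The key structural fact is that, under the single-threshold rule, a one-step case analysis shows that the agent holds the item at the start of period $i$ (for $2\le i\le n$) if and only if $X_{i-1}<T$, regardless of whether she held it in the previous period; since $T$ is the median, this event has probability $\tfrac12$ and is \emph{independent} of $X_i$. Conditioning on the holding state, the expected reward from an interior period $1<i<n$ is $\tfrac12\bigl(\E[X_i\mathds{1}_{X_i\ge T}]-\E[X_i\mathds{1}_{X_i<T}]\bigr)$; adding and subtracting $\E[T\mathds{1}_{X_i\ge T}]=\E[T\mathds{1}_{X_i<T}]=T/2$ (valid exactly because $T$ is the median) rewrites this as $\tfrac12\E[|X_i-T|]=\tfrac12\E[|X_1-T|]$. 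For the two boundary periods, period $1$ contributes $-\E[X_1\mathds{1}_{X_1<T}]$ (the agent starts without the item) and period $n$ contributes $\tfrac12\E[X_n]$ (she holds with probability $\tfrac12$ independently of $X_n$, and then sells); the same median identity shows these two terms sum to exactly $\tfrac12\E[|X_1-T|]$, i.e.\ one interior period's worth. Summing over all periods gives $\E(\ALG_T)=\tfrac{n-1}{2}\E[|X_1-T|]$, and combining with the upper bound on $\E(\OPT)$ finishes the proof.

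The main obstacle is the exact computation of $\E(\ALG_T)$ rather than the $\OPT$ side. Two points need care: the observation that the holding state at the start of period $i$ is measurable with respect to $X_{i-1}$ alone (hence independent of $X_i$), which is what lets the per-period rewards factor cleanly; and the bookkeeping for the boundary periods $1$ and $n$, which individually look irregular but must be shown to combine into precisely $\tfrac12\E[|X_1-T|]$ so that the constant comes out to exactly $1/2$. The median property of $T$ is the lever used throughout, collapsing the signed expectations $\E[X\mathds{1}_{X\ge T}]-\E[X\mathds{1}_{X<T}]$ into $\E[|X-T|]$.
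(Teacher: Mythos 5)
Your proposal is correct and follows essentially the same route as the paper: both establish $\E(\OPT)=\frac{n-1}{2}\E|X_1-X_2|\le(n-1)\E|X_1-T|$ via the local-min/local-max characterization and the triangle inequality, and both compute $\E(\ALG_T)=\frac{n-1}{2}\E|X_1-T|$ by observing that the holding state at period $i$ is determined by $\{X_{i-1}<T\}$, which has probability $1/2$ and is independent of $X_i$, then collapse the signed expectations with the median identity. Your remark that the two boundary periods together contribute exactly one interior period's worth is the same arithmetic that appears in the paper's final display, just flagged explicitly.
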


To prove this theorem, we first obtain in \Cref{lem:iid_upper_bound_opt} an upper bound on $\E(\OPT)$ in terms of $T$, and then show in \Cref{lem:iid_lower_bound_alg} that the expected profit of the algorithm is in fact half of this upper bound when $T$ is the median of the distribution.

\begin{lemma}
\label{lem:iid_upper_bound_opt}
If the prices are i.i.d., then for any $T\in \R$, 
\[
\E(\OPT)=\frac{n-1}{2}\cdot \E(|X_1-X_2|)\leq (n-1)\cdot \E(|X_1-T|).
\]
\end{lemma}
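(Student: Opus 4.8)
The plan is to compute $\E(\OPT)$ exactly using the local-minimum/local-maximum characterization from Observation \ref{obs:opt_characterization}, and then bound the resulting expression via the triangle inequality. First I would write $\OPT$ as a telescoping sum over the $n$ periods: by Observation \ref{obs:opt_characterization}, $\OPT$ buys at price $X_i$ exactly when $X_i$ is a local minimum and sells at price $X_i$ exactly when $X_i$ is a local maximum, so $\OPT = \sum_{i=1}^n X_i \cdot (\mathds{1}_{X_i \text{ is a local max}} - \mathds{1}_{X_i \text{ is a local min}})$, with the boundary conventions $X_0 = \infty$, $X_{n+1} = -\infty$. A cleaner way to package this is to note that between two consecutive ``turning points'' the contribution telescopes, so in fact $\OPT = \sum_{i=1}^{n-1} \bigl( \max(X_i, X_{i+1}) - \min(X_i, X_{i+1}) \bigr)\cdot \tfrac12$ up to boundary corrections — more precisely, I would verify the identity $\OPT = \tfrac12 \sum_{i=1}^{n-1} |X_i - X_{i+1}|$ by checking it matches the buy-low/sell-high description on each monotone run. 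Taking expectations and using that the $X_i$ are i.i.d.\ (so $\E|X_i - X_{i+1}| = \E|X_1 - X_2|$ for every $i$) gives $\E(\OPT) = \tfrac{n-1}{2}\,\E(|X_1 - X_2|)$.

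For the inequality, I would apply the triangle inequality pointwise: $|X_1 - X_2| \le |X_1 - T| + |T - X_2|$ for any fixed $T \in \R$. Taking expectations and using that $X_1$ and $X_2$ are identically distributed yields $\E(|X_1 - X_2|) \le \E(|X_1 - T|) + \E(|X_2 - T|) = 2\,\E(|X_1 - T|)$. Combining with the exact formula gives $\E(\OPT) = \tfrac{n-1}{2}\E(|X_1-X_2|) \le (n-1)\,\E(|X_1 - T|)$, as claimed.

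The main obstacle is establishing the exact identity $\OPT = \tfrac12 \sum_{i=1}^{n-1} |X_i - X_{i+1}|$ cleanly — in particular handling the boundary periods ($\OPT$ starts without the item and may or may not end holding it) and the tie-breaking in Observation \ref{obs:opt_characterization} so that the telescoping is valid with no off-by-one error. Concretely I would argue: decompose the sequence $X_1, \dots, X_n$ into maximal monotone runs; on an increasing run from a local min $m$ to a local max $M$, $\OPT$ contributes $M - m$, which equals the sum of $|X_i - X_{i+1}|$ over that run, and similarly a decreasing run from $M$ to $m$ contributes $0$ to $\OPT$ (it just holds, or holds nothing) while the absolute differences over it sum to $M - m$; pairing each ``profitable'' ascent with the descent preceding or following it, every $|X_i - X_{i+1}|$ term gets counted, and the total profit is exactly half the total variation. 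The i.i.d.\ assumption is used only at the very end to replace each $\E|X_i - X_{i+1}|$ by $\E|X_1 - X_2|$ and to symmetrize the triangle-inequality bound; the identity itself holds sample-path-wise for arbitrary prices.
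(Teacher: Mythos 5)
The key step of your argument is the claimed pointwise identity $\OPT = \tfrac12\sum_{i=1}^{n-1}|X_i-X_{i+1}|$, and you explicitly assert that ``the identity itself holds sample-path-wise for arbitrary prices.'' This is false. Take $n=2$ with $X_1=0$, $X_2=1$: $\OPT = 1$ (buy at $0$, sell at $1$), but $\tfrac12|X_1-X_2|=\tfrac12$. Your own ``pairing ascents with descents'' sketch reveals the problem: the first and last monotone runs need not be paired, so the boundary does not cancel. The correct pointwise identity, which follows from \cref{obs:opt_characterization} exactly as in the paper (it is also spelled out in the proof of \cref{lemma:opt-est-ran-ord}), is
\[
\OPT \;=\; \sum_{i=1}^{n-1}\bigl(X_{i+1}-X_i\bigr)\cdot\mathds{1}_{X_{i+1}\geq X_i}
\;=\; \sum_{i=1}^{n-1}\bigl[X_{i+1}-X_i\bigr]_+ .
\]
Since $[a]_+ = \tfrac12(|a|+a)$, this equals $\tfrac12\sum_{i=1}^{n-1}|X_{i+1}-X_i| + \tfrac12(X_n-X_1)$, i.e.\ your identity is off by the extra term $\tfrac12(X_n - X_1)$, which vanishes only in expectation and only because the $X_i$ are identically distributed. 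The factor $1/2$ does not come from a deterministic telescoping cancellation; it comes from the exchangeability symmetry $\E\bigl((X_1-X_2)\mathds{1}_{X_1\geq X_2}\bigr) = \tfrac12\E|X_1-X_2|$ after taking expectations. Your remark that ``the i.i.d.\ assumption is used only at the very end'' is therefore backwards: identical distribution is precisely what rescues the computation of $\E(\OPT)$.

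The rest of your argument (the triangle inequality $|X_1-X_2|\leq|X_1-T|+|T-X_2|$ and symmetrizing via identical marginals) matches the paper and is fine. The fix is simple: replace the false pointwise claim by the positive-part identity above, take expectations, and observe $\E\bigl[X_{i+1}-X_i\bigr]_+ = \E\bigl[X_1-X_2\bigr]_+ = \tfrac12\E|X_1-X_2|$ for every $i$ by i.i.d.-ness; then the bound follows as you wrote.
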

\begin{proof}
Let us denote by $\OPT_i$ the gain of $\OPT$ in period $i$, i.e., if $\OPT$ buys in period $i$, then $\OPT_i=-X_i$, if $\OPT$ sells in period $i$, then $\OPT_i=X_i$ and $\OPT_i=0$ otherwise. From \Cref{obs:opt_characterization}, we have the following.
\begin{align*}
    \OPT_i = \begin{cases} 
    -X_1\cdot \mathds{1}_{X_1\leq X_2} &\text{ if } i=1
    \\
    X_i\cdot \mathds{1}_{X_i\geq X_{i-1}} -X_i \cdot \mathds{1}_{X_i\leq X_{i+1}}
    &\text{ if }2\leq i\leq n-1
    \\
    X_n\cdot \mathds{1}_{X_n\geq X_{n-1}} 
    &\text{ if } i=n.
    \end{cases}
\end{align*}
Now we calculate $\E(\OPT)$ by adding up all these terms and taking expectation.
\begin{align*}
    \E(\OPT)&= \sum_{i=1}^n \E(\OPT_i)
    =\sum_{i=2} ^{n} \E(X_i\cdot \mathds{1}_{X_i\geq X_{i-1}})
    -\sum_{i=1} ^{n-1} \E(X_i \cdot \mathds{1}_{X_i\leq X_{i+1}})\\
    &=(n-1)\cdot \E\Big((X_1-X_2)\cdot \mathds{1}_{X_1\geq X_2}\Big).
\end{align*}
The last line follows from the fact that prices are i.i.d. Finally, also from the fact that prices are i.i.d.~and from the triangle inequality, we conclude that for any $T\in \R$,
\[
\E(\OPT)=\frac{n-1}{2}\cdot \E(|X_1-X_2|) \leq (n-1)\cdot \E(|X_1-T|).\qedhere
\]
\end{proof}

\begin{lemma}
\label{lem:iid_lower_bound_alg}
If the prices are i.i.d. and $T$ is the median of the distribution, then
\[
\E(\ALG_T) = \frac{n-1}{2}\cdot \E(|X_1-T|).
\]
\end{lemma}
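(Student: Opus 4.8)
The plan is to evaluate $\E(\ALG_T)$ period by period, keeping track of the probability that the algorithm holds the item at the start of each period. The structural fact that makes this work is that for the single-threshold rule the decision taken in period $i-1$ depends only on $X_{i-1}$ and on whether the item is currently held, and the latter is itself a deterministic function of $X_1,\dots,X_{i-2}$. Hence the event $H_i := \{\text{the algorithm holds the item at the start of period }i\}$ is measurable with respect to $X_1,\dots,X_{i-1}$, and in particular independent of $X_i$. I would first show $\PP(H_i)=1/2$ for every $2\le i\le n$ by induction: in period $1$ the algorithm buys iff $X_1<T$, so $\PP(H_2)=\PP(X_1<T)=1/2$ since $T$ is the median; for the inductive step, conditioning on $H_{i-1}$, if the item is held it is kept iff $X_{i-1}<T$ and if it is not held it is bought iff $X_{i-1}<T$, so in either case $H_i$ occurs exactly when $X_{i-1}<T$, giving $\PP(H_i)=\PP(X_{i-1}<T)=1/2$ regardless of $H_{i-1}$.

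Next I would compute the expected gain in each period using independence of $H_i$ from $X_i$ together with the i.i.d.\ assumption. Write $a:=\E(X_1\mathds{1}_{X_1\ge T})$ and $b:=\E(X_1\mathds{1}_{X_1<T})$, so $\E(X_1)=a+b$. Period $1$ contributes $-\E(X_1\mathds{1}_{X_1<T})=-b$. Each intermediate period $2\le i\le n-1$ contributes $\PP(H_i)\,\E(X_i\mathds{1}_{X_i\ge T})-\PP(\neg H_i)\,\E(X_i\mathds{1}_{X_i<T})=\tfrac12(a-b)$. Period $n$, where the algorithm sells iff it holds the item regardless of price, contributes $\PP(H_n)\,\E(X_n)=\tfrac12(a+b)$. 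Summing, $\E(\ALG_T)=-b+\tfrac{n-2}{2}(a-b)+\tfrac12(a+b)=\tfrac{n-1}{2}(a-b)$, where the two ``half-edge'' boundary terms are exactly what completes the telescoping.

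Finally I would identify $a-b$ with $\E(|X_1-T|)$, again using that $T$ is the median: since $\PP(X_1\ge T)=\PP(X_1<T)=1/2$ we get $\E(T\mathds{1}_{X_1\ge T})=\E(T\mathds{1}_{X_1<T})=T/2$, and therefore $\E(|X_1-T|)=\E\big((X_1-T)\mathds{1}_{X_1\ge T}\big)+\E\big((T-X_1)\mathds{1}_{X_1<T}\big)=(a-T/2)+(T/2-b)=a-b$. Combining with the previous display yields $\E(\ALG_T)=\tfrac{n-1}{2}\,\E(|X_1-T|)$, as claimed.

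The only genuine subtlety, and the step I would write most carefully, is the measurability/independence argument behind $\PP(H_i)=1/2$ and the factorization $\E\big(X_i\mathds{1}_{X_i\ge T}\mathds{1}_{H_i}\big)=\PP(H_i)\,\E\big(X_i\mathds{1}_{X_i\ge T}\big)$; once that is in place everything else is bookkeeping. I would also be careful to treat periods $1$ and $n$ separately from the intermediate ones, since the algorithm behaves differently there (no purchase is ever profitable to ``complete'' in period $n$, and the state is forced in period $1$), and it is precisely these two endpoints that supply the terms making the sum collapse to $\tfrac{n-1}{2}(a-b)$.
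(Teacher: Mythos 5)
Your proof is correct and follows essentially the same route as the paper: a period-by-period decomposition, the observation that the probability of holding the item equals $1/2$ in every period $i\ge 2$ (because the state at period $i$ is determined solely by whether $X_{i-1}<T$), and the median identity that converts $\E(X_1\mathds{1}_{X_1\ge T})-\E(X_1\mathds{1}_{X_1<T})$ into $\E(|X_1-T|)$. The only cosmetic differences are your shorthand $a,b$ and the explicit measurability/independence remark, which the paper leaves implicit.
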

\begin{proof}
As in \Cref{lem:iid_upper_bound_opt}, we analyze the  gains of $\ALG_T$ in period $i$, which we denote by $\ALG_T(i)$. If $\ALG_T$ buys in period $i$, then $\ALG_T(i)=-X_i$, if it sells, then $\ALG_T(i)=X_i$, and $\ALG_T(i)=0$ otherwise. By the definition of $\ALG_T$, we have that
\[
\ALG_T(1)=-X_1\cdot \mathds{1}_{X_1<T}.
\]
Denote by $p_i$ the probability that we have the item in period $i$ (before making a decision). For $2\leq i\leq n-1$, since the price of period $i$ is independent of the prices in periods $1,\ldots,i-1$, we have the following.
\begin{align*}
\E(\ALG_T(i))&= \E(X_i\cdot \mathds{1}_{X_i\geq T})\cdot p_i - 
\E(X_i\cdot \mathds{1}_{X_i< T}) \cdot (1-p_i).
\end{align*}
In the last period, the algorithm only sells whenever we have the item. Thus,
\[
\E(\ALG_T(n))=\E(X_n)\cdot p_n=\Big(\E(X_n\cdot \mathds{1}_{X_n\geq T})+ 
\E(X_n\cdot \mathds{1}_{X_n< T})\Big) \cdot p_n.
\]
We show now that if $T$ is the median of the distribution, $p_i=1/2$ for all $i\geq 2$. In fact, we have the following formula, which is obtained by conditioning on the event that we have (or not) the item in period $i-1$.
\begin{align*}
    p_i&=\PP(X_{i-1}<T)\cdot p_{i-1} +
    \PP(X_{i-1}<T)\cdot (1-p_{i-1}) \\
    &= \PP(X_{i-1}<T) = 1/2.
\end{align*}
Putting all together, we have that
\begin{align*}
    \E(\ALG_T) ={}& \sum_{i=1}^n \E(\ALG_T(i))\\
    ={}& -\E(X_1\cdot \mathds{1}_{X_1<T})
    +\frac{1}{2}\cdot\sum_{i=2}^{n-1} \E(X_i\cdot \mathds{1}_{X_i\geq T})
    -\frac{1}{2}\cdot\sum_{i=2}^{n-1}
    \E(X_i\cdot \mathds{1}_{X_i< T})\\
    &+\frac{1}{2}\cdot
    \Big(\E(X_n\cdot \mathds{1}_{X_n\geq T})+ 
    \E(X_n\cdot \mathds{1}_{X_n< T})\Big)\\
    ={}&
    \frac{n-1}{2}\cdot
    \Big( \E(X_1\cdot \mathds{1}_{X_1\geq T}) - \E(X_1\cdot \mathds{1}_{X_1<T}) \Big).
\end{align*}
Here, the last line comes from the fact that the prices are i.i.d. Finally, notice that $\PP(X_1\geq T)= \PP(X_1<T)=1/2$, so $\E(T\cdot \mathds{1}_{X_1\geq T})=\E(T\cdot \mathds{1}_{X_1<T})$, and therefore, by adding and subtracting this quantity in the last line, we conclude that
\[
\E(\ALG_T)=\frac{n-1}{2}\cdot \E(|X_1-T|). \qedhere
\]
\end{proof}

\begin{proposition}\label{prop:iid_lb}
For any $n \ge 2$ and $\varepsilon>0$, there is an instance with $n$ i.i.d.~prices such that for the optimal algorithm \ALG,
$$\E(\ALG) \leq \left(\frac{1}{2}+\varepsilon\right) \cdot\E(\OPT).$$
\end{proposition}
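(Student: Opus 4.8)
The plan is to prove this by exhibiting, for each $n\ge 2$, a single i.i.d.\ instance whose prices take only three values $0<m<1$, with probabilities $a,b,c$ on $0,m,1$ respectively chosen at the very end. The intuition, already visible in the $n=2$ sketch in the introduction, is that the optimal online algorithm, facing the expected future value of the item, will only ever want to buy at the lowest price $0$, whereas the prophet additionally harvests profit from the middle atom $m$ (it buys low and sells high, so it profits on every upward move, in particular $0\to m$, $m\to 1$, $m\to m$ is not profitable but $0\to m$ and $m\to 1$ are). The structural point that makes the step from $n=2$ to general $n$ painless is that, as I will show, \emph{both} $\E(\ALG)$ and $\E(\OPT)$ scale like $n-1$, so the approximation ratio is the same for every $n$ and equals the $n=2$ ratio; hence it suffices to tune the three atom probabilities once.

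To upper-bound $\E(\ALG)$, i.e.\ the value of the \emph{optimal} online algorithm, I would set up the standard dynamic program: let $V_i^{\mathrm{have}}$ and $V_i^{\mathrm{none}}$ be the optimal expected profit-to-go from the start of period $i$ when the agent does, respectively does not, hold the item, with $V_n^{\mathrm{none}}=0$, $V_n^{\mathrm{have}}=\E(X_n)=\mu$ (writing $\mu:=\E(X_1)$), and the usual one-step recursions $V_i^{\mathrm{none}}=\E[\max(V_{i+1}^{\mathrm{none}},-X_i+V_{i+1}^{\mathrm{have}})]$ and $V_i^{\mathrm{have}}=\E[\max(V_{i+1}^{\mathrm{have}},X_i+V_{i+1}^{\mathrm{none}})]$ for $i<n$, so that $\E(\ALG)=V_1^{\mathrm{none}}$. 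The crucial observation is that the marginal value of holding the item, $\Delta_i:=V_i^{\mathrm{have}}-V_i^{\mathrm{none}}$, equals the mean $\mu$ for every $i$: substituting $\Delta_{i+1}$ into both recursions and splitting the expectation according to the sign of $X_i-\Delta_{i+1}$, the integrand of $V_i^{\mathrm{have}}-V_i^{\mathrm{none}}$ collapses to $X_i$ in both cases, so $\Delta_i=\E(X_i)=\mu$. Consequently, in every period $i<n$ the agent (when without the item) strictly prefers to buy exactly when $X_i<\mu$, and $V_i^{\mathrm{none}}=V_{i+1}^{\mathrm{none}}+\E[(\mu-X_i)^+]$; unrolling gives $\E(\ALG)=V_1^{\mathrm{none}}=(n-1)\,\E[(\mu-X_1)^+]$, which, once the atoms are chosen so that $0<\mu<m$, is simply $(n-1)\,a\mu$ with $a=\PP(X_1=0)$.

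For the prophet I would invoke \Cref{lem:iid_upper_bound_opt}, which already yields $\E(\OPT)=\tfrac{n-1}{2}\,\E(|X_1-X_2|)$; evaluating $\E(|X_1-X_2|)$ over the three atoms gives $\E(\OPT)=(n-1)\big(abm+ac+bc(1-m)\big)$. The factor $n-1$ then cancels, and since $abm+ac=a\mu$ we get $\E(\ALG)/\E(\OPT)=a\mu\big/\big(a\mu+bc(1-m)\big)$; it only remains to choose $a,b,c$ (keeping $0<\mu<m$) so that $bc(1-m)$ dominates $a\mu$. Concretely, take $m=\tfrac12$, a small $c>0$, $a=c(1+\eta)$ for a small $\eta>0$, and $b=1-a-c$; then $\mu=\tfrac12-\tfrac{c\eta}{2}\in(0,\tfrac12)$ and $a\mu\big/\big(bc(1-m)\big)=\tfrac{ab+2ac}{bc}\to 1$ as $\eta,c\to 0$, which pushes the ratio below $\tfrac12+\varepsilon$ for a suitable choice of $\eta$ and then $c$ (and the case $\varepsilon\ge\tfrac12$ is trivial because $\E(\ALG)\le\E(\OPT)$ pathwise). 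The main obstacle, and the genuinely new ingredient beyond the $n=2$ sketch, is the identity $\Delta_i\equiv\mu$: it is what lets us pin down the optimal online value for \emph{all} algorithms and \emph{all} $n$ simultaneously; after that the argument is routine bookkeeping over three atoms, together with the standard remark that one may smooth the discrete distribution by a tiny uniform perturbation if an absolutely continuous instance is desired.
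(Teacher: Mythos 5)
Your proof is correct, and it takes a genuinely different route from the paper's. Both arguments use the same family of three-atom instances on $\{0, m, 1\}$, but where the paper bounds $\E(\ALG)$ by an ad hoc accounting argument tailored to that specific instance (replacing $X_n$ by its mean, then ``crediting'' $\tfrac12$ to each buy at $0$ and each sell at $1$), you instead establish a clean structural invariant of the DP: $\Delta_i := V_i^{\mathrm{have}}-V_i^{\mathrm{none}}=\mu$ for all $i$, since $\max(\Delta_{i+1},X_i)-\max(0,\Delta_{i+1}-X_i)=X_i$ pointwise and $\E(X_i)=\mu$. This yields the exact closed form $\E(\ALG)=(n-1)\,\E[(\mu-X_1)^+]=\tfrac{n-1}{2}\E|X_1-\mu|$ for the \emph{optimal} online policy on \emph{any} i.i.d.\ instance, which is strictly more information than the paper extracts: it identifies the optimal policy as a mean-threshold policy, it re-derives \cref{thm:iid_half_approx}'s ratio directly from the triangle inequality $\E|X_1-X_2|\le 2\E|X_1-\mu|$, and it reduces the lower-bound construction to elementary bookkeeping over three atoms. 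The trade-off is that your parameter tuning carries two free parameters ($c,\eta$); the paper's instance with $a=c=\varepsilon/2$, $m=\tfrac12$ (which gives $\mu=\tfrac12=m$) plugs directly into your formula as $\E(\ALG)=(n-1)\varepsilon/4$ and would streamline the final computation, but that is a cosmetic point. The closing remark about smoothing by a tiny uniform perturbation is needed to match the paper's absolute-continuity convention and applies equally to both proofs.
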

The proof can be found in~\cref{appx:subsec:iid_lb}.

\section{Non-I.I.D.~Prices, Random Order}
\label{sec:rand-order}

In this section we consider the case where the prices $X_1, \ldots, X_n$ are independent draws from not necessarily identical distributions $F_1, \ldots, F_n$; and these prices are presented to us in random order. 
We show three results, a $1/16$ approximation by a threshold policy, an impossibility showing that no online policy can achieve a better than $1/3$ approximation, and an asymptotic $1/2-o(1)$ approximation by a threshold policy as $n \rightarrow \infty$. The asymptotic $1/2 - o(1)$ approximation is obtained by setting the threshold to the median of the mixture distribution. The  $1/16$ approximation requires a different threshold. We describe how we choose the threshold for the $1/16$ approximation in~\cref{sec:uniform}, and provide a hard instance for the median of the mixture distribution for small $n$ in the appendix.

\begin{theorem}\label{thm:random-order}
If the prices are presented in 
order $X_{\sigma(1)}, \ldots, X_{\sigma(n)}$, where $\sigma : \{1, \ldots, n\} \rightarrow \{1, \ldots, n\}$ is a uniform random permutation,
then there exist a threshold $T$ such that  
$$\E(\ALG_{T}) \ge  \frac{1}{16}\cdot\E(\OPT).$$
Moreover, the threshold $T$ can be computed in polynomial time.
\end{theorem}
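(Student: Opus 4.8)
The plan is to follow the reduction sketched in the overview: reduce both $\E(\OPT)$ and $\E(\ALG_T)$ to expressions involving only the two ``displayed'' prices $X_{\sigma(1)}, X_{\sigma(2)}$ of a uniformly random permutation, and then bridge the gap between these two expressions despite the correlation between $X_{\sigma(1)}$ and $X_{\sigma(2)}$. First I would extend \Cref{obs:opt_characterization} to the non-i.i.d.\ random-order setting, exactly as in \Cref{lem:iid_upper_bound_opt}: summing the per-period gains of $\OPT$ over local minima/maxima and using the exchangeability of $(X_{\sigma(1)},\dots,X_{\sigma(n)})$, one gets $\E(\OPT) = \tfrac{n-1}{2}\cdot\E(|X_{\sigma(1)}-X_{\sigma(2)}|)$. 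Then, for a fixed threshold $T$, I would compute $\E(\ALG_T)$ the same way as in \Cref{lem:iid_lower_bound_alg}, tracking the probability $p_i$ of holding the item in period $i$; the key point is that $\ALG_T$ collects a nonzero amount in a given pair of consecutive periods precisely when $T$ separates the two prices, yielding
\[
\E(\ALG_T) = \frac{n-1}{2}\cdot \E\bigl(|X_{\sigma(1)}-X_{\sigma(2)}|\cdot \mathds{1}_{T\in[\min(X_{\sigma(1)},X_{\sigma(2)}),\,\max(X_{\sigma(1)},X_{\sigma(2)})]}\bigr),
\]
up to lower-order boundary corrections that I would handle by an argument analogous to the i.i.d.\ case.

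With these two formulas in hand, it remains to exhibit a single threshold $T$ for which the right-hand side is at least $\tfrac18\E(|X_{\sigma(1)}-X_{\sigma(2)}|)$. The approach here is the ``random split'' idea: partition the index set $\{1,\dots,n\}$ into two halves $A$ and $B$ (of sizes $\lfloor n/2\rfloor$ and $\lceil n/2\rceil$), pick $a'\in A$ and $b'\in B$ uniformly and independently, and compare $\E(|X_{a'}-X_{b'}|)$, for \emph{independent} $X_{a'}, X_{b'}$, with $\E(|X_{\sigma(1)}-X_{\sigma(2)}|)$. On one hand, a convexity/averaging argument over all splits should show that there is a split with $\E(|X_{a'}-X_{b'}|) \ge c_1\cdot \E(|X_{\sigma(1)}-X_{\sigma(2)}|)$ for an explicit constant $c_1$ (the event that $\sigma(1),\sigma(2)$ land on opposite sides of a random split has probability $\ge 1/2$, which is where the loss comes from). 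On the other hand, for independent $X_{a'}, X_{b'}$ one can choose $T$ to be (say) a median of the mixture $\tfrac12 F_{a'} + \tfrac12 F_{b'}$, or more carefully a point balancing the two marginals, and then a McAfee-style fixed-price argument shows $\E(|X_{a'}-X_{b'}|\cdot \mathds{1}_{T\in[\min,\max]}) \ge \tfrac12 \E(|X_{a'}-X_{b'}|)$; finally one argues that the same $T$ performs comparably well against the correlated pair $X_{\sigma(1)}, X_{\sigma(2)}$. Tracking the constants ($\tfrac12$ from the split, $\tfrac12$ from the fixed price, and a further $\tfrac12$ from relating correlated to independent pairs) gives the claimed $1/16$.

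The main obstacle I anticipate is the last bridging step: the formula for $\E(\ALG_T)$ genuinely depends on the \emph{joint} law of $(X_{\sigma(1)}, X_{\sigma(2)})$, which is a non-product distribution (it is the law of two draws without replacement from the empirical mixture of $F_1,\dots,F_n$), whereas the clean ``choose $T$ to lie between them half the time'' reasoning wants independence. I would resolve this by the two-halves device precisely so that $X_{a'}$ and $X_{b'}$ \emph{are} independent, and then separately bound, from below, the correlated quantity $\E(|X_{\sigma(1)}-X_{\sigma(2)}|\cdot\mathds{1}_{T\in[\cdots]})$ by (a constant times) the independent quantity $\E(|X_{a'}-X_{b'}|\cdot\mathds{1}_{T\in[\cdots]})$ — intuitively because conditioning $\sigma(1),\sigma(2)$ to be on opposite sides only ``helps'' separate them, and this conditioning has probability $\ge 1/2$. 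Making this last comparison rigorous, and pinning down exactly which threshold $T$ (median of which mixture) makes all three inequalities hold simultaneously with the stated constants, is the delicate part; the computation of $T$ in polynomial time then follows since it only requires evaluating $O(n^2)$ candidate medians/balancing points over the pairwise mixtures.
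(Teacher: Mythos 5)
Your plan matches the paper's high-level architecture: reduce $\E(\OPT)$ and $\E(\ALG_T)$ to expressions in $(X_{\sigma(1)},X_{\sigma(2)})$ (the paper's Lemmas~\ref{lemma:opt-est-ran-ord} and \ref{lemma:alg-est-ran-ord}); use a random split of $\{1,\dots,n\}$ into two halves and pick indices $a',b'$ from the two sides so that $X_{a'},X_{b'}$ are \emph{independent}; solve the two-variable problem for an independent pair; then bridge back to the correlated pair. That is exactly what the paper does. But two concrete things in your write-up are not right.

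First, the step you dismiss as ``a McAfee-style fixed-price argument shows $\E(|X_{a'}-X_{b'}|\cdot\mathds{1}_{T\in[\min,\max]})\ge\tfrac12\E(|X_{a'}-X_{b'}|)$'' is the genuine technical core, and the citation is to the wrong setting. McAfee's $1/2$ result is for \emph{identically} distributed sides; here $F_{a'}$ and $F_{b'}$ are mixtures over the two halves of the split and are generally \emph{not} identical. The paper handles this with a dedicated lemma (\Cref{lemma:two-medians}): for two independent but non-identically distributed prices $X_1,X_2$, setting $T$ to one of the \emph{two individual medians} $M_1,M_2$ gives a factor of $1/4$, proved by a nontrivial charging argument over the six regions determined by $(M_1,M_2)$. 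You do not prove your asserted $1/2$ bound, and the threshold you propose (the median of the mixture $\tfrac12 F_{a'}+\tfrac12 F_{b'}$) is not what the paper analyzes; it is not established that this threshold achieves a constant factor for the two-variable problem in general. This is a real missing ingredient, not a bookkeeping detail.

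Second, your constant accounting is internally inconsistent, and also misattributes where the losses occur. You claim three factors of $1/2$ yet say this yields $1/16$, whereas $(1/2)^3=1/8$. In the paper's decomposition the split step loses \emph{nothing}: because the joint law of $(a,b)$ averaged over the random split equals that of $(\sigma(1),\sigma(2))$, some specific split $S$ satisfies $\E(|X_{a'}-X_{b'}|)\ge\E(|X_{\sigma(1)}-X_{\sigma(2)}|)$ exactly. The two losses are a factor $4$ from the two-medians lemma and a factor $4$ from bounding $\E(|X_{a'}-X_{b'}|\cdot\mathds 1)$ by $4\cdot\E(|X_{\sigma(1)}-X_{\sigma(2)}|\cdot\mathds 1)$ coefficient-wise (dropping non-cross-pair terms and using $\tfrac{4}{n(n-1)}\ge\tfrac{1}{\lceil n/2\rceil(n-\lceil n/2\rceil)}$); together these give exactly $1/16$. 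To make your write-up a proof, you would need to (i) prove the two-variable independent-pair bound --- e.g., reproduce Lemma~\ref{lemma:two-medians}'s charging argument for the two individual medians --- rather than cite McAfee, and (ii) redo the arithmetic with the split contributing no loss and the other two steps contributing the factors your chosen two-variable lemma actually delivers.
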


\begin{proposition}\label{prop:rdm_order_lb}
For every $\varepsilon>0$, there is an instance such that if the prices arrive in uniform random order, then for the optimal algorithm $\ALG$,
$$ \E(\ALG) \leq \left(\frac{1}{3}+\varepsilon\right) \cdot \E(\OPT).$$
\end{proposition}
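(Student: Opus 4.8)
The plan is to construct the explicit $n=2$ instance sketched in the introduction and verify the two claimed bounds by direct calculation. Concretely, I would fix a large constant $M$ and set $X_1 = M+2$ with probability $M/(M+2)$ and $X_1 = 0$ otherwise, and $X_2 = M$ with probability $M/(M+2)$ and $X_2 = 2M+2$ otherwise. The permutation $\sigma$ is uniform on the two orderings, so with probability $1/2$ the algorithm first sees $X_1$ then $X_2$, and with probability $1/2$ the reverse. (Strictly, to stay within the absolutely-continuous convention of the paper one would smear each atom by a tiny $\varepsilon$-perturbation as described after \Cref{obs:opt_characterization}, which changes all quantities below by $O(\varepsilon)$ and can be absorbed into the final $\varepsilon$; I would remark on this and then work with the clean atomic instance.)

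The first key step is to compute $\E(\OPT)$ using \Cref{obs:opt_characterization}. With only two periods, $\OPT$ buys in period $1$ iff $X_{\sigma(1)} \le X_{\sigma(2)}$ and then sells in period $2$, collecting $|X_{\sigma(1)} - X_{\sigma(2)}|$; otherwise it does nothing. So $\E(\OPT) = \E\big[(\,\max(X_1,X_2)-\min(X_1,X_2)\,)\big] = \E\big[|X_1 - X_2|\big]$, where the permutation is irrelevant since $|X_1-X_2|$ is symmetric. Enumerating the four realizations of $(X_1,X_2)$ with their probabilities — $(M+2,M)$, $(M+2,2M+2)$, $(0,M)$, $(0,2M+2)$ — and their gaps $2$, $M$, $M$, $2M+2$, weighted by $\big(\tfrac{M}{M+2}\big)^2$, $\tfrac{M}{M+2}\cdot\tfrac{2}{M+2}$, $\tfrac{2}{M+2}\cdot\tfrac{M}{M+2}$, $\big(\tfrac{2}{M+2}\big)^2$ respectively, one gets after simplification $\E(\OPT) = 3 - O(1/M)$; I would carry out this arithmetic carefully since the whole argument hinges on the leading constant being exactly $3$.

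The second key step is to show that the optimal online algorithm $\ALG$ has $\E(\ALG) \le 1 + O(1/M)$, and this is the part I expect to be the main obstacle, because "optimal online" means one must rule out every policy, not just threshold policies. I would solve the two-period Markdown decision problem by backward induction. In period $2$, the algorithm sells iff it holds the item (the price is nonnegative), so the only real decision is in period $1$: having observed $X_{\sigma(1)} = v$, buy iff $\E[X_{\sigma(2)} \mid X_{\sigma(1)} = v] \ge v$. The subtlety is that $X_{\sigma(1)}$ and $X_{\sigma(2)}$ are correlated through the random permutation: conditioning on the first observed value tells you which distribution it came from and hence the conditional law of the second value. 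I would compute, for each of the possible first-period observations $v \in \{0, M, M+2, 2M+2\}$, the posterior over which index was shown first and the resulting $\E[X_{\sigma(2)} \mid X_{\sigma(1)} = v]$, and check that buying is profitable only when $v = 0$. For $v=0$ we certainly buy (the second price is positive for sure), and the expected resale value is $\E[X_{\sigma(2)}\mid X_{\sigma(1)}=0]$; since $X_{\sigma(1)}=0$ forces $\sigma(1)$ to be the index of $X_1$, the second price is exactly $X_2$, with mean $\tfrac{M}{M+2}\cdot M + \tfrac{2}{M+2}\cdot(2M+2) = M + O(1/M)\cdot$const, and the realization $X_{\sigma(1)}=0$ occurs with probability $\tfrac12\cdot\tfrac{2}{M+2}$ (first slot is $X_1$ and it realizes to $0$); multiplying out the net profit $\E[X_2] - 0$ by this probability yields $1 + O(1/M)$. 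For the other three values of $v$ I would verify that $\E[X_{\sigma(2)}\mid X_{\sigma(1)}=v] < v$, so the algorithm passes and contributes $0$.

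Putting the two steps together gives $\E(\ALG)/\E(\OPT) \le (1+O(1/M))/(3-O(1/M))$, which is at most $1/3 + \varepsilon$ once $M$ is chosen large enough in terms of $\varepsilon$ (and, if one insists on continuous distributions, once the perturbation parameter is also chosen small enough). I would close by noting that this matches the shape of the classical prophet lower bound and, combined with \Cref{thm:random-order}, pins the random-order approximability strictly between $1/3$ and $1/16$, and strictly worse than the i.i.d.\ value of $1/2$ from \Cref{prop:iid_lb} — which is the separation advertised in the introduction.
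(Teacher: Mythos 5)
Your instance and overall plan (same as the paper's: bound $\E(\ALG)$ by backward induction, compute $\E(\OPT)$ directly) are right, but the formula you write for $\E(\OPT)$ is off by a factor of two. You correctly state the premise that with $n=2$, $\OPT$ collects $|X_{\sigma(1)}-X_{\sigma(2)}|$ precisely when the smaller price arrives first, and collects nothing otherwise. The ``so'' that follows is a non sequitur: $\OPT$'s profit depends on the \emph{direction} of the gap, not merely on the unordered pair of realized prices, so it is not true that $\E(\OPT)=\E[|X_1-X_2|]$. Averaging over the uniform permutation $\sigma$ (conditionally on $(X_1,X_2)$, the smaller price arrives first with probability $1/2$), the correct statement is $\E(\OPT)=\tfrac12\,\E[|X_1-X_2|]$ --- exactly the $n=2$ case of \cref{lemma:opt-est-ran-ord}. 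This is not a pedantic point: if you actually carry out the weighted sum you set up, $2\cdot(\tfrac{M}{M+2})^2 + M\cdot\tfrac{2M}{(M+2)^2} + M\cdot\tfrac{2M}{(M+2)^2} + (2M+2)\cdot(\tfrac{2}{M+2})^2 = \tfrac{6M^2+8M+8}{(M+2)^2} = 6-O(1/M)$, not the $3-O(1/M)$ you report. With the missing $\tfrac12$ restored you get $\E(\OPT)=3-O(1/M)$, which together with $\E(\ALG)\le 1$ gives the claimed $1/3$; as written, the stated formula and the stated numerical answer contradict each other.

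A second, smaller point: you predict that for all $v\in\{M,\,M+2,\,2M+2\}$ one has the strict inequality $\E[X_{\sigma(2)}\mid X_{\sigma(1)}=v]<v$. The instance is in fact tuned so that for $v=M$ and $v=M+2$ equality holds exactly (since $\E(X_1)=M$ and $\E(X_2)=M+2$), so that verification fails as stated. It is harmless --- the marginal expected profit from buying is zero, so the algorithm is indifferent and $\E(\ALG)=1$ either way --- but you should record the tie rather than claim strict dominance. On the positive side, your observation that the optimal online algorithm can infer the arrival order from the first observation, because the supports of $X_1$ and $X_2$ are disjoint, is correct and is the right justification for why the bound $\E(\ALG)\le 1$ applies to the genuinely order-blind algorithm; the paper's appendix instead writes $\E(\ALG)=\tfrac12\,\E[(\E(X_2)-X_1)_+]+\tfrac12\,\E[(\E(X_1)-X_2)_+]$ directly, which implicitly grants the algorithm knowledge of the order, and your disjoint-supports remark is precisely what makes that formula exact here rather than merely an upper bound.
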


\begin{theorem}\label{thm:random-order-v2}
If the prices are presented in 
order $X_{\sigma(1)}, \ldots, X_{\sigma(n)}$, where $\sigma : \{1, \ldots, n\} \rightarrow \{1, \ldots, n\}$ is a uniform random permutation,
then there exist a threshold $T$ and a constant $C$, independent of these distributions, such that  
$$\bigg(1 + \frac{C}{n} \bigg) \cdot 2 \ge  \frac{\E(\OPT)}{ \E(\ALG_{T})}.$$
\end{theorem}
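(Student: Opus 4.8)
The plan is to mirror the two-step template of the i.i.d.\ analysis: reduce both $\E(\OPT)$ and $\E(\ALG_T)$ to a two-period quantity, and then compare the two. First I would prove a reduction for $\OPT$: the computation in the proof of \Cref{lem:iid_upper_bound_opt} goes through once the permutation is retained, because by \Cref{obs:opt_characterization} each $\OPT_i$ is a function of $X_{\sigma(i-1)},X_{\sigma(i)},X_{\sigma(i+1)}$, and for every $i$ the pair $(\sigma(i-1),\sigma(i))$ is a uniformly random ordered pair of distinct indices; after summing and symmetrising over $(a,b)\leftrightarrow(b,a)$ one gets $\E(\OPT)=\tfrac{n-1}{2}\,\E\big(|X_{\sigma(1)}-X_{\sigma(2)}|\big)$. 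Next I would prove the analogous reduction for $\ALG_T$, which holds for \emph{any} threshold~$T$: the key point is that under a single-threshold policy the state just before period $i\ge 2$ is ``holding the item'' exactly on the event $\{X_{\sigma(i-1)}<T\}$, regardless of the earlier history. Substituting this into the per-period gains $\ALG_T(i)$, summing, and treating the two boundary periods (period~$1$, where we never hold, and period~$n$, where we sell unconditionally) together with the generic terms, I expect to arrive at
\[
\E(\ALG_T)=\frac{n-1}{2}\,\E\big(|X_{\sigma(1)}-X_{\sigma(2)}|\cdot\mathds{1}_{\mathcal E}\big),\qquad \mathcal E:=\big\{T\in\big[\min(X_{\sigma(1)},X_{\sigma(2)}),\max(X_{\sigma(1)},X_{\sigma(2)})\big]\big\}.
\]
This reduces the theorem to the claim that, for $T$ the median of the mixture $\bar F:=\tfrac1n\sum_iF_i$, the truncated pair-difference on the right is at least a $\tfrac12\big(1-O(1/n)\big)$ fraction of the untruncated one.

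For the comparison I would condition on the \emph{unordered} pair $\{a,b\}=\{\sigma(1),\sigma(2)\}$, which makes $X_a$ and $X_b$ independent. Write $q_a:=\PP(X_a<T)$, $p_a:=1-q_a$, $A_a^+:=\E\big((X_a-T)\mathds{1}_{X_a\ge T}\big)$, $A_a^-:=\E\big((T-X_a)\mathds{1}_{X_a<T}\big)$, and $A_a:=A_a^++A_a^-=\E(|X_a-T|)$. A short computation then gives, for $a\ne b$, the exact value
\[
\E\big(|X_a-X_b|\,\mathds{1}_{T\in[\min(X_a,X_b),\max(X_a,X_b)]}\big)=q_aA_b^++q_bA_a^++p_aA_b^-+p_bA_a^-,
\]
while the triangle inequality through $T$ yields $\E(|X_a-X_b|)\le A_a+A_b$. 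Averaging over all ordered pairs $a\ne b$ gives $\E(\OPT)\le\tfrac{n-1}{n}\sum_aA_a$, whereas on the $\ALG_T$ side each cross-sum $\sum_{a\ne b}q_aA_b^+$ equals $\big(\sum_aq_a\big)\big(\sum_bA_b^+\big)-\sum_aq_aA_a^+$, and here the mixture-median condition $\sum_aq_a=\sum_ap_a=n/2$ is exactly what makes the whole sum collapse to $n\sum_aA_a-2\sum_a\big(q_aA_a^++p_aA_a^-\big)\ge(n-2)\sum_aA_a$, so $\E(\ALG_T)\ge\tfrac{n-2}{2n}\sum_aA_a$. Dividing,
\[
\frac{\E(\OPT)}{\E(\ALG_T)}\ \le\ \frac{(n-1)/n}{(n-2)/(2n)}\ =\ \frac{2(n-1)}{n-2}\ =\ 2+\frac{2}{n-2},
\]
which is of the form $\big(1+C/n\big)\cdot2$ for an absolute constant~$C$ once $n$ is bounded away from~$2$; the genuinely small-$n$ regime behaves differently (the mixture median can be far from optimal, cf.\ the hard instance in the appendix) and is handled separately.

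I expect the most delicate step to be the bookkeeping in the $\ALG_T$ reduction lemma --- one has to check that periods~$1$ and~$n$ contribute precisely the terms needed for the clean truncated-pair-difference identity to hold exactly, not merely up to lower-order error. The genuine conceptual obstacle, however, is the dependence between $X_{\sigma(1)}$ and $X_{\sigma(2)}$, which is absent in \Cref{lem:iid_upper_bound_opt,lem:iid_lower_bound_alg}: conditioning on the unordered pair restores independence within each pair, and then the point is that, averaged over all $\binom{n}{2}$ pairs, the median-of-mixture constraint $\sum_aq_a=n/2$ forces the two pair-sums to agree except for the coefficient $n-1$ versus $n-2$ --- exactly the claimed $1+O(1/n)$ loss.
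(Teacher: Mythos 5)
Your proposal is correct (modulo the small-$n$ caveat you flag yourself, which is easily patched by falling back on the $1/16$-approximation from \cref{thm:random-order} for $n\le 3$ and absorbing the gap into $C$), but it takes a genuinely different and in fact more elementary route than the paper.

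The paper's proof of \cref{thm:random-order-v2} introduces an \emph{independent copy} $(X'_1,\dots,X'_n)$ and two independent uniform indices $Y,Y'$, so that $X_Y$ and $X'_{Y'}$ are i.i.d.\ from the mixture $\bar F$. It then relates the correlated two-period quantities $S_1,S_2$ to the i.i.d.\ quantities $S'_1,S'_2$ via the identity $\frac{n}{n-1}S'_1=S_1+\frac{1}{n(n-1)}\sum_i\E|X_i-X'_i|$ and a separate technical estimate (\cref{lem:dem}) that bounds the diagonal terms $\sum_i\E(|X_i-X'_i|\cdot\mathds 1_{T\in[X_i,X'_i]})$ by $O(1/n)$ of the off-diagonal terms, using a counting argument on the sets $A=\{k:\PP(X_k>T)\ge 1/4\}$ and $B=\{k:\PP(X_k\le T)\ge 1/4\}$, both shown to have size $\ge n/4$. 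This route lands on an explicit but loose constant ($C=64$ in the lemma).

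Your proof instead conditions directly on the unordered pair $\{\sigma(1),\sigma(2)\}$ to restore independence, writes out the truncated pair difference exactly in terms of $q_a,p_a,A_a^\pm$, and observes that the mixture-median constraint $\sum_a q_a=\sum_a p_a=n/2$ makes the double sum factor, leaving only the two ``diagonal'' subtractions $\sum_a(q_aA_a^++p_aA_a^-)\le\sum_a A_a$. The whole comparison then reduces to the elementary $\frac{\E(\OPT)}{\E(\ALG_T)}\le\frac{2(n-1)}{n-2}$, which is cleaner, avoids introducing the auxiliary i.i.d.\ copy and the counting lemma entirely, and yields a much sharper constant (roughly $C=3$ for $n\ge 3$). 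The trade-off is that your identity crucially uses the exact $\sum_a q_a=n/2$ structure, which is special to the mixture median, whereas the paper's reduction-to-i.i.d.\ scheme is slightly more modular. Overall your approach is a genuine simplification of this part of the paper.
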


We provide the proofs of \cref{thm:random-order} and \cref{thm:random-order-v2} in \cref{sec:reduction}, \cref{sec:uniform}, and \cref{sec:asymptotic}; the proof of \cref{prop:rdm_order_lb} in \cref{appx:rdm_order_lb}. 

\subsection{Reduction to Two Periods with Correlated Random Variables}\label{sec:reduction}

We start by reducing the problem of showing an approximation guarantee for the random order model with $n$ periods to a two period problem with correlated random variables.

\begin{lemma}\label{lemma:opt-est-ran-ord}
If the prices are presented in uniformly random order $\sigma$, then
$$\E(\OPT) = \frac{n - 1}{2} \cdot \E(|X_{\sigma(1)} - X_{\sigma(2)}|).$$
\end{lemma}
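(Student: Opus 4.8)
The plan is to mimic the proof of \Cref{lem:iid_upper_bound_opt}, but carry the permutation $\sigma$ through the calculation and take the expectation over $\sigma$ at the very end. First I would invoke \Cref{obs:opt_characterization} to write $\OPT = \sum_{i=1}^n \OPT_i$, where, using the convention $X_{\sigma(0)} = \infty$, $X_{\sigma(n+1)} = -\infty$, the period-$i$ gain is $\OPT_i = X_{\sigma(i)}\cdot\mathds{1}_{X_{\sigma(i)}\ge X_{\sigma(i-1)}} - X_{\sigma(i)}\cdot\mathds{1}_{X_{\sigma(i)}\le X_{\sigma(i+1)}}$ (the boundary cases $i=1$ and $i=n$ are subsumed by the infinite dummy values, exactly as in the i.i.d.\ proof). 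Summing and reindexing the two sums by one position, $\E(\OPT\mid\sigma) = \sum_{i=1}^{n-1}\E\big((X_{\sigma(i)} - X_{\sigma(i+1)})\cdot\mathds{1}_{X_{\sigma(i)}\ge X_{\sigma(i+1)}}\big)$, where the expectation here is only over the realizations of the independent prices.

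Next I would take the expectation over the random permutation $\sigma$. The key observation is that for a uniformly random permutation and any fixed $i$, the ordered pair $(\sigma(i),\sigma(i+1))$ is a uniformly random ordered pair of distinct indices from $\{1,\dots,n\}$; equivalently $(X_{\sigma(i)}, X_{\sigma(i+1)})$ has exactly the same joint distribution as $(X_{\sigma(1)}, X_{\sigma(2)})$. Hence every one of the $n-1$ terms equals $\E\big((X_{\sigma(1)} - X_{\sigma(2)})\cdot\mathds{1}_{X_{\sigma(1)}\ge X_{\sigma(2)}}\big)$, giving $\E(\OPT) = (n-1)\cdot\E\big((X_{\sigma(1)} - X_{\sigma(2)})\cdot\mathds{1}_{X_{\sigma(1)}\ge X_{\sigma(2)}}\big)$. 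Finally, since $(X_{\sigma(1)}, X_{\sigma(2)})$ and $(X_{\sigma(2)}, X_{\sigma(1)})$ are identically distributed (again by symmetry of the uniform permutation under swapping the first two coordinates), we have $\E\big((X_{\sigma(1)} - X_{\sigma(2)})\cdot\mathds{1}_{X_{\sigma(1)}\ge X_{\sigma(2)}}\big) = \tfrac12\,\E(|X_{\sigma(1)} - X_{\sigma(2)}|)$, which yields the claimed identity $\E(\OPT) = \tfrac{n-1}{2}\cdot\E(|X_{\sigma(1)} - X_{\sigma(2)}|)$.

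The one genuinely new point compared to the i.i.d.\ case — and the step I would be most careful about — is the exchangeability argument: in the i.i.d.\ setting the $X_i$ are independent and identically distributed, so adjacent pairs are trivially interchangeable, whereas here the $X_{\sigma(i)}$ are \emph{not} independent, and one must justify that it is precisely the uniform randomness of $\sigma$ (not independence) that makes all adjacent pairs $(X_{\sigma(i)}, X_{\sigma(i+1)})$ equidistributed and makes the pair symmetric under swapping. This is a short symmetry/relabeling argument, but it is the crux: conditioning on $\sigma$ first and only then averaging over $\sigma$ is what keeps the computation clean and avoids having to reason directly about the correlated sequence. One should also double-check that the dummy values $X_{\sigma(0)}=\infty$ and $X_{\sigma(n+1)}=-\infty$ correctly reproduce the $i=1$ and $i=n$ rows of the case distribution in \Cref{obs:opt_characterization}, so that the uniform formula for $\OPT_i$ is valid for all $i$; this is the same bookkeeping as in \Cref{lem:iid_upper_bound_opt} and presents no real difficulty.
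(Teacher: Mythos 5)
Your approach is the same as the paper's: apply \Cref{obs:opt_characterization}, telescope to a sum of $n-1$ adjacent positive-part differences, and then invoke exchangeability of consecutive positions under a uniform random $\sigma$ to reduce each term to $\tfrac12\E|X_{\sigma(1)}-X_{\sigma(2)}|$. The exchangeability step and the symmetry argument for the factor $\tfrac12$ are exactly the paper's, and your emphasis that independence is lost but equidistribution survives is precisely the right point to be careful about.

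One small algebraic slip: after the reindexing you write $\E(\OPT\mid\sigma)=\sum_{i=1}^{n-1}\E\big((X_{\sigma(i)}-X_{\sigma(i+1)})\cdot\mathds{1}_{X_{\sigma(i)}\ge X_{\sigma(i+1)}}\big)$, but the correct pointwise identity (and the one the reindexing actually yields) has the roles of $i$ and $i+1$ swapped: $\OPT=\sum_{i=1}^{n-1}(X_{\sigma(i+1)}-X_{\sigma(i)})\cdot\mathds{1}_{X_{\sigma(i+1)}\ge X_{\sigma(i)}}$. As a statement conditional on $\sigma$ (i.e.\ for a fixed permutation) your version is genuinely different — e.g.\ with $n=2$, $X_1\equiv 0$, $X_2\equiv 1$, and $\sigma=\mathrm{id}$, the true conditional expectation is $1$ but your expression evaluates to $0$. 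The error is immaterial once you take the outer expectation over $\sigma$, since $[X_{\sigma(i)}-X_{\sigma(i+1)}]_+$ and $[X_{\sigma(i+1)}-X_{\sigma(i)}]_+$ are equidistributed under the random permutation, which is exactly the symmetry you invoke anyway. Still, you should not present it as an equality conditional on $\sigma$; either fix the sign or only claim it after averaging over $\sigma$.
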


\begin{proof}
For a fixed order $\sigma$ and realizations $X_{\sigma(1)}, \ldots, X_{\sigma(n)}$, we can argue as in the beginning of the proof of \cref{lem:iid_upper_bound_opt} and use \cref{obs:opt_characterization} to obtain
\begin{align*}
    \OPT = \sum_{i=2}^{n} X_{\sigma(i)} \cdot \mathds{1}_{X_{\sigma(i)} \geq X_{\sigma(i-1)}} - \sum_{i=1}^{n-1} X_{\sigma(i)} \cdot \mathds{1}_{X_{\sigma(i)} \leq X_{\sigma(i+1)}}
    = \sum_{i=1}^{n-1} ( X_{\sigma(i+1)} - X_{\sigma(i)}) \cdot \mathds{1}_{X_{\sigma(i+1)} \geq X_{\sigma(i)}}.
\end{align*}

Then, taking expectation over the possible orders $\sigma$ and realizations $X_{\sigma(1)}, \ldots, X_{\sigma(n)}$,
\begin{align*}
    \E(\OPT) 
&= \sum_{i = 1}^{n - 1}\E( (X_{\sigma(i + 1)} - X_{\sigma(i)}) \cdot \mathds{1}_{X_{\sigma(i + 1)} \geq X_{\sigma(i)}})\\
&= (n - 1) \cdot \E((X_{\sigma(1)} - X_{\sigma(2)}) \cdot \mathds{1}_{X_{\sigma(1)} \geq X_{\sigma(2)}}) 
= \frac{n -1}{2}\cdot \E(|X_{\sigma(1)} - X_{\sigma(2)}|),
\end{align*}
where the first equality holds by linearity of expectation and the second and third equality hold because $\sigma$ is a uniformly random order.
\end{proof}

\begin{lemma}\label{lemma:alg-est-ran-ord}
If the prices are presented in uniformly random order $\sigma$, then for any threshold $T \in \R$ we have that
$$\E(\ALG_{T}) = \frac{n - 1}{2} \cdot \E(|X_{\sigma(1)} - X_{\sigma(2)}| \cdot \mathds{1}_{T \in [\min(X_{\sigma(1)}, X_{\sigma(2)}), \max(X_{\sigma(1)}, X_{\sigma(2)})]}).$$
\end{lemma}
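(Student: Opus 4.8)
The plan is to mirror the structure of the i.i.d.\ argument from \Cref{lem:iid_lower_bound_alg}, but to condition on the realized permutation $\sigma$ and the realized values, and to use the characterization of when a single-threshold policy actually trades. First I would fix a permutation $\sigma$ and decompose $\ALG_T = \sum_{i=1}^n \ALG_T(i)$ as before, where $\ALG_T(i) \in \{-X_{\sigma(i)}, 0, X_{\sigma(i)}\}$ depending on whether the algorithm buys, passes, or sells in period $i$. The key structural observation is that $\ALG_T$ trades in consecutive ``crossings'' of the threshold: it buys the first time the price drops below $T$, sells the next time the price is at least $T$, and so on. So over the whole sequence, the realized profit telescopes into a sum over pairs of consecutive crossings, and equals $\sum_{\text{crossings up}} X_{\sigma(i)} - \sum_{\text{crossings down}} X_{\sigma(j)}$, with the caveat that the last trade may be a forced sale in period $n$ or an unmatched purchase that never gets sold.

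The cleaner route, and the one I would actually write, is to go period by period and compute $\E(\ALG_T(i))$ by conditioning appropriately. The crucial point, paralleling the i.i.d.\ proof, is to track $p_i := \PP(\text{we hold the item at the start of period } i)$. Unlike the i.i.d.\ case, $p_i$ is no longer exactly $1/2$ because the price in period $i$ is \emph{not} independent of the earlier prices (they are drawn without replacement from $X_1,\dots,X_n$). So instead I would argue more directly in terms of pairs: for the telescoping identity $\OPT = \sum_{i=1}^{n-1}(X_{\sigma(i+1)}-X_{\sigma(i)})\mathds{1}_{X_{\sigma(i+1)}\ge X_{\sigma(i)}}$ from \Cref{lemma:opt-est-ran-ord}, there is an analogue for $\ALG_T$: on each pair of consecutive periods $(i,i+1)$, the contribution of $\ALG_T$ to the net change of ``cash'' across that step equals $(X_{\sigma(i+1)} - X_{\sigma(i)})$ times the indicator that exactly one of $X_{\sigma(i)}, X_{\sigma(i+1)}$ lies below $T$ \emph{and} that the algorithm is in the right state — but careful bookkeeping shows the state is automatically correct precisely when the threshold is crossed, so this indicator simplifies to $\mathds{1}_{T \in [\min(X_{\sigma(i)},X_{\sigma(i+1)}), \max(X_{\sigma(i)},X_{\sigma(i+1)})]}$. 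Summing over $i = 1, \dots, n-1$ and handling the two boundary terms (period $1$, where we start without the item, and period $n$, where we are forced to liquidate) exactly cancels the corrections, just as in \Cref{lem:iid_lower_bound_alg}.

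Concretely, the steps in order: (1) fix $\sigma$ and the realizations; (2) establish the pointwise identity that $\ALG_T$'s realized profit equals $\sum_{i=1}^{n-1}(X_{\sigma(i+1)} - X_{\sigma(i)}) \cdot \mathds{1}_{X_{\sigma(i)} < T \le X_{\sigma(i+1)}} \;+\; \sum_{i=1}^{n-1}(X_{\sigma(i)} - X_{\sigma(i+1)}) \cdot \mathds{1}_{X_{\sigma(i+1)} < T \le X_{\sigma(i)}}$, i.e.\ the algorithm gains $(X_{\sigma(i+1)} - X_{\sigma(i)})$ whenever the price goes from below $T$ to at least $T$ (it bought at or before period $i$ at the last sub-$T$ price and... actually no) — this is where I need to be most careful: the right statement is that on each upward crossing the algorithm has just bought at $X_{\sigma(i)}$ and on the matching downward crossing it sells at $X_{\sigma(i)}$, and telescoping these matched buy/sell prices against each other over the pairs gives exactly the claimed form; (3) take expectation over $\sigma$ and the $X_j$, and use that $\sigma$ is uniform so every consecutive pair $(\sigma(i), \sigma(i+1))$ has the same law as $(\sigma(1),\sigma(2))$, yielding the factor $(n-1)$; (4) divide by $2$ since $|X_{\sigma(1)}-X_{\sigma(2)}|$ symmetrizes the two crossing directions. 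The main obstacle is step (2): getting the buy/sell matching and the telescoping exactly right, including verifying that the unmatched initial/final partial trades contribute zero in expectation (the forced sale in period $n$ and any dangling purchase) so that the clean per-pair formula is exact and not merely an inequality. I expect this to work out because it is precisely the non-i.i.d.\ shadow of the computation already carried out in \Cref{lem:iid_lower_bound_alg}, with the role of the fixed independent pair $(X_1,X_2)$ replaced by the exchangeable pair $(X_{\sigma(1)},X_{\sigma(2)})$ and the median condition replaced by the explicit crossing indicator.
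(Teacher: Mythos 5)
Your step (2) is where this breaks. The proposed pointwise identity $\ALG_T = \sum_{i=1}^{n-1}|X_{\sigma(i)}-X_{\sigma(i+1)}|\cdot\mathds{1}_{T\in[\min(X_{\sigma(i)},X_{\sigma(i+1)}),\max(X_{\sigma(i)},X_{\sigma(i+1)})]}$ (your two sums combined) is simply false. Take $n=3$, $T=1$, and the realized sequence $X_{\sigma(1)}=0$, $X_{\sigma(2)}=2$, $X_{\sigma(3)}=0.5$. Then $\ALG_T$ buys at $0$, sells at $2$, and passes in period $3$, so $\ALG_T=2$; but your sum picks up the upward crossing at $i=1$ (contribution $2$) \emph{and} the downward crossing at $i=2$ (contribution $1.5$), totaling $3.5$. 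Unlike $\OPT$, which trades on every consecutive increase, $\ALG_T$ does not trade on every threshold crossing: a downward crossing $X_{\sigma(i)}\ge T > X_{\sigma(i+1)}$ is a \emph{buy}, worth $-X_{\sigma(i+1)}$ and not $X_{\sigma(i)}-X_{\sigma(i+1)}$, and it is not matched against the adjacent sale in a telescoping way. Moreover, even granting the identity, its expectation would be $(n-1)\cdot\E\big(|X_{\sigma(1)}-X_{\sigma(2)}|\cdot\mathds{1}_{T\in[\min,\max]}\big)$, twice the lemma's right-hand side; the ``divide by $2$'' you invoke at step (4) has no independent justification, since the symmetrization between the two crossing directions is already consumed inside the absolute value, and would leave you off by a factor of $2$.

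The repair is to drop the search for a per-pair pointwise decomposition and carry out exactly the period-by-period accounting you started in your first paragraph — the detour through the holding probability $p_i$ is unnecessary. The one observation you need is that for $i\ge 2$, whether $\ALG_T$ holds the item at the start of period $i$ is the \emph{deterministic} event $\{X_{\sigma(i-1)}<T\}$: if $X_{\sigma(i-1)}\ge T$ the algorithm either held and sold or had nothing and passed, so it enters period $i$ empty; if $X_{\sigma(i-1)}<T$ it either held or bought, so it enters period $i$ with the item. This gives exact per-period formulas $\ALG_T(1)=-X_{\sigma(1)}\mathds{1}_{X_{\sigma(1)}<T}$, $\ALG_T(i)=-X_{\sigma(i)}\mathds{1}_{X_{\sigma(i)}<T}\mathds{1}_{X_{\sigma(i-1)}\ge T}+X_{\sigma(i)}\mathds{1}_{X_{\sigma(i)}\ge T}\mathds{1}_{X_{\sigma(i-1)}<T}$ for $2\le i\le n-1$, and $\ALG_T(n)=X_{\sigma(n)}\mathds{1}_{X_{\sigma(n-1)}<T}$, each depending on only two consecutive observations. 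Taking expectations and using exchangeability of $(X_{\sigma(i-1)},X_{\sigma(i)})$ collapses the $n-2$ middle terms to identical copies, and a short algebraic recombination of the two boundary terms produces the $(n-1)$-st copy. The telescoping here happens in expectation after exchangeability, not pointwise.
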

\begin{proof}
Let $\ALG_T(i)$ denote the gain (or the loss) of $\ALG_T$ in period $i$. By the definition of $\ALG_T$, it can only buy in period $i = 1$. Thus,
$\ALG_T(1) = - X_{\sigma(1)} \cdot \mathds{1}_{X_{\sigma(1)} < T}.$

Consider now any period $2 \le i \le n-1$. In such period, the algorithm buys whenever $X_{\sigma(i)} < T$ and it does not have the item. The latter event is equivalent to the event $X_{\sigma(i - 1)} \ge T$. Indeed, assuming that $X_{\sigma(i - 1)} \ge T$ either $\ALG_{T}$ in period $i-1$ does not have the item, but in this case it will not buy it in this period; or it does have the item, but then it will certainly sell the item in period $i - 1$. In both cases, the algorithm does not have the item in period $i$. On the other hand, if $X_{\sigma(i-1) < T}$ then either the algorithm already has the item in period $i-1$ or it will buy it in that period, so in this case it will certainly hold the item in period $i$.

Selling an item is similar. In any period $2 \le i \le n-1$ the algorithm sells the item iff $X_{\sigma(i)} \ge T$ and it has the item. Analogously, the algorithm has the item in period $i$ iff $X_{\sigma(i - 1)} < T$.
Therefore, we can express the gains of $\ALG_T$ in period $ 2\leq i \leq n-1$ by
$$\ALG_T(i) = -X_{\sigma(i)} \cdot \mathds{1}_{X_{\sigma(i)} < T} \cdot \mathds{1}_{X_{\sigma(i - 1)} \ge T} + X_{\sigma(i)} \cdot \mathds{1}_{X_{\sigma(i)} \ge T} \cdot \mathds{1}_{X_{\sigma(i - 1)} < T}.$$
In the last period, the algorithm never buys and always sells if it has the item. Thus, from the same reasons as above we have that
$\ALG_T(n) = X_{\sigma(n)} \cdot \mathds{1}_{X_{\sigma(n - 1)} < T}.$
Putting all together we obtain
\begin{align*}
    \E(\ALG_T) ={}& \sum_{i=1}^n \E(\ALG_T(i))\\
    ={}& -\E(X_{\sigma(1)}\cdot \mathds{1}_{X_{\sigma(1)} < T})
    +\sum_{i=2}^{n-1} \E(X_{\sigma(i)} \cdot \mathds{1}_{X_{\sigma(i)} \ge T} \cdot \mathds{1}_{X_{\sigma(i - 1)} < T})
    \\[-6pt]
    &\hspace*{48pt}-\sum_{i=2}^{n-1}
    \E(X_{\sigma(i)} \cdot \mathds{1}_{X_{\sigma(i)} < T} \cdot \mathds{1}_{X_{\sigma(i - 1)} \ge T}) +\E( X_{\sigma(n)} \cdot \mathds{1}_{X_{\sigma(n - 1)} < T})\\
    \stackrel{(\ast)}{=}{}&
    (n-2)\cdot
    \E(X_{\sigma(1)} \cdot \mathds{1}_{X_{\sigma(1)} \ge T} \cdot \mathds{1}_{X_{\sigma(2)} < T} -X_{\sigma(2)} \cdot \mathds{1}_{X_{\sigma(2)} < T} \cdot \mathds{1}_{X_{\sigma(1)} \ge T} ) \\
    &\hspace*{48pt}+ \E(X_{\sigma(1)} \cdot \mathds{1}_{X_{\sigma(2)} < T})  - \E(X_{\sigma(1)}\cdot \mathds{1}_{X_{\sigma(1)} < T}),
\end{align*}
where $(\ast)$ follows from  linearity of expectation and the fact that $\sigma$ is a uniformly random permutation. 
By manipulating the last two terms of the above sum, we obtain
\begin{align*}
&\E(X_{\sigma(1)} \cdot \mathds{1}_{X_{\sigma(2)} < T})  - \E(X_{\sigma(1)}\cdot \mathds{1}_{X_{\sigma(1)} < T}) \\
&\qquad\qquad= \E(X_{\sigma(1)} \cdot \mathds{1}_{X_{\sigma(2)} < T} \cdot \mathds{1}_{X_{\sigma(1)} < T} + X_{\sigma(1)} \cdot \mathds{1}_{X_{\sigma(2)} < T} \cdot \mathds{1}_{X_{\sigma(1)} \ge T})  \\
&\hspace*{96pt}- \E(X_{\sigma(1)}\cdot \mathds{1}_{X_{\sigma(1)} < T} \cdot \mathds{1}_{X_{\sigma(2)} \ge T} + X_{\sigma(1)}\cdot \mathds{1}_{X_{\sigma(1)} < T}\cdot \mathds{1}_{X_{\sigma(2)} < T})\\
&\qquad\qquad= \E(X_{\sigma(1)} \cdot \mathds{1}_{X_{\sigma(2)} < T} \cdot \mathds{1}_{X_{\sigma(1)} \ge T}) - \E(X_{\sigma(1)}\cdot \mathds{1}_{X_{\sigma(1)} < T} \mathds{1}_{X_{\sigma(2)} \ge T}) \\
&\qquad\qquad=\E(X_{\sigma(1)} \cdot \mathds{1}_{X_{\sigma(1)} \ge T} \cdot \mathds{1}_{X_{\sigma(2)} < T} -X_{\sigma(2)} \cdot \mathds{1}_{X_{\sigma(2)} < T} \cdot \mathds{1}_{X_{\sigma(1)} \ge T}).
\end{align*}
Substituting this back into the formula for $\E(\ALG_T)$ yields
\begin{align*}
    \E(\ALG_T) ={}& (n - 1) \cdot \E(X_{\sigma(1)} \cdot \mathds{1}_{X_{\sigma(1)} \ge T} \cdot \mathds{1}_{X_{\sigma(2)} < T} -X_{\sigma(2)} \cdot \mathds{1}_{X_{\sigma(2)} < T} \cdot \mathds{1}_{X_{\sigma(1)} \ge T})\\
    ={}& (n - 1) \cdot \E((X_{\sigma(1)} - X_{\sigma(2)}) \cdot \mathds{1}_{X_{\sigma(1)} \ge T} \cdot \mathds{1}_{X_{\sigma(2)} < T} )\\
    ={}& \frac{n - 1}{2} \cdot \E(|X_{\sigma(1)} - X_{\sigma(2)}| \cdot \mathds{1}_{T \in [\min(X_{\sigma(1)}, X_{\sigma(2)}), \max(X_{\sigma(1)}, X_{\sigma(2)})]}),
\end{align*}
where the last lines follows from the fact that $\sigma$ is a uniformly random permutation.
\end{proof}

\subsection{Proof of \cref{thm:random-order}}\label{sec:uniform}

\cref{lemma:opt-est-ran-ord} and \cref{lemma:alg-est-ran-ord} imply that, 
in order to show \cref{thm:random-order}, it suffices to show that there exists a threshold $T \in \R$ such that
\begin{equation}\label[ineq]{eq:5}
\E(|X_{\sigma(1)} - X_{\sigma(2)}|\cdot \mathds{1}_{T \in [\min(X_{\sigma(1)}, X_{\sigma(2)}), \max(X_{\sigma(1)}, X_{\sigma(2)})]}) \ge \frac{1}{16} \cdot \E(|X_{\sigma(1)} - X_{\sigma(2)}|). 
\end{equation}

The main difficulty in showing \cref{eq:5} is in the fact that the random variables $X_{\sigma(1)}$ and $X_{\sigma(2)}$ are \emph{not} independent. 
Indeed, if they were independent then the inequality would be implied by the following key lemma. 
\begin{lemma}\label{lemma:two-medians}
Let $X_{1}, X_{2}$ be two independent prices with distributions $F_{1}, F_{2}$. Then, there exists a threshold $T \in \R$ such that
\begin{equation}\label[ineq]{eq:1}
\E(|X_{1} - X_{2}| \cdot \mathds{1}_{T \in [\min(X_{1}, X_{2}), \max(X_{1}, X_{2})]}) \ge \frac{1}{4}\cdot\E(|X_{1} - X_{2}|)    
\end{equation}
\end{lemma}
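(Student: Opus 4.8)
The plan is to take $T$ to be either a median $m_1$ of $X_1$ or a median $m_2$ of $X_2$ — whichever makes the left-hand side of \cref{eq:1} larger — and to show that this choice already suffices, with no cleverer threshold needed. Write $f(T)$ for the left-hand side of \cref{eq:1}. The proof has two ingredients: a lower bound on $f$ evaluated at a median, and a triangle-type inequality guaranteeing that at least one of the two medians is ``far'' from the other variable.

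First I would show $f(m_1) \ge \tfrac12\,\E(|X_2 - m_1|)$, and symmetrically $f(m_2) \ge \tfrac12\,\E(|X_1 - m_2|)$. Since $T \in [\min(X_1,X_2),\max(X_1,X_2)]$ exactly when $X_1$ and $X_2$ lie on opposite sides of $T$ (the case of equality has probability zero by absolute continuity),
\[
f(m_1) = \E\big((X_2 - X_1)\mathds{1}_{X_1 \le m_1 \le X_2}\big) + \E\big((X_1 - X_2)\mathds{1}_{X_2 \le m_1 \le X_1}\big).
\]
On the first event one writes $X_2 - X_1 = (X_2 - m_1)^+ + (m_1 - X_1)^+$ and on the second $X_1 - X_2 = (X_1 - m_1)^+ + (m_1 - X_2)^+$; using that $X_1$ and $X_2$ are independent and that $\PP(X_1 < m_1) = \PP(X_1 \ge m_1) = \tfrac12$ (as $m_1$ is a median), this rearranges to
\[
f(m_1) = \tfrac12\,\E(|X_2 - m_1|) + \PP(X_2 \ge m_1)\,\E\big((m_1 - X_1)^+\big) + \PP(X_2 < m_1)\,\E\big((X_1 - m_1)^+\big),
\]
whose last two summands are nonnegative; this gives the claim, and the bound for $f(m_2)$ is obtained by swapping the roles of the two variables.

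Second, I would argue that $\max\{\E(|X_2 - m_1|),\,\E(|X_1 - m_2|)\} \ge \tfrac12\,\E(|X_1 - X_2|)$. By the triangle inequality $\E(|X_1 - X_2|) \le \E(|X_1 - m_1|) + \E(|X_2 - m_1|)$, and since a median minimizes the mean absolute deviation we have $\E(|X_1 - m_1|) \le \E(|X_1 - m_2|)$; combining these, $\E(|X_1 - X_2|) \le \E(|X_1 - m_2|) + \E(|X_2 - m_1|)$, so one of the two summands is at least $\tfrac12\,\E(|X_1 - X_2|)$. Chaining the two steps, $\max\{f(m_1), f(m_2)\} \ge \tfrac12 \cdot \tfrac12\,\E(|X_1 - X_2|) = \tfrac14\,\E(|X_1 - X_2|)$, which is \cref{eq:1}.

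The point that requires care — and the reason the statement is not immediate — is that no single canonical threshold works: the median of $X_1$ alone, the median of $X_2$ alone, and even the median of the mixture $\tfrac12(F_1+F_2)$ can each make $f(T)$ an arbitrarily small fraction of $\E(|X_1 - X_2|)$ (for instance when one variable is bimodal with a far-off spike). What rescues the argument is taking the better of the two medians, so that the triangle inequality in the second step gets applied through the ``good'' one; the only genuinely technical spot is then the bookkeeping in the first step, namely the opposite-sides decomposition and the use of independence to factor the indicator on $X_1$ out of the expectation (it is exactly this factorization that fails for the correlated pair $X_{\sigma(1)},X_{\sigma(2)}$, which is why the random-order case loses further constant factors).
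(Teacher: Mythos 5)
Your proof is correct, and it takes a genuinely different (and substantially cleaner) route than the paper's, while following the same high-level strategy of choosing $T$ to be the better of the two medians. Both arguments in effect establish the same key inequality $f(M_1) + f(M_2) \ge \tfrac12\,\E(|X_1-X_2|)$ and then take the larger term; the difference is entirely in how that inequality is proved. The paper partitions the plane into nine regions according to whether each of $X_1, X_2$ lies below $M_1$, between $M_1$ and $M_2$, or above $M_2$, expands both sides of the inequality over these regions, and then charges the ``bad'' terms to ``good'' ones via a three-part technical observation (\cref{obs:technical}); this is several pages of bookkeeping. You instead prove the exact identity
\[
f(m_1) = \tfrac12\,\E(|X_2 - m_1|) + \PP(X_2 \ge m_1)\,\E\big((m_1 - X_1)^+\big) + \PP(X_2 < m_1)\,\E\big((X_1 - m_1)^+\big),
\]
which uses only independence and $\PP(X_1 < m_1) = \PP(X_1 \ge m_1) = \tfrac12$, and then close the gap with the pointwise triangle inequality $|X_1 - X_2| \le |X_1 - m_1| + |m_1 - X_2|$ together with the variational characterization of the median ($\E(|X_1 - m_1|) \le \E(|X_1 - m_2|)$). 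That second ingredient is the real economy: it lets you avoid ever splitting on the interval $[M_1,M_2]$ at all. Your argument is correct at every step, including the nonnegativity of the residual terms and the logic $a \ge \tfrac12 c,\ b \ge \tfrac12 d \Rightarrow \max\{a,b\} \ge \tfrac12\max\{c,d\}$. Since the paper's appendix also shows that $1/4$ is tight for the ``better-of-the-two-medians'' threshold, your cleaner derivation reaches the same optimal constant. Nothing is missing.
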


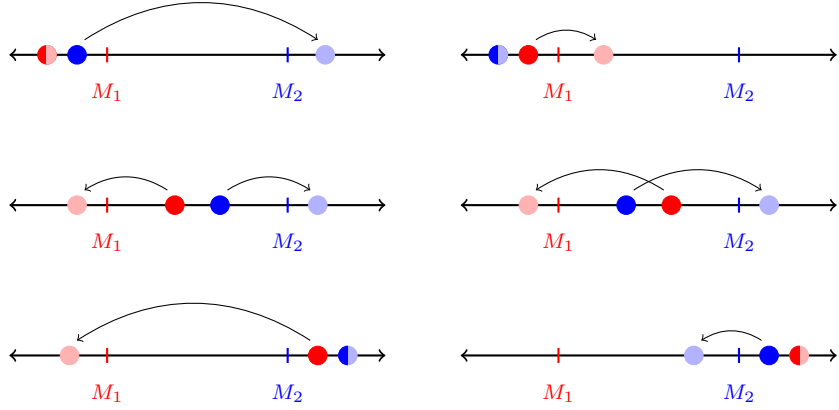
\begin{figure}
    \centering
    \begin{tikzpicture}
        \begin{scope}[shift={(0,0)}]
            \draw[<->,thick] (0,0) -- (5,0);
            \draw[thick,red] (1.3,-0.1) -- (1.3,0.1);
            \node at (1.3,-0.5) {\scriptsize \textcolor{red}{$M_1$}};
            \draw[thick,blue] (3.7,-0.1) -- (3.7,0.1);
            \node at (3.7,-0.5) {\scriptsize \textcolor{blue}{$M_2$}};
            \filldraw[red] (0.5,0) circle (3.5pt);
            \begin{scope}
                \clip(0.5,-0.5) rectangle (2,0.5);
                \filldraw[red!30] (0.5,0) circle (3.5pt);
            \end{scope}
            \filldraw[blue] (0.9,0) circle (3.5pt);
            \draw[->] (1,0.2) to [bend left=33] (4.1,0.2);
            \filldraw[blue!30] (4.2,0) circle (3.5pt);
        \end{scope}    
        \begin{scope}[shift={(6,0)}]
            \draw[<->,thick] (0,0) -- (5,0);
            \draw[thick,red] (1.3,-0.1) -- (1.3,0.1);
            \node at (1.3,-0.5) {\scriptsize \textcolor{red}{$M_1$}};
            \draw[thick,blue] (3.7,-0.1) -- (3.7,0.1);
            \node at (3.7,-0.5) {\scriptsize \textcolor{blue}{$M_2$}};
            \filldraw[blue] (0.5,0) circle (3.5pt);
            \begin{scope}
                \clip(0.5,-0.5) rectangle (2,0.5);
                \filldraw[blue!30] (0.5,0) circle (3.5pt);
            \end{scope}
            \filldraw[red] (0.9,0) circle (3.5pt);
            \draw[->] (1,0.2) to [bend left=33] (1.8,0.2);
            \filldraw[red!30] (1.9,0) circle (3.5pt);
        \end{scope}
        \begin{scope}[shift={(0,-2)}]
            \draw[<->,thick] (0,0) -- (5,0);
            \draw[thick,red] (1.3,-0.1) -- (1.3,0.1);
            \node at (1.3,-0.5) {\scriptsize \textcolor{red}{$M_1$}};
            \draw[thick,blue] (3.7,-0.1) -- (3.7,0.1);
            \node at (3.7,-0.5) {\scriptsize \textcolor{blue}{$M_2$}};
            \filldraw[red] (2.2,0) circle (3.5pt);
            \filldraw[blue] (2.8,0) circle (3.5pt);
            \draw[->] (2.1,0.2) to [bend right=33] (1.0,0.2);
            \filldraw[red!30] (0.9,0) circle (3.5pt);
            \draw[->] (2.9,0.2) to [bend left=33] (4,0.2);
            \filldraw[blue!30] (4.1,0) circle (3.5pt);            
        \end{scope}    
        \begin{scope}[shift={(6,-2)}]
            \draw[<->,thick] (0,0) -- (5,0);
            \draw[thick,red] (1.3,-0.1) -- (1.3,0.1);
            \node at (1.3,-0.5) {\scriptsize \textcolor{red}{$M_1$}};
            \draw[thick,blue] (3.7,-0.1) -- (3.7,0.1);
            \node at (3.7,-0.5) {\scriptsize \textcolor{blue}{$M_2$}};
            \filldraw[blue] (2.2,0) circle (3.5pt);
            \filldraw[red] (2.8,0) circle (3.5pt);
            \draw[->] (2.7,0.2) to [bend right=33] (1.0,0.2);
            \filldraw[red!30] (0.9,0) circle (3.5pt);
            \draw[->] (2.3,0.2) to [bend left=33] (4,0.2);
            \filldraw[blue!30] (4.1,0) circle (3.5pt);         
        \end{scope}       
        \begin{scope}[shift={(5,-4)},xscale=-1]
            \draw[<->,thick] (0,0) -- (5,0);
            \draw[thick,blue] (1.3,-0.1) -- (1.3,0.1);
            \node at (1.3,-0.5) {\scriptsize \textcolor{blue}{$M_2$}};
            \draw[thick,red] (3.7,-0.1) -- (3.7,0.1);
            \node at (3.7,-0.5) {\scriptsize \textcolor{red}{$M_1$}};
            \filldraw[blue!30] (0.5,0) circle (3.5pt);
            \begin{scope}
                \clip(0.5,-0.5) rectangle (2,0.5);
                \filldraw[blue] (0.5,0) circle (3.5pt);
            \end{scope}
            \filldraw[red] (0.9,0) circle (3.5pt);
            \draw[->] (1,0.2) to [bend left=33] (4.1,0.2);
            \filldraw[red!30] (4.2,0) circle (3.5pt);
        \end{scope}    
        \begin{scope}[shift={(11,-4)},xscale=-1]
            \draw[<->,thick] (0,0) -- (5,0);    
            \draw[thick,blue] (1.3,-0.1) -- (1.3,0.1);
            \node at (1.3,-0.5) {\scriptsize \textcolor{blue}{$M_2$}};
            \draw[thick,red] (3.7,-0.1) -- (3.7,0.1);
            \node at (3.7,-0.5) {\scriptsize \textcolor{red}{$M_1$}};
            \filldraw[red!30] (0.5,0) circle (3.5pt);
            \begin{scope}
                \clip(0.5,-0.5) rectangle (2,0.5);
                \filldraw[red] (0.5,0) circle (3.5pt);
            \end{scope}
            \filldraw[blue] (0.9,0) circle (3.5pt);
            \draw[->] (1,0.2) to [bend left=33] (1.8,0.2);
            \filldraw[blue!30] (1.9,0) circle (3.5pt);
        \end{scope}               
    \end{tikzpicture}
    \caption{The different cases distinguished in the proof of Lemma~\ref{lemma:two-medians}. Realizations of $X_1$ and $X_2$ are shown in red and blue, respectively. The realizations that are charged to are depicted in lighter colors. While the specific values depend on the actual distributions, what is unaffected is the order of the realizations with respect to the median of the corresponding distribution as well as among each other.}
    \label{fig:tech-proof}
\end{figure}
In the appendix, we show that \cref{eq:1} can be fulfilled by setting $T$ to at least one of $M_1$ and $M_2$, the medians of $F_1$ and $F_2$, respectively. Note that this is implied by
\begin{align*}
& 2\left( \E(|X_{1} - X_{2}| \cdot \mathds{1}_{M_{1} \in [\min(X_{1}, X_{2}), \max(X_{1}, X_{2})]}) + \E(|X_{1} - X_{2}| \cdot \mathds{1}_{M_{2} \in [\min(X_{1}, X_{2}), \max(X_{1}, X_{2})]}) \right) \notag\\ &\hspace*{48pt}\ge \E(|X_{1} - X_{2}|).
\end{align*}
A way to view the proof is that we charge any elementary event, say, $X_1=x_1$ and $X_2=x_2$, to another elementary event $X_1=x_1',X_2=x_2'$ such that
\begin{compactenum}
    \item[(i)] $|x_1-x_2|\leq |x_1'-x_2'|$,
    \item[(ii)] $M_1\in[\min(x_1',x_2'),\max(x_1',x_2')]$ or $M_2\in[\min(x_1',x_2'),\max(x_1',x_2')]$,
    \item[(iii)] no elementary event is charged to more than two times.
\end{compactenum}
First note that we can charge each event $X_1=x_1,X_2=x_2$ satisfying $M_1\in[\min(x_1,x_2),\max(x_1,x_2)]$ or $M_2\in[\min(x_1,x_2),\max(x_1,x_2)]$ to itself. For the other events, we distinguish different cases visualized in~\cref{fig:tech-proof} and charge in such a way that each realization of the former type is only charged to one additional time. Due to the presence of long calculations in the formal proof, we refer reader to \cref{appx:subsec:two-medians} for details. In the appendix, we also show that the analysis is tight for this way of setting the threshold.

Using the lemma, we now proceed to the proof of the theorem.

\begin{proof}[Proof of \cref{thm:random-order}]

The goal of this proof is constructing two \textit{independent} random variables such that if they are put to \cref{eq:5} they, with a loss of only a constant factor, estimate the right-hand side from above and the left-hand side from below.

Let set $H_{1} \subseteq \{1, \ldots, n\}$ be a set chosen uniformly at random from all subset of size $\ceil{\frac{n}{2}}$. Let $a, b$ be random elements chosen uniformly from $H_{1}$ and $\{1, \ldots, n\} \setminus H_{1}$. For any two $i \neq j, i,j \in \{1, \ldots, n\}$ we have that
$$\PP(a = i, b = j) =   \PP(\sigma(1) = i, \sigma(2) = j) = \frac{1}{n(n - 1)},$$
which gives us that
$$\E(|X_{a} - X_{b}|) = \E(|X_{\sigma(1)} - X_{\sigma(2)}|).$$
Since choice of each subset of size $\ceil{\frac{n}{2}}$ of set $\{1, \ldots, n \}$ is equally likely and happens with probability $1/{n \choose \ceil{n / 2}}$, there must exist a set $S$ such that
\begin{equation}\label[ineq]{eq:3}
\E(|X_{a'} - X_{b'}|) \ge \E(|X_{\sigma(1)} - X_{\sigma(2)}|),    
\end{equation}
where $a'$ is a uniformly random element from $S$ while $b'$ is a uniformly random element from $\{1, \ldots, n \} \setminus S$. Note here that $X_{a'}$ and $X_{b'}$ are independent random variables since the sets of random variables $\{ X_{i} | i \in S\}$ and $\{ X_{i} | i \in \{1, \ldots, n\} \setminus S\}$ are  pair-wise independent. Therefore, we can apply~\cref{lemma:two-medians} and get a threshold $T \in \R$ such that
$$\E(|X_{a'} - X_{b'}| \cdot \mathds{1}_{T \in [\min(X_{a'}, X_{b'}), \max(X_{a'}, X_{b'})]}) \ge \frac{1}{4}\E(|X_{a'} - X_{b'}|).$$
This together with \cref{eq:3} yields
\begin{equation}\label[ineq]{eq:4}
\E(|X_{a'} - X_{b'}| \cdot \mathds{1}_{T \in [\min(X_{a'}, X_{b'}), \max(X_{a'}, X_{b'})]}) \ge \frac{1}{4}\E(|X_{\sigma(1)} - X_{\sigma(2)}|).
\end{equation}
To the end, observe that 
\begin{align*}
&4 \cdot \E(|X_{\sigma(1)} - X_{\sigma(2)}|\cdot \mathds{1}_{T \in [\min(X_{\sigma(1)}, X_{\sigma(2)}), \max(X_{\sigma(1)}, X_{\sigma(2)})]})\\
&\qquad= \sum_{i,j \in \{1, \ldots, n\}, i \neq j} \frac{4}{n(n-1)} \E(|X_i - X_j|\cdot \mathds{1}_{T \in [\min(X_{i}, X_{j}), \max(X_{i}, X_{j})]}) \\
&\qquad\ge \sum_{i \in S, j \in \{1, \ldots, n\} \setminus S} \frac{1}{\ceil{n / 2}} \cdot \frac{1}{n - \ceil{n / 2}}\E(|X_i - X_j|\cdot \mathds{1}_{T \in [\min(X_{i}, X_{j}), \max(X_{i}, X_{j})]}) \\
&\qquad= \E(|X_{a'} - X_{b'}| \cdot \mathds{1}_{T \in [\min(X_{a'}, X_{b'}), \max(X_{a'}, X_{b'})]}),
\end{align*}
where the inequality above follows from the following reasons. First, all random variables on both sides of the inequality are non-negative. Second, on the left-hand side the sum iterates over all possible distinct pairs $i, j$ while on the right-hand side the sum iterates only over the pairs for which $i$ belongs to $S$ and $j$ belongs to the complement of $S$. Third, a simple case analysis assures that $\frac{4}{n(n - 1)} = \frac{2}{n}\cdot\frac{2}{n - 1} \ge \frac{1}{\ceil{n / 2}} \cdot \frac{1}{n - \ceil{n / 2}}$ for all integer $n$.
This together with \cref{eq:4} proves the theorem.
\end{proof}

\subsection{Proof of \cref{thm:random-order-v2}}\label{sec:asymptotic}

Let us now turn to proving \cref{thm:random-order-v2}. We will be analyzing the fraction
$$\frac{\E(|X_{1} - X_{2}|)}{\E(|X_{1} - X_{2}| \cdot \mathds{1}_{T \in [\min(X_{1}, X_{2}), \max(X_{1}, X_{2})]})},$$
where $T$ is the median of the mixture random variable $\frac{1}{n}X_{1} + \ldots + \frac{1}{n}X_{n}$. We will relate the nominator and denominator of the above fraction to two independent random variables of the \textit{same} distribution which makes the key difference compared to the previous approach.

Consider a random vector $(X_{1}, \ldots, X_{n})$ and its independent copy with the same distribution $(X'_{1}, \ldots, X'_{n})$. Let $Y, Y'$ be two independent random numbers from $1$ to $n$. Then, the variables $X_{Y}$ and $X'_{Y'}$ are also independent and have the same distribution.  

We define
\begin{align*}
S_{1} := \E(|X_{\sigma(1)} - X_{\sigma(2)}|), \text{ and }
S_2 := \E(|X_{\sigma(1)} - X_{\sigma(2)}| \cdot \mathds{1}_{T \in [\min(X_{\sigma(1)}, X_{\sigma(2)}), \max(X_{\sigma(1)}, X_{\sigma(2)})]})
\end{align*}
as well as
\begin{align*}
S'_1 := \E(|X_{Y} - X'_{Y'}|), \text{ and } S'_2 := \E(|X_{Y} - X'_{Y'}| \cdot \mathds{1}_{T \in [\min(X_{Y}, X'_{Y'}), \max(X_{Y}, X'_{Y'})]}).
\end{align*}
Since $T$ is the median of $X_{Y}$ and $X'_{Y'}$, thus by \cref{lem:iid_upper_bound_opt} we have that $2 \ge S'_{1}/S'_{2}$. The following lemma relates $S'_{2}$ to $S_{2}$.
\begin{lemma}\label{lem:dem}
If $T$ is the median of distribution $X_{Y}$, then there exists a constant $C > 0$, independent from $X_{Y}$, such that the following inequality holds:
$$\bigg(1 + \frac{C}{n}\bigg)\cdot S_{2} \ge \frac{n}{n - 1}\cdot S'_{2} .$$
\end{lemma}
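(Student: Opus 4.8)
The plan is to relate $S_2'$ and $S_2$ by conditioning on the multiset of indices that appear. Recall $S_2$ is built from the pair $(X_{\sigma(1)}, X_{\sigma(2)})$, which is two distinct coordinates of a single draw of $(X_1,\dots,X_n)$ chosen uniformly without replacement, whereas $S_2'$ involves $X_Y$ and $X'_{Y'}$ drawn independently from two independent copies of the vector, with $Y,Y'$ uniform and independent. The two constructions agree on the event that $Y \neq Y'$ \emph{and} we reveal only the $Y$-th coordinate of the first copy and the $Y'$-th of the second; the only discrepancy is (a) the ``collision'' event $Y = Y'$, which has probability $1/n$, and (b) on the no-collision event, whether the two coordinates come from the same realized vector or from two independent ones. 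Point (b) is actually a non-issue for the \emph{unordered pair of values}: since the coordinates are independent across $i$, the joint law of $(X_i, X_j)$ for fixed $i\neq j$ is the product $F_i\times F_j$ regardless of whether they come from one vector or two. So conditioned on $\{Y=i,Y'=j\}$ with $i\neq j$, the pair $(X_Y, X'_{Y'})$ has exactly the same distribution as $(X_{\sigma(1)},X_{\sigma(2)})$ conditioned on $\{\sigma(1)=i,\sigma(2)=j\}$.

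With that observation, I would write $S_2' = \PP(Y\neq Y') \cdot \E(\,\cdot\,\mid Y\neq Y') + \PP(Y=Y')\cdot \E(\,\cdot\,\mid Y=Y')$. On $\{Y=Y'\}$ the two variables are i.i.d.\ from the \emph{same} $F_i$, and the indicator $\mathds{1}_{T\in[\min,\max]}$ is still present, so that conditional expectation is nonnegative; hence $S_2' \le \PP(Y\neq Y')\,\E(\,\cdot\mid Y\neq Y')$. Now $\PP(Y\neq Y')$ equals $\frac{n-1}{n}$ by a direct count, and by the matching-of-conditional-laws above, $\E(\,\cdot\mid Y\neq Y')$ is a convex combination over ordered pairs $(i,j)$, $i\neq j$, with weights $\frac{1}{n(n-1)}$ — exactly the weights defining $S_2$. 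Therefore $\E(\,\cdot\mid Y\neq Y') = S_2$, giving $S_2' \le \frac{n-1}{n} S_2$, i.e.\ $\frac{n}{n-1} S_2' \le S_2$. That already yields the lemma with $C=0$; but one must double-check this against the claimed direction — the lemma asks for $(1+C/n)S_2 \ge \frac{n}{n-1}S_2'$, and $C=0$ trivially suffices, so the positive slack $C/n$ is only there to absorb the reverse comparison used elsewhere. If instead the genuinely needed direction is a \emph{lower} bound on $S_2'$ in terms of $S_2$ (the more delicate side), then the argument must be run the other way: bound the collision contribution and argue it cannot be too large relative to $S_2$, using that the i.i.d.\ term $\E(\,\cdot\mid Y=Y')$ is itself at most $\E|X_i - X_i'|$ with $X_i,X_i'$ i.i.d., which by \cref{lem:iid_upper_bound_opt} (with $n=2$) is comparable to the corresponding truncated quantity.

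Concretely, I would carry out the steps in this order: (1) expand $S_2'$ by conditioning on whether $Y=Y'$; (2) prove the law-matching claim for fixed $i\neq j$, using coordinatewise independence, so that the no-collision conditional of $S_2'$ equals $S_2$; (3) compute $\PP(Y=Y')=1/n$; (4) handle the collision term $\E\big(|X_Y-X'_{Y'}|\,\mathds{1}_{T\in[\min,\max]}\mid Y=Y'\big)$ — this is $\frac1n\sum_i \E(|X_i - X_i'|\mathds{1}_{T\in[\min(X_i,X_i'),\max(X_i,X_i')]})$ where $X_i, X_i'$ are i.i.d.\ copies — and bound it by a constant times $S_2$, which is where the constant $C$ enters; (5) assemble the pieces.

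The main obstacle is step (4): controlling the diagonal/collision term $\frac1n\sum_i \E\big(|X_i - X_i'|\,\mathds{1}_{T\in[\min(X_i,X_i'),\max(X_i,X_i')]}\big)$ by a quantity of the form $C\cdot S_2$. The subtlety is that $T$ is the \emph{mixture} median, not the median of any individual $F_i$, so the per-$i$ truncated absolute differences need not individually behave well; one has to argue in aggregate, exploiting that averaging the $F_i$ puts their mass ``around'' $T$ in a way that forces $\sum_i \E(|X_i-X_i'|\mathds{1}_{\cdots})$ to be comparable to $S_2 = \sum_{i\neq j}\frac{1}{n(n-1)}\E(|X_i-X_j|\mathds{1}_{\cdots})$ up to an $O(1)$ factor and an $O(1/n)$ correction — likely via a second-moment / variance-style identity decomposing $\sum_{i,j}$ (including $i=j$) into the off-diagonal sum plus the diagonal, together with the fact that the diagonal is a single extra ``row'' out of $n$ and hence contributes at most an $O(1/n)$ overhead once the truncation is in place. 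Making that comparison rigorous, with an explicit universal $C$, is the crux of the proof.
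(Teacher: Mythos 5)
Your decomposition of $S_2'$ by conditioning on the collision event $\{Y=Y'\}$ is exactly the paper's starting point, and your law-matching argument for the off-diagonal terms is correct: for $i\neq j$, $(X_i,X_j')$ has the same law as $(X_{\sigma(1)},X_{\sigma(2)})$ conditioned on $(\sigma(1),\sigma(2))=(i,j)$, so $\E(\,\cdot\mid Y\neq Y')=S_2$. However, there is a sign error in your first pass: since the collision term $\E(\,\cdot\mid Y=Y')$ is nonnegative, dropping it gives $S_2'\ge\frac{n-1}{n}S_2$, not $\le$. Consequently $C=0$ does \emph{not} suffice — in fact $\frac{n}{n-1}S_2' = S_2 + \frac{1}{n-1}\E(\,\cdot\mid Y=Y')$, so the lemma is precisely the statement that the diagonal term is at most $\frac{C(n-1)}{n}\,S_2$, and that is where all the work is.

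You do eventually recognize this and correctly isolate the crux — bounding $\frac{1}{n}\sum_i\E\bigl(|X_i-X_i'|\,\mathds{1}_{T\in[\min(X_i,X_i'),\max(X_i,X_i')]}\bigr)$ by $O(1)\cdot S_2$ — but you leave it unresolved, offering only a vague appeal to a ``second-moment / variance-style identity.'' The missing idea, which is the heart of the paper's proof, is to exploit the mixture-median property of $T$ quantitatively: define $A=\{k:\PP(X_k>T)\ge 1/4\}$ and $B=\{k:\PP(X_k\le T)\ge 1/4\}$, and observe that $|A|,|B|\ge n/4$ (otherwise the total mass of the mixture on one side of $T$ would drop below $1/2$). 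Then for each fixed $i$, a constant fraction of the $n-1$ off-diagonal terms $\E(|X_i-X_j'|\mathds{1}_{\cdots})$ can be lower-bounded by roughly $\frac{1}{4}\E\bigl((T-X_i)\mathds{1}_{X_i\le T}\bigr)$ (taking $j\in A$) or $\frac{1}{4}\E\bigl((X_i-T)\mathds{1}_{X_i>T}\bigr)$ (taking $j\in B$); summing over the $\Theta(n)$ such $j$ dominates the single diagonal term $\E(|X_i-X_i'|\mathds{1}_{\cdots})$, which by a triangle-inequality split is at most $2\bigl(\E((T-X_i')\mathds{1}_{X_i'<T})+\E((X_i-T)\mathds{1}_{X_i\ge T})\bigr)$. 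Without this per-$i$ comparison using the sets $A$ and $B$, the argument does not close.
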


\begin{proof}
Throughout this proof, for any real $A,B$, we will use $[A, B]$ to denote a continuous interval $[\min(A, B), \max(A, B)]$. We have that 
\begin{align*}
\frac{n}{n - 1}\cdot S'_{2} = \frac{n}{n - 1}\cdot\E(|X_{Y} - X'_{Y'}| \cdot \mathds{1}_{T \in [X_{Y}, X'_{Y'}]}) = \frac{1}{(n-1)n}\cdot \sum_{i, i' \in [n]}  \frac{1}{n^2} \cdot \E(|X_{i} - X'_{i'}| \cdot \mathds{1}_{T \in [X_{i}, X'_{i'}]}).
\end{align*}
On the other hand it holds that
\begin{align*}
\bigg(1 + \frac{C}{n}\bigg)\cdot S_{2} =\; & \bigg(1 + \frac{C}{n}\bigg)\cdot \E(|X_{\sigma(1)} - X_{\sigma(2)}| \cdot \mathds{1}_{T \in [X_{\sigma(1)}, X_{\sigma(2)}]})\\
=\; &\bigg(1 + \frac{C}{n}\bigg)\cdot \sum_{i \neq j \in [n]} \frac{1}{(n - 1)n} \cdot \E(|X_{i} - X_{j}| \cdot \mathds{1}_{T \in [X_{i}, X_{j}]}).
\end{align*}
Since $X_{i}$ and $X'_{j}$ are independent if $i \neq j$ it makes sense to subtract
$\sum_{i \neq j \in [n]} \frac{1}{(n - 1)n} \cdot \E(|X_{i} - X_{j}| \cdot \mathds{1}_{T \in [X_{i}, X_{j}]})$
from both sides of the original inequality, which reduces our task to proving
\begin{align*}
\frac{C}{(n-1)n^2} \cdot\sum_{i \neq j, i,j \in [n]} \E(|X_{i} - X'_{j}| \cdot \mathds{1}_{T \in [X_{i}, X'_{j}]} \ge \frac{1}{n(n-1)}\cdot \sum_{i \in [n]} \E(|X_{i} - X'_{i}| \cdot \mathds{1}_{T \in [X_{i}, X'_{i}]}).
\end{align*}
Let us now fix $i \in [n]$. We will show that
\begin{align}\label[ineq]{ineq:7}
\frac{C}{n} \cdot\sum_{j \in [n], j \neq i} \E(|X_{i} - X'_{j}| \cdot \mathds{1}_{T \in [X_{i}, X'_{j}]})  \ge \E(|X_{i} - X'_{i}| \cdot \mathds{1}_{T \in [X_{i}, X'_{i}]}),  
\end{align}
which, if summed over all choices of $i \in [n]$, will prove the lemma.
To do so, let $A$ be the set of the indices $k$ for which $\PP(X'_{k} > T) \ge 1/4$. Observe that $|A| \ge n/4$. If not, then we have $\PP({X'_{Y}} > T) < \PP(Y \in A) + \frac{1}{4}\cdot\PP(Y \notin A) < \frac{1}{4} + \frac{1}{4} = \frac{1}{2},$ 
which contradicts with the choice of $T$. An analogous argument shows that set
$B := \bigg\{k \mid \PP(X'_{k} \le T) \ge \frac{1}{4} \bigg\}$ 
has size at least $n/4$. 
Consider now the sum $\sum_{j \in [n], j \neq {i}} \E(|X_{i} - X'_{j}| \cdot \mathds{1}_{T \in [X_{i}, X'_{j}]}$. 
We have that
\begin{align}
&2\sum_{j \in [n], j \neq i} \E(|X_{i} - X'_{j}| \cdot \mathds{1}_{T \in [X_{i}, X'_{j}]}\nonumber\\
\ge& \sum_{a \in A - i} \E(|X_{i} - X'_{a}| \cdot \mathds{1}_{X_{i} \le T} \cdot \mathds{1}_{X'_{a} > T]}) + \sum_{b \in B - i} \E(|X_{i} - X'_{b}| \cdot \mathds{1}_{X_{i} > T} \cdot \mathds{1}_{X'_{b} \le T})\nonumber\\
=& \sum_{a \in A - i} \E((X'_{a} - X_{i}) \cdot \mathds{1}_{X_{i} \le T} \cdot \mathds{1}_{X'_{a} > T}) + \sum_{b \in B - i} \E((X_{i} - X'_{b}) \cdot \mathds{1}_{X_{i} > T} \cdot \mathds{1}_{X'_{b} \le T})\nonumber\\
\ge& \sum_{a \in A - i} \E((T - X_{i}) \cdot \mathds{1}_{X_{i} \le T} \cdot \mathds{1}_{X_{a} > T}) + \sum_{b \in B - i} \E((X_{i} - T) \cdot \mathds{1}_{X_{i} > T} \cdot \mathds{1}_{X_{b} \le T}),\label{eq:8.7}
\end{align}
where the last inequality follows from the fact that $X_{a} > T$ and $X_{b} \le T$. Since for every $a \in A - i$ variable $X'_{a}$ is independent from $X_{i}$ and, by the choice of $A$, we have $\PP(X_{a} > T) \ge 1/4$, it holds that
\begin{align*}
\sum_{a \in A - i} \E((T - X_{i}) \cdot \mathds{1}_{X_{i} \le T} \cdot \mathds{1}_{X'_{a} > T}) 
\ge \frac{1}{4} \cdot \bigg(\frac{n}{4} - 1 \bigg)\cdot \E((T - X_{i}) \cdot \mathds{1}_{X_{i} \le T}).
\end{align*}
The last inequality follows from the fact that $|A| \ge \frac{n}{4}$.
By a symmetric reasoning for set $B$, we see that
\begin{align*}
   \sum_{b \in B - i} \E((X_{i} - T) \cdot \mathds{1}_{X_{i} > T} \cdot \mathds{1}_{X'_{b} \le T}) \ge \frac{1}{4} \cdot \bigg(\frac{n}{4} - 1 \bigg)\cdot\E((X_{i} - T) \cdot \mathds{1}_{X_{i} > T}). 
\end{align*}
The two above inequalities combined with Inequality~\eqref{eq:8.7} gives us that
\begin{align}\label[ineq]{eq:6.10}
\sum_{j \in [n], j \neq i} 2\E(|X_{i} - X'_{j}| \cdot \mathds{1}_{T \in [X_{i}, X'_{j}]}) \ge \bigg(\frac{n}{16} - \frac{1}{4}\bigg)\cdot \big(\E((T - X_{i}) \cdot \mathds{1}_{X_{i} \le T}) + \E((X_{i} - T) \cdot \mathds{1}_{X_{i} > T}) \big) 
\end{align}
On the other hand, the left-hand side of (\cref{ineq:7}) can be rewritten as
\begin{align*}
&\E(|X_{i} - X'_{i}| \cdot \mathds{1}_{T \in [X_{i}, X'_{i}]})\\
=\;& \E((X_{i} - X'_{i}) \cdot \mathds{1}_{X_{i} \ge T} \cdot \mathds{1}_{ X'_{i} < T}) + \E((X'_{i} - X_{i}) \cdot \mathds{1}_{X_{i} < T} \cdot \mathds{1}_{ X'_{i} \ge T})
= 2\E((X_{i} - X'_{i}) \cdot \mathds{1}_{X_{i} \ge T} \cdot \mathds{1}_{ X'_{i} < T})
\end{align*}
since $X_{i}$ and $X'_{i}$ are independent random variables with the same distribution. By linearity of expectation we get
\begin{align*}
2\E((X_{i} - X'_{i}) \mathds{1}_{X_{i} \ge T} \cdot \mathds{1}_{ X'_{i} < T}) \le\; &2\big(\E((T - X'_{i}) \mathds{1}_{X_{i} \ge T} \cdot \mathds{1}_{ X'_{i} < T}) + \E((X_{i} - T) \cdot\mathds{1}_{X_{i} \ge T} \cdot \mathds{1}_{ X'_{i} < T})\big)\\
\le\; &2\big(\E((T - X'_{i}) \cdot \mathds{1}_{ X'_{i} < T}) + \E((X_{i} - T) \cdot\mathds{1}_{X_{i} \ge T} ) \big),
\end{align*}
where the last inequality follows from the independence of $X_{i}$ and $X'_{i}$.
This inequality, combined with \cref{eq:6.10} leads to
\begin{align*}
\sum_{j \in [n], j \neq i} 2\E(|X_{i} - X'_{j}| \cdot \mathds{1}_{T \in [X_{i}, X'_{j}]})
\ge \bigg(\frac{n}{16} - \frac{1}{4}\bigg)\cdot \big(\E((T - X_{i}) \cdot \mathds{1}_{X_{i} \le T}) + \E((X_{i} - T) \cdot \mathds{1}_{X_{i} > T}) \big)
\end{align*}
\begin{align*}
\ge \frac{1}{2}\cdot\bigg(\frac{n}{16} - \frac{1}{4}\bigg)\cdot\E(|X_{i} - X'_{i}| \cdot \mathds{1}_{T \in [X_{i}, X'_{i}]}),
\end{align*}
which is equivalent to
$$ \frac{64}{n} \cdot\E(|X_{i} - X'_{j}| \cdot \mathds{1}_{T \in [X_{i}, X'_{j}]}) \ge \E(|X_{i} - X'_{i}| \cdot \mathds{1}_{T \in [X_{i}, X'_{i}]}).$$
This proves the claimed \cref{ineq:7} with constant $C := 64$, and therefore the lemma follows.
\end{proof}

We are now ready to finish the proof of \cref{thm:random-order-v2}.
\begin{proof}[Proof of \cref{thm:random-order-v2}.]
Recall that our goal is to prove
\begin{align}\label{eq:6.6}
\bigg(1 + \frac{C}{n - C} \bigg) \cdot 2 \ge \frac{S_{1}}{S_{2}}.
\end{align}
To do so, first observe that
\begin{align*}
S_{1} = \E(|X_{\sigma(1)} - X_{\sigma(2)}|) = \frac{n}{n - 1}\cdot \E(|X_{Y} - X'_{Y'}|) - \frac{1}{n(n-1)}\cdot\sum_{i \in [n]} \E(|X_{i} - X'_{i}|)
\end{align*}
\begin{align*}
= \frac{n}{n - 1}\cdot S'_{1} -  \frac{1}{n(n-1)}\cdot\sum_{i \in [n]} \E(|X_{i} - X'_{i}|),
\end{align*}
which leads to
\begin{align*}
\frac{n}{n - 1}\cdot S'_{1} = S_{1} + \frac{1}{n(n-1)}\cdot\sum_{i \in [n]} \E(|X_{i} - X'_{i}|).
\end{align*}
As observed before, we have that $2 \ge \frac{S'_{1}}{S'_{2}}$ which combined with the above equality gets that
\begin{align}\label[ineq]{ineq:3}
2 &\ge \frac{ S_{1} + \frac{1}{n(n-1)}\cdot\sum_{i \in [n]} \E(|X_{i} - X'_{i}|)} {S'_{2}} \ge \frac{S_{1}} {S'_{2}},
\end{align}
where in the last inequality we used the fact that $\sum_{i \in [n]} \E(|X_{i} - X'_{i}|) \ge 0$.
By \cref{lem:dem} we get that there exists a constant $C$ such that
$\bigg(1 + \frac{C}{n}\bigg)\cdot S_{2} \ge \frac{n}{n - 1}\cdot S'_{2} \ge S'_{2}$,
which, when plugged into the denominator of the right-hand side of \cref{ineq:3}, implies
$2\cdot \bigg(1 + \frac{C}{n}\bigg)\cdot S_{2} \ge S_{1}$.
This completes the proof of the theorem.
\end{proof}

\section{Generalizations}\label{sec:extensions}

In the following subsections, we consider four generalizations of our basic trading-prophet problem: An unknown-distribution version with sample access, a version with more than one item, a budgeted version with re-investments of gains, and a multi-armed--bandit version.

\subsection{Unknown Distribution and Affiliated Prices}
\label{sec:unknown_dist}
We consider in this section a variant of the model studied in \cref{sec:iid} in which the prices are i.i.d., but we are not given the distribution beforehand.
We prove the result for i.i.d.~prices, but notice that the exact same argument applies to affiliated prices where $p_j = x_j + y$ and $y \sim G$ and $x_j\sim F$ for all $j$ independently. The reason is that once $y$ is fixed the increments are $i.i.d.$, and both the online algorithm and the optimal offline algorithm buy and sell the same number of times so that $y$ cancels out.

\begin{theorem}
\label{thm:unknown_dist}
If prices are i.i.d.\ or affiliated, but we do not know the distribution, there is an algorithm $\ALG$ such that $\E(\ALG)\geq \frac{1}{2}\cdot \frac{n-2}{n-1}\cdot \E(\OPT)$.
\end{theorem}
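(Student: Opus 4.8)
The plan is to reduce the unknown-distribution case to the known-distribution analysis of \cref{sec:iid} by replacing the median threshold with an empirical threshold drawn from the data itself, while paying only a $\frac{n-2}{n-1}$ loss. Concretely, I would have the algorithm use the first observed price $X_1$ purely as a ``sample'': never buy in period $1$, and for periods $2 \le i \le n-1$ apply the single-threshold rule $\ALG_{X_1}$, i.e., buy when $X_i < X_1$ and sell when $X_i \ge X_1$; in period $n$ sell if holding. Since the prices are i.i.d., conditioning on $X_1 = t$, the remaining $n-1$ prices $X_2,\dots,X_n$ are i.i.d.\ from $F$ and the rule among them is exactly the threshold rule with threshold $t$ on a fresh instance of length $n-1$. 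Then I would invoke the computations behind \cref{lem:iid_lower_bound_alg} and \cref{lem:iid_upper_bound_opt} in a ``threshold-not-necessarily-median'' form: the expected profit of a threshold-$t$ policy on $m$ i.i.d.\ prices is $\frac{m-1}{2}\bigl(\E(X\cdot\mathds{1}_{X\ge t}) - \E(X\cdot\mathds{1}_{X< t})\bigr) + \bigl(\tfrac12\PP(X\ge t) - \tfrac12\bigr)\cdot(\text{boundary correction})$ — more cleanly, one re-derives that for \emph{any} fixed $t$ the per-step probability of holding the item stabilizes and the total is a clean closed form whose expectation over $t = X_1 \sim F$ can be compared to $\E(\OPT)$.

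The cleanest route, which I expect to be the intended one, is to average over the random threshold directly. Let me write $g(t) := \E\bigl((X_1 - t)\cdot \mathds{1}_{X_1 \ge t}\bigr) + \E\bigl((t - X_1)\cdot\mathds{1}_{X_1 < t}\bigr) = \E(|X_1 - t|)$ when we symmetrize appropriately; the key identity is that for i.i.d.\ prices and $a,b$ independent draws from $F$, $\E_b\bigl[\text{(threshold-}b\text{ profit per intermediate step)}\bigr]$ relates to $\E(|X_a - X_b|)$. Using the reduction-to-two-periods philosophy already present in the paper (\cref{lemma:opt-est-ran-ord}, \cref{lemma:alg-est-ran-ord}): on the $n-1$ ``live'' periods with threshold $X_1$, the expected profit equals $\frac{n-2}{2}\cdot\E\bigl(|X_2 - X_1|\cdot\mathds{1}_{X_1 \in [\min(X_2,X_3),\max(X_2,X_3)]}\bigr)$-type expression — but since here the threshold is literally one of the i.i.d.\ samples, $\E(\ALG) = \frac{n-2}{2}\cdot \E\bigl(|X_i - X_1|\cdot \mathds{1}_{X_1\text{ between }X_i, X_{i'}}\bigr)$ collapses, by exchangeability of $X_1, X_i, X_{i'}$, to $\frac{n-2}{2}\cdot \E(|X_1 - X_2|)\cdot\PP(\text{middle of three i.i.d.}) = \frac{n-2}{2}\cdot\frac13\E(|X_1-X_2|)$. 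That gives only a $\frac13$-type bound, so instead I would \emph{not} throw away the sample but economize on it: use the ``fresh-looking samples'' trick of \citet{CorreaDFS19,CorreaDFSZ21} referenced in the paper — maintain a running pool, and at each step use as threshold a value that is distributed as a fresh sample from $F$ given the history, so that at step $i$ the decision behaves as threshold-$T$ with $T\sim F$ independent of $X_i$, and the per-step analysis of \cref{lem:iid_lower_bound_alg} goes through with $\E(X_1\cdot\mathds{1}_{X_1\ge T})-\E(X_1\cdot\mathds{1}_{X_1<T})$ where now the ``$T$ is the median'' step is replaced by averaging over $T\sim F$, yielding $\E(|X_1-X_2|)$ exactly (by the same add/subtract trick, since $\E_T(\E(T\mathds{1}_{X\ge T})) = \E_T(\E(T\mathds{1}_{X<T}))$ holds on average even if not pointwise — actually it holds because $X, T$ are i.i.d., so $\E(T\mathds{1}_{X\ge T}) = \E(X\mathds{1}_{T\ge X})$ by swapping names). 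This recovers $\frac{n-1}{2}\E(|X_1-X_2|)$ up to the $O(1)$ periods spent ``priming'' the sample pool, giving the $\frac{n-2}{n-1}$ factor.

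So the steps, in order, are: (1) state the fresh-looking-samples construction — one period (or $O(1)$ periods) is sacrificed to seed a sample that, conditioned on everything the algorithm has seen, is an unbiased fresh draw from $F$; (2) establish that with this construction the algorithm's decision at each live period $i$ is exactly $\ALG_{T_i}$ for a $T_i\sim F$ independent of $X_i$, and that the ``holding probability'' recursion still gives $p_i$ converging to / equal to $1/2$ in expectation over $T_i$ because $\PP(X<T) = 1/2$ in expectation over i.i.d.\ $X,T$; (3) redo the computation of \cref{lem:iid_lower_bound_alg} with these random thresholds to get $\E(\ALG) = \frac{n-2}{2}\cdot\E(|X_1 - X_2|)$ (the $n-2$ rather than $n-1$ absorbing the priming cost); (4) combine with $\E(\OPT) = \frac{n-1}{2}\E(|X_1-X_2|)$ from \cref{lem:iid_upper_bound_opt} to conclude $\E(\ALG)\ge \frac{n-2}{n-1}\cdot\frac12\E(\OPT)$; (5) note the affiliated case follows verbatim since the shift $y$ contributes $+y$ for each sale and $-y$ for each purchase and both $\ALG$ and $\OPT$ buy and sell equally often, so $y$ cancels in both $\E(\ALG)$ and $\E(\OPT)$. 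The main obstacle is step (1)–(2): making precise the fresh-looking-samples coupling so that the threshold used at step $i$ is genuinely independent of $X_i$ and genuinely $F$-distributed conditioned on the algorithm's information — the naive ``use $X_1$ as threshold forever'' fails because a single sample is not fresh at later steps, so one needs the careful argument (as in \citet{CorreaDFS19}) that reshuffling the roles of observed prices and using order statistics appropriately keeps the relevant marginal correct while only wasting $O(1)$ periods.
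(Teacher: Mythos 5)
Your high-level plan matches the paper's proof: sacrifice period $1$, treat $X_1$ as a sample, and run a threshold policy on periods $2,\dots,n$ while keeping the sample "fresh-looking" via the trick of \citet{CorreaDFS19,CorreaDFSZ21}. However, there are two concrete computational errors, and the central coupling in steps (1)--(2) is left unconstructed.

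First, the argument by which you discard the naive ``use $X_1$ as a fixed threshold forever'' algorithm is wrong, and in fact that simpler algorithm already works. You factor $\E\big(|X_2-X_3|\cdot\mathds{1}_{X_1 \text{ between } X_2, X_3}\big)$ as $\E(|X_2-X_3|)\cdot\PP(\text{middle of three})=\tfrac13\E(|X_1-X_2|)$, but the range $|X_2-X_3|$ and the event that $X_1$ lands between $X_2$ and $X_3$ are positively correlated (when the indicator is on, $|X_2-X_3|$ is the full range of the triple). Using that for i.i.d.\ draws the rank order is independent of the order statistics, one gets
\[
\E\big(|X_2-X_3|\cdot\mathds{1}_{X_1 \text{ between}}\big)=\tfrac13\,\E\big(\max_j X_j-\min_j X_j\big)=\tfrac13\cdot\tfrac32\,\E(|X_1-X_2|)=\tfrac12\,\E(|X_1-X_2|),
\]
so fixing $T=X_1$ and applying the formula of \cref{lemma:alg-est-ran-ord} to the $n-1$ remaining prices gives $\E(\ALG)=\frac{n-2}{2}\cdot\frac12\E(|X_1-X_2|)=\frac{n-2}{4}\E(|X_1-X_2|)$, exactly the claimed $\frac12\cdot\frac{n-2}{n-1}\E(\OPT)$, without any need for fresh-looking samples. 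Second, in your fresh-samples branch the per-step accounting is off by a factor of $2$: for $X,T$ i.i.d.\ one has $\E(X\mathds{1}_{X\geq T})-\E(X\mathds{1}_{X<T})=\tfrac12\E(|X-T|)$ (not $\E(|X-T|)$), so after multiplying by the holding probability $\tfrac12$ the per-step contribution is $\tfrac14\E(|X_1-X_2|)$ and the total is $\tfrac{n-2}{4}\E(|X_1-X_2|)=\tfrac12\E(\OPT)\cdot\frac{n-2}{n-1}$, not the $\frac{n-1}{2}\E(|X_1-X_2|)=\E(\OPT)$ you wrote. Third, the part you flag yourself as the main obstacle --- constructing a threshold $T_i$ that is $F$-distributed, independent of $X_i$, \emph{and} independent of the holding event $\{X_{i-1}<T_{i-1}\}$ --- is genuinely what has to be supplied and is not. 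The paper supplies it by taking $T_i$ to be a uniformly random element of $\{S_1,X_1,\dots,X_{i-1}\}$ and observing that the holding event depends only on the relative order within that set, which for i.i.d.\ values is independent of the value of a uniformly random element of the set; without this (or an equivalent) exchangeability argument, step (2) does not go through as stated.
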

To prove this result we first analyze the case where, instead of knowing the distribution, we have access to $n-1$ independent samples $S_1,S_2,\ldots,S_{n-1}$. For this case, consider the following algorithm we denote by $\ALG^s$. In period $i<n$, if we do not have the item, we buy if $X_i<S_i$; if we have the item, we sell if $X_i\geq S_i$. In period $n$ we simply sell the item if we still have it. For the case of affiliated prices, we require that the samples are also correlated, i.e., that $S_j=y+s_j$ and $X_j=y+x_j$, where $y\sim G$ and $s_j\sim F$, $x_j\sim F$ for all $j$.

\begin{lemma}
\label{lem:alg_samples}
$\E(\ALG^s)\geq \frac{n-1}{4}\cdot \E(|X_1-X_2|)$.
\end{lemma}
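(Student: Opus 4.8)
The plan is to mimic the structure of the i.i.d.\ analysis (\cref{lem:iid_lower_bound_alg}), but account for the fact that the threshold in period $i$ is now the random sample $S_i$ rather than a fixed median. First I would write $\ALG^s$ as a sum of per-period gains $\ALG^s(i)$, where in period $i<n$ we have $\ALG^s(i) = -X_i\cdot\mathds{1}_{X_i<S_i}$ if we do not hold the item and $\ALG^s(i) = X_i\cdot\mathds{1}_{X_i\geq S_i}$ if we do, and $\ALG^s(n)=X_n\cdot\mathds{1}_{\text{hold item}}$. Let $q_i$ denote the probability we hold the item at the start of period $i$. The crucial observation is that, because $S_i$ is a fresh independent sample and $X_i\sim F$, the pair $(X_i,S_i)$ is exchangeable, so $\PP(X_i<S_i)=\PP(X_i\geq S_i)=1/2$ (using continuity of $F$); combined with the recursion $q_{i} = \PP(X_{i-1}<S_{i-1})q_{i-1}+\PP(X_{i-1}<S_{i-1})(1-q_{i-1}) = \PP(X_{i-1}<S_{i-1}) = 1/2$ for all $i\geq 2$, exactly as in the known-distribution case.

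Next I would compute $\E(\ALG^s(i))$ for an intermediate period $2\le i\le n-1$. Since $X_i$ and $S_i$ are independent of the history (the state in period $i$ depends only on $X_1,\dots,X_{i-1},S_1,\dots,S_{i-1}$), we get $\E(\ALG^s(i)) = \tfrac12\E(X_i\cdot\mathds{1}_{X_i\geq S_i}) - \tfrac12\E(X_i\cdot\mathds{1}_{X_i<S_i})$. By exchangeability of $(X_i,S_i)$, $\E(X_i\cdot\mathds{1}_{X_i\geq S_i}) = \E(S_i\cdot\mathds{1}_{S_i\geq X_i}) = \E(S_i\cdot\mathds{1}_{X_i< S_i})$ (again using continuity to ignore ties), so $\E(\ALG^s(i)) = \tfrac12\E\big((X_i - S_i)\cdot\mathds{1}_{X_i\geq S_i}\big) = \tfrac14\E(|X_i-S_i|) = \tfrac14\E(|X_1-X_2|)$, where the last step uses that $X_i,S_i$ are i.i.d.\ copies from $F$, just like $X_1,X_2$. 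For period $1$ we have $\E(\ALG^s(1)) = -\E(X_1\cdot\mathds{1}_{X_1<S_1})$, and for period $n$ we have $\E(\ALG^s(n)) = \tfrac12\E(X_n)$; I would bound these two boundary terms together, showing $\E(\ALG^s(1))+\E(\ALG^s(n)) \geq 0$ (or handle them by the same symmetry trick), so that $\E(\ALG^s)\geq \sum_{i=2}^{n-1}\E(\ALG^s(i)) = \tfrac{n-2}{4}\E(|X_1-X_2|)$. To actually reach the claimed $\tfrac{n-1}{4}$ factor I expect one needs to argue the boundary contribution is in fact nonnegative and, more carefully, that $\E(\ALG^s(1))+\E(\ALG^s(n))\geq \tfrac14\E(|X_1-X_2|)$: indeed $\E(\ALG^s(n))=\tfrac12\E(X_n)$ while $-\E(\ALG^s(1))=\E(X_1\mathds{1}_{X_1<S_1})=\tfrac12\E(\min(X_1,S_1))$, so the boundary sum equals $\tfrac12\E(X_1) - \tfrac12\E(\min(X_1,S_1)) = \tfrac12\E(X_1-\min(X_1,S_1)) = \tfrac12\E((X_1-S_1)^+) = \tfrac14\E(|X_1-S_1|) = \tfrac14\E(|X_1-X_2|)$, giving the total $\tfrac{n-1}{4}\E(|X_1-X_2|)$ exactly.

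The main obstacle I anticipate is bookkeeping the boundary periods correctly — in particular verifying the identity $\E(X_1 - \min(X_1,S_1)) = \E((X_1-S_1)^+)$ and that the period-$1$ "loss" term combines cleanly with the period-$n$ "gain" term to contribute precisely $\tfrac14\E(|X_1-X_2|)$ rather than something smaller; this is what upgrades the naive $\tfrac{n-2}{4}$ bound to the stated $\tfrac{n-1}{4}$. Everything else — the exchangeability of $(X_i,S_i)$, the fact that $q_i=1/2$, and the independence of current draws from the past — is a direct transcription of the argument in \cref{lem:iid_lower_bound_alg}, with the fixed median $T$ replaced by the i.i.d.\ sample $S_i$ and the role of "$\E(|X_1-T|)$" replaced by "$\tfrac12\E(|X_1-X_2|)$." For the affiliated case I would note, as the paper already remarks, that writing $X_j = y+x_j$, $S_j=y+s_j$ makes all the differences $X_i-S_i$ independent of $y$, so the identical computation goes through verbatim after conditioning on $y$.
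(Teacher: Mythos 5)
Your proposal is correct and uses essentially the same decomposition as the paper: per-period gains with $q_i = 1/2$ from the event $\{X_{i-1} < S_{i-1}\}$, exchangeability of $(X_i,S_i)$ to rewrite intermediate terms as $\tfrac14\E(|X_1-X_2|)$, and careful bookkeeping of periods $1$ and $n$. The paper sums all per-period expectations and simplifies in one step, whereas you isolate the boundary contribution (showing it equals exactly $\tfrac14\E(|X_1-X_2|)$), but this is purely cosmetic — the underlying argument is identical.
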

\begin{proof}
Denote by $\ALG^s(i)$ the gains of the algorithm in period $i$. Note first that in every period $i\geq 2$ the event that we have the item is independent of $X_i$ and $S_i$, and has probability exactly $1/2$. This is because this event is exactly the event $\{X_{i-1}<S_{i-1}\}$. Therefore,
\[
\E(\ALG^s(i))=\begin{cases}
  \E(-X_1\cdot \mathds{1}_{X_1<S_1}) &\text{if } i=1\\
  \frac{1}{2}\cdot\E(X_i \cdot \mathds{1}_{X_i\geq S_i} )
  + \frac{1}{2}\cdot\E(-X_i \cdot \mathds{1}_{X_i< S_i})
  &\text{if } 2\leq i\leq n-1\\
  \frac{1}{2} \cdot\E(X_n) &\text{if } i=n.
\end{cases}
\]
Since $X_1,\ldots,X_n,S_1,\ldots S_{n-1}$ are i.i.d. realizations of the same distribution, we conclude that
\begin{align*}
    \E(\ALG^s) = \sum_{i=1}^n \E(\ALG^s(i)) &= \frac{n-1}{2}\cdot\E((X_1-X_2)\cdot \mathds{1}_{X_1\geq X_2})\\
    &= \frac{n-1}{4}\cdot\E(|X_1-X_2|).
\end{align*}
For the case of affiliated random variables, i.e., if $X_j=x_j+y$ and $S_j=s_j+y$ for $y\sim G$ and $x_j\sim s_j\sim F$, notice that the events $\{X_{i-1}<S_{i-1}\}$ and $\{x_{i-1}<s_{i-1}\}$ are equivalent. This means that the event that we have the item in period $i$ is independent of $X_i$ and $S_i$, so all previous equations still hold.
\end{proof}
Combining this lemma with \cref{lem:iid_upper_bound_opt} we have that $\ALG^s$ is actually a $1/2$ approximation. From $\ALG^s$ we derive $\ALG^{s*}$, an algorithm that uses a single sample $S_1$ of the distribution. For $2\leq i\leq n-1$, we define $S'_i$ by selecting a uniformly random element from $\{S_1,X_1,\ldots, X_{i-1}\}$. Now, $\ALG^{s*}$ behaves like $\ALG^s$, taking $S_1,S'_2,\ldots, S'_{n-1}$ as the set of samples.
\begin{lemma}
\label{lem:single_sample_alg}
$\E(\ALG^{s*})=\E(\ALG^s)$.
\end{lemma}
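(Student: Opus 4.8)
The goal is to show that introducing the auxiliary samples $S'_i$ (each a uniformly random draw from $\{S_1, X_1, \dots, X_{i-1}\}$) does not change the expected profit: $\E(\ALG^{s*}) = \E(\ALG^s)$. The key structural fact to exploit is \emph{exchangeability}: since $S_1, X_1, \dots, X_{n-1}$ are i.i.d.\ draws from $F$ (for the affiliated case, the increments $s_1, x_1, \dots$ are i.i.d.\ and the common shift $y$ cancels out exactly as in \cref{lem:alg_samples}), any randomized rule for picking an ``effective sample'' in period $i$ from the multiset of values seen so far plus $S_1$ interacts with the fresh price $X_i$ in a distributionally identical way to using an independent fresh sample $S_i$. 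In other words, I would argue that $(S'_i, X_i)$ has the same joint law — conditioned on the relevant history — as $(S_i, X_i)$ does, namely two independent $F$-draws, and that this is all that entered the computation of $\E(\ALG^s)$.

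The cleanest way to carry this out is to re-run the period-by-period accounting of \cref{lem:alg_samples} for $\ALG^{s*}$. First I would establish the analogue of the ``probability $1/2$ of holding the item, independent of the current data'' claim: in period $i \ge 2$, $\ALG^{s*}$ holds the item iff the decision in period $i-1$ was to buy/hold, which (unrolling) is determined by events of the form $\{X_{i-1} < S'_{i-1}\}$, etc. The point is that $S'_i$ is drawn from $\{S_1, X_1, \dots, X_{i-1}\}$, which is \emph{independent of $X_i$}; hence $\{X_i < S'_i\}$ is a function of data independent of nothing new is needed — we need that $\PP(X_i < S'_i) = 1/2$ and that the event of holding the item in period $i$ is independent of $X_i$. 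The holding event depends only on $X_1, \dots, X_{i-1}, S_1$ and the internal randomness used to form $S'_2, \dots, S'_{i-1}$, all of which are independent of $X_i$. For $\PP(X_i < S'_i) = 1/2$: condition on the realized multiset $\{S_1, X_1, \dots, X_{i-1}\}$; then $S'_i$ is a uniform draw from it and $X_i$ is an independent $F$-draw, and by exchangeability of the $i+1$ i.i.d.\ values $S_1, X_1, \dots, X_{i-1}, X_i$ the probability that the last one is below a uniformly-random one of the first $i$ is exactly $i/(2(i)) \cdot$ — more simply, by symmetry $\PP(X_i < S'_i) = \PP(S'_i < X_i)$ and ties have probability zero (continuous distributions), so each is $1/2$.

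With these two facts, the per-period expectations $\E(\ALG^{s*}(i))$ take exactly the same form as in \cref{lem:alg_samples}: for $i=1$ it is $\E(-X_1 \cdot \mathds{1}_{X_1 < S_1})$ (unchanged, since period $1$ uses $S_1$ itself), for $2 \le i \le n-1$ it is $\tfrac12 \E(X_i \cdot \mathds{1}_{X_i \ge S'_i}) + \tfrac12 \E(-X_i \cdot \mathds{1}_{X_i < S'_i})$, and for $i = n$ it is $\tfrac12 \E(X_n)$. Now I would observe that $\E(X_i \cdot \mathds{1}_{X_i \ge S'_i}) = \E(X_i \cdot \mathds{1}_{X_i \ge S_i})$: conditioning on the multiset of past values and using that $X_i$ is an independent $F$-draw, $S'_i$ is equal in distribution to a fresh $F$-draw $S_i$ jointly with $X_i$ (both being a uniform pick among exchangeable i.i.d.\ values and an independent one), so the expectation only sees the pair $(X_i, S'_i) \overset{d}{=} (X_i, S_i)$. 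Summing over $i$ reproduces $\frac{n-1}{2}\E((X_1 - X_2)\mathds{1}_{X_1 \ge X_2}) = \frac{n-1}{4}\E(|X_1 - X_2|) = \E(\ALG^s)$, and the affiliated case is handled verbatim by passing to increments.

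**Main obstacle.** The only delicate point is making precise the claim that $(X_i, S'_i)$ is distributed as a pair of independent $F$-draws, \emph{given} that $S'_i$ is selected from values that are themselves intertwined with the algorithm's past decisions. The resolution is that the selection of $S'_i$ uses fresh independent randomness and the pool $\{S_1, X_1, \dots, X_{i-1}\}$ is, as an unordered multiset, exchangeable and independent of $X_i$; one must be slightly careful to note that although $X_1, \dots, X_{i-1}$ also drove earlier buy/sell decisions, the \emph{value} $S'_i$ conditioned on the pool is just a uniform element, and what the period-$i$ gain computation needs is only the joint law of $(X_i, S'_i)$ together with the (independent-of-$X_i$) holding indicator — and none of those ingredients are disturbed by the correlation between the pool and the history. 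I would spell this out by conditioning on $\sigma(X_1, \dots, X_{i-1}, S_1)$ and the auxiliary randomness, at which point $X_i$ is a fresh independent $F$-draw and everything reduces to the i.i.d.\ symmetry argument already used.
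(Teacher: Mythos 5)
Your overall plan — period-by-period accounting, exploiting exchangeability of $S_1, X_1, \dots, X_{i-1}$, and arguing $(X_i, S'_i) \overset{d}{=} (X_i, S_i)$ — is the same as the paper's, and the pieces you do establish (that $X_i$ is independent of everything in the past, and that $\PP(X_i < S'_i) = 1/2$) are correct and argued soundly. However, there is a genuine gap at exactly the step you flag as the "main obstacle," and your proposed resolution does not close it.

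To get the per-period identity $\E(\ALG^{s*}(i)) = \tfrac12\E(X_i\cdot\mathds{1}_{X_i \ge S'_i}) - \tfrac12\E(X_i\cdot\mathds{1}_{X_i < S'_i})$, you need the holding indicator $H_i = \mathds{1}_{X_{i-1} < S'_{i-1}}$ to be independent of the \emph{pair} $(X_i, S'_i)$, not merely of $X_i$. You correctly note $H_i \perp X_i$, but $H_i$ and $S'_i$ are both functions of the same pool $\{S_1, X_1, \dots, X_{i-1}\}$ plus auxiliary randomness, and the assertion that "none of those ingredients are disturbed by the correlation between the pool and the history" is precisely what must be proved, not assumed. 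Moreover, your suggested fix — conditioning on $\sigma(X_1,\dots,X_{i-1},S_1)$ together with the auxiliary randomness — does not help: under that conditioning both $H_i$ and $S'_i$ become deterministic, so the outer expectation $\E[\mathds{1}_{H_i}\, g(S'_i)]$ with $g(s) = \E[X_i \mathds{1}_{X_i \ge s}]$ still needs the factorization $\E[\mathds{1}_{H_i}\, g(S'_i)] = \E[\mathds{1}_{H_i}]\,\E[g(S'_i)]$, i.e.\ exactly the independence you have not established. The missing idea, which is the crux of the paper's proof, is to condition on the \emph{unordered multiset} of values in the pool and observe that $H_i$ (and in fact the entire sequence of past buy/sell decisions) is a function only of the \emph{relative ranks} within the pool and of auxiliary randomness, whereas $S'_i$ is the value at a uniformly random label; by exchangeability, the rank $r(L)$ of a uniformly random label $L$ is independent of the rank permutation $r$ itself, hence $S'_i$ is independent of $H_i$ given the multiset. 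Without this rank-vs.-value decoupling argument, the proof is incomplete.
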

\begin{proof}
We simply prove that the expected gain of the two algorithms is the same in each period, i.e., $\E(\ALG^{s*}(i))=\E(\ALG^s(i))$. This is easy to  see for period $i=1$. For $i>1$, note first that $S'_i$ and $X_i$ are i.i.d. random variables. This is because $\{S_1,X_1,\ldots,X_{i-1}\}$ is independent of $X_i$.

The critical step is to show that the event that we have the item in period $i$ is independent of the pair $(X_i,S'_i)$. Recall that this is equivalent to the event $\{X_{i-1}<S'_{i-1}\}$, so conditioning on this event might affect the distribution of $S'_i$. This in fact is not the case, since it refers to the relative order within the set $\{S_1,X_1,\ldots,X_{i-1}\}$, which is independent of the value of a uniformly random element of the same set (given that they are i.i.d. random variables). Notice that if we now add the same number $y\sim G$ to all random variables, the relative orders do not change, so the argument also applies to affiliated prices.
\end{proof}

Now, \cref{thm:unknown_dist} is a direct application of \cref{lem:single_sample_alg}. If we do not know the distribution, we can  skip the first period and use $\ALG^{s*}$ in $X_2,\ldots,X_n$, treating $X_1$ as a sample. This yields an expected profit of $\frac{n-2}{4}\cdot\E(|X_1-X_2|)$, which  by \cref{lem:iid_upper_bound_opt} is exactly $\frac{1}{2}\cdot \frac{n-2}{n-1} \cdot\E(\OPT)$.

\subsection{More Than One Item}
In this section we analyze a variant of the main model where we are allowed to store up to $k$ copies of the item and the prices are independent and are presented in uniformly random order. It is clear that, in hindsight, the optimal strategy always buy $k$ items or sell all $k$ items. We prove the optimal online strategy also presents this behavior, which implies that this variant is equivalent to the single item case.
\begin{lemma}
\label{lem:more_than_one_item}
If we are allowed to store $k$ items at any time, then the optimal online strategy always trades with all $k$ items, i.e. in a single period either the algorithm buys $k$ items or it sells all $k$ items.  
\end{lemma}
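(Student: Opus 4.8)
The plan is to argue by an exchange/interval argument that any optimal online policy can be assumed to be ``all-or-nothing'' in each period. The key structural fact to exploit is that, because there is no budget constraint and the reward is linear in the number of units traded, the only thing that matters for future decisions is how many units $s \in \{0,1,\dots,k\}$ the algorithm currently holds, together with the history of observed prices. So the state space is $\{0,\dots,k\} \times (\text{history})$, and I want to show the optimal value function $V_t(s, h)$ — the optimal expected future profit from period $t$ onward given $s$ units in stock and history $h$ — is, as a function of $s$, either convex or at least has the property that its maximizing trade from any state $s$ is to jump to $0$ or to $k$.

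First I would set up the dynamic program: in period $t$ with price $p = X_{\sigma(t)}$, holding $s$ units, the algorithm may move to any $s' \in \{0,\dots,k\}$, collecting immediate reward $(s - s')\cdot p$ (positive if selling, $s'>s$ means buying), and then continues optimally, so
\[
V_t(s,h) \;=\; \max_{0 \le s' \le k}\; \Big( (s - s')\, p \;+\; \E\big[\, V_{t+1}(s', h') \mid h \,\big] \Big),
\]
with $V_{n+1}(s,\cdot) = 0$ and the boundary condition that in period $n$ one sells everything. The crucial claim, proved by backward induction on $t$, is that $W_t(s,h) := \E[V_{t+1}(s,h')\mid h]$ is a \emph{convex} function of $s$ for every fixed $h$ (interpreting $s$ as ranging over the integers $0,\dots,k$, convexity meaning $W_t(s+1) - W_t(s)$ is nondecreasing in $s$). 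Given convexity of $W_t$, the objective $(s-s')p + W_t(s')$ is convex in $s'$, and a convex function on the interval $\{0,\dots,k\}$ attains its maximum at an endpoint; hence the optimal $s'$ is $0$ or $k$, which is exactly the all-or-nothing conclusion. It then remains to propagate convexity: $V_t(s,h)$ equals $sp$ plus the maximum over $s'$ of $(-s'p + W_t(s'))$, and the latter max does not depend on $s$, so $V_t(\cdot,h)$ is affine (in particular convex) in $s$; taking conditional expectation over $h'$ preserves convexity, giving convexity of $W_{t-1}$, and the base case $W_n$ is convex because the period-$n$ ``sell everything'' rule makes $V_n(s,h) = s\cdot X_{\sigma(n)}$ linear in $s$.

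The main obstacle is making the convexity induction fully rigorous at the boundaries and with ties, and in particular being careful that we really may \emph{restrict} to policies hitting only $\{0,k\}$: the argument above shows an optimal \emph{value} is achieved by such a move, but one should note that when $W_t$ is affine (the generic case after one step) \emph{every} $s'$ is optimal, so we are free to pick the endpoint without loss — this is where the footnote-style tie-breaking remark in \cref{obs:opt_characterization} has an analogue. A secondary point to handle cleanly is that the history $h$ here includes the random permutation information revealed so far; since prices are independent and the permutation is uniform, the conditional distribution of $h'$ given $h$ is well-defined and the expectation step is legitimate. Once all-or-nothing is established, the final sentence of the reduction is immediate: an all-or-nothing online policy on $k$ items is, up to the scaling factor $k$, a single-item policy, and likewise for $\OPT$ (which is all-or-nothing in hindsight for the same linearity reason), so $\E(\ALG_T)/\E(\OPT)$ is unchanged and the single-item bounds of \cref{thm:iid_half_approx}, \cref{thm:random-order}, and the matching lower bounds carry over verbatim.
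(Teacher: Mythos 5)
Your proof is correct and mirrors the paper's argument almost exactly: both proceed by backward induction on the dynamic program to show that the continuation value is an affine function of the current holding $s$, so the one-step maximization over the next holding $s'$ is of an affine objective on $\{0,\dots,k\}$ and is therefore attained at $s'=0$ or $s'=k$. The only additions in your version are the convexity framing (which, as you note, collapses to affinity anyway) and the explicit tie-breaking remark for the degenerate case where the objective is constant in $s'$, a point the paper leaves implicit.
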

\begin{proof}
Let $V_{t,s}$ be the random variable denoting the gain of the optimal algorithm that plays only in periods from $t$ to $n$ and in the beginning of the period $t$ it has $s$ items and its disposal. Recall, that by optimal strategy we refer to the strategy that maximizes the expectation. For instance, in the above notation we have that
$$\E(\OPT) = E(V_{1, 0}).$$
In the rest part of the proof, we will derive a recursive relation that characterizes all possible moves of any algorithm starting at period $t$ with $s$ items and argue that the expected gains is always maximized at the extremes: moves that either sell all $s$ items or buy the remaining $k - s$ items.

Let us begin with the recurrence for the last period $n$. Since this is the last period, the optimal strategy always sells all items at price $X_{\sigma(n)}$. Conditioned on the random variables $X_{\sigma(1)}, \ldots, X_{\sigma(n - 1)}$ we have that:
\begin{align*}
\E( V_{n, s} \mid  X_{\sigma(1)}, \ldots, X_{\sigma(n - 1)}) &= \int s \cdot X_{\sigma(t)} \;\mathrm{d}P(X_{\sigma(1)}, \ldots, X_{\sigma(n - 1)})\\
&= s \cdot \int X_{\sigma(t)} \;\mathrm{d}P(X_{\sigma(1)}, \ldots, X_{\sigma(n - 1)}),
\end{align*}
where we use $P(X_{\sigma(1)}, \ldots, X_{\sigma(n - 1)})$ to denote the conditional probability distribution of $X_{\sigma(n)}$ given $X_{\sigma(1)}, \ldots, X_{\sigma(n - 1)}$. Let us conclude here that $\E( V_{n, s} \mid  X_{\sigma(1)}, \ldots, X_{\sigma(n - 1)})$ is a linear function of $s$.

For any other period $1 \le t \le n - 1$ the algorithm can buy or sell depending on the number $s$ of items it has. Iterating over all possible choices $s'$ of the number of items the algorithm will hold in the period $t + 1$ we obtain
\begin{align*}
  &\E( V_{t,s} \mid X_{\sigma(1)}, \ldots, X_{\sigma(n - 1)} ) \\
  =\;&\int \max_{0 \le s' \le k}( (s - s') X_{\sigma(t)} + \E(V_{t + 1, s'} \mid X_{\sigma(1)}, \ldots, X_{\sigma(t)}) )  \;\mathrm{d}P(X_{\sigma(1)}, \ldots, X_{\sigma(t - 1)}),  
\end{align*}
where we use $P(X_{\sigma(1)}, \ldots, X_{\sigma(t - 1)})$ to denote the conditional probability distribution of $X_{\sigma(t)}$ given $X_{\sigma(1)}, \ldots, X_{\sigma(t - 1)}$. 
Pulling out the constant $s\cdot X_{\sigma(t)}$ before the max function and by the linearity of the integral we get that
\begin{align}
&\phantom{= =} \int \max_{0 \le s' \le k}( (s - s') X_{\sigma(t)} + \E(V_{t + 1, s'} \mid X_{\sigma(1)}, \ldots, X_{\sigma(t)}) )  \;\mathrm{d}P(X_{\sigma(1)}, \ldots, X_{\sigma(t - 1)}) \notag\\
&= \int s\cdot X_{\sigma(t)} + \max_{0 \le s' \le k}( -s' \cdot X_{\sigma(t)} + \E(V_{t + 1, s'} \mid X_{\sigma(1)}, \ldots, X_{\sigma(t)}) )  \;\mathrm{d}P(X_{\sigma(1)}, \ldots, X_{\sigma(t - 1)}) \notag\\
&= s\int X_{\sigma(t)} \;\mathrm{d}P(X_{\sigma(1)}, \ldots, X_{\sigma(t - 1)})\notag \\
&\quad \quad \qquad + \int \max_{0 \le s' \le k}( -s' \cdot X_{\sigma(t)} + \E(V_{t + 1, s'} \mid X_{\sigma(1)}, \ldots, X_{\sigma(t)}) ) \;\mathrm{d}P(X_{\sigma(1)}, \ldots, X_{\sigma(t - 1)}). \label{eq:8}
\end{align}
We observe that the last expression is a linear function of $s$ assuming that the realizations of $X_{\sigma(1)}, \ldots, X_{\sigma(t - 1)}$  as well as $t$ are given. This combined with the base case $t = n$ proves that for any $t \in [n]$ the conditional expectation
$$\E( V_{t,s} \mid X_{\sigma(1)}, \ldots, X_{\sigma(n - 1)} )$$ is a linear function of $s$.
Applying this observation to \cref{eq:8} we get that, for any $t \in [n]$ the term inside the integral
$$ \int \max_{0 \le s' \le k}( -s' \cdot X_{\sigma(t)} + \E(V_{t + 1, s'} \mid X_{\sigma(1)}, \ldots, X_{\sigma(t)}) )  \;\mathrm{d}P(X_{\sigma(1)}, \ldots, X_{\sigma(t - 1)})$$
is linear in $s'$ and as a such it takes the maximum in either $s' = 0$ or $s' = k$. This proves that the optimal strategy of any algorithm always requires selling all the items or buying the number of items that reaches the limit $k$. Given the fact that the algorithm starts with $0$ items the lemma follows. 
\end{proof}

\subsection{Budgeted Version with Fractional Purchase and Re-investment of Gains}

A natural extension of our model is to allow, as in a stock market, the transaction of any fraction of the item; and at a given period, only limit ourselves by the budget we have in that period. More precisely, in this section we consider the following model. We are given a sequence of positive prices $X_{\sigma(1)},\ldots,X_{\sigma(n)}$ generated in the same way as in our basic model. In period $i$ we have a state given by a pair $(S_i,B_i)$, where $S_i\geq 0$ is the (possibly fractional) number of stocks we hold in period $i$, and $B_i\geq 0$ is our budget in period $i$. We start with $(S_1,B_1)=(0,1)$, meaning that we start with $0$ stocks and a budget of $1$ unit of money. In period $i$ we can buy (or sell) any number of stocks $s\in [-S_i,B_i/X_{\sigma(i)}]$, which determines the state in the next period as $(S_{i+1},B_{i+1})=(S_i+s, B_i - s\cdot X_{\sigma(i)})$. Our objective is to maximize the money we have at the end of the process, i.e., $B_{n+1}$. 

Denote by $\OPT_F$ the gains of the prophet in this model, i.e., the maximum possible profit in hindsight. It is easy to see that the prophet either sells all stocks or spends all money, i.e., $s\in \{-S_i,B_i/X_{\sigma(i)}\}$. Thus, if $I$ denotes the set of periods where the prophet buys, and $J$ the periods where the prophet sells, we have that $\OPT_F=\prod_{j\in J} X_{\sigma(j)}/\prod_{i\in I} X_{\sigma(i)}$. This formula immediately implies that, in this model, the prophet behaves as the prophet of the basic model with prices $X'_{\sigma(i)}=\log(X_{\sigma(i)})$.
In turn, this implies that the prophet has an expected profit that grows exponentially in $n$.
\begin{observation}
By  Jensen's inequality and~\cref{lemma:opt-est-ran-ord}, we have that
\[ \E(\OPT_F)\geq \exp(\E(\log \OPT_F)) = \exp\left(\frac{n-1}{2}\cdot \E(|\log X_{\sigma(1)} - \log X_{\sigma(2)}|)\right). \]
\end{observation}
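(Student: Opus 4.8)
The approach is short, since the discussion preceding the statement already does the heavy lifting. Having established that the offline fractional optimum uses only the extreme moves $s\in\{-S_i,\,B_i/X_{\sigma(i)}\}$, and hence that $\OPT_F=\prod_{j\in J}X_{\sigma(j)}\big/\prod_{i\in I}X_{\sigma(i)}$ for the best admissible (interleaved) buy/sell schedule $(I,J)$, I would take logarithms: since all prices are positive, $\log$ is well defined and strictly increasing, so it commutes with the finite $\max$ over schedules and
\[
\log\OPT_F=\max_{(I,J)}\Big(\textstyle\sum_{j\in J}\log X_{\sigma(j)}-\sum_{i\in I}\log X_{\sigma(i)}\Big),
\]
which is exactly the offline optimum of the basic random-order model run on the prices $\log X_1,\dots,\log X_n$. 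These transformed prices are still independent (being functions of the independent $X_i$), are shown in the same uniformly random order $\sigma$, and remain absolutely continuous, so \cref{lemma:opt-est-ran-ord} applies verbatim and gives
\[
\E(\log\OPT_F)=\frac{n-1}{2}\cdot\E\big(|\log X_{\sigma(1)}-\log X_{\sigma(2)}|\big).
\]

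It remains to apply Jensen's inequality to the convex map $t\mapsto e^{t}$. Note first that the do-nothing strategy already yields terminal budget $1$, so $\OPT_F\ge 1$ and $\log\OPT_F\ge 0$; hence $\E(\log\OPT_F)$ is well defined in $[0,\infty]$, and Jensen gives $\E(\OPT_F)=\E\big(e^{\log\OPT_F}\big)\ge e^{\E(\log\OPT_F)}$. Substituting the displayed value of $\E(\log\OPT_F)$ yields
\[
\E(\OPT_F)\ \ge\ \exp\!\Big(\frac{n-1}{2}\cdot\E\big(|\log X_{\sigma(1)}-\log X_{\sigma(2)}|\big)\Big),
\]
which is the claim (together with the equality $\exp(\E(\log\OPT_F))=\exp(\tfrac{n-1}{2}\E(|\log X_{\sigma(1)}-\log X_{\sigma(2)}|))$ already recorded in the statement).

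There is essentially no obstacle here; the only step that genuinely needs an argument is the ``extreme moves are optimal'' claim underlying the product formula for $\OPT_F$, and it can be proved exactly along the lines of \cref{lem:more_than_one_item}: for a fixed price realization, backward induction shows that the value function $W_i(S,B)$ (maximal terminal budget from state $(S,B)$ at period $i$) is a nonnegative linear function of $(S,B)$, so $s\mapsto W_{i+1}(S+s,\,B-sX_{\sigma(i)})$ is linear on $[-S_i,\,B_i/X_{\sigma(i)}]$ and attains its maximum at an endpoint; unrolling from $(S_1,B_1)=(0,1)$ gives the stated ratio of products. Once that is in place, the observation is a one-line combination of \cref{lemma:opt-est-ran-ord} and Jensen.
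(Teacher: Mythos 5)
Your proof is correct and follows the same route the paper intends: read off $\log\OPT_F$ as the basic-model offline optimum on the log-prices, invoke \cref{lemma:opt-est-ran-ord}, and finish with Jensen applied to $t\mapsto e^t$. Your added backward-induction justification (linearity of the terminal-budget value function in $(S,B)$, hence bang-bang optimality at $s\in\{-S_i,B_i/X_{\sigma(i)}\}$) correctly fills in the "it is easy to see" step that the paper leaves to the reader, and your observation that $\OPT_F\ge 1$ guards well-definedness of $\E(\log\OPT_F)$; both are fine, minor elaborations rather than a different approach.
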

It is clear from this that there is no hope for a constant approximation against $\E(\OPT_F)$. Instead, we can approximate the expected growth rate of $\OPT_F$, i.e. $\E(\log \OPT_F)$. Indeed, by considering prices $X'_{\sigma(i)}=\log(X_{\sigma(i)})$, we can apply our algorithms from \cref{sec:iid,sec:rand-order} to obtain approximately optimal expected growth rates. In this model, for $T\in \R_+$ we denote by $\ALG_T$ the algorithm that sells all stocks if $X_{\sigma(i)}\geq T$ or $i=n$, and spends the whole budget when $X_{\sigma(i)}<T$ and $i<n$.

\begin{corollary}
If the prices are i.i.d. and $T$ is the median of the distribution, then $\E(\log \ALG_T)\geq \frac{1}{2}\E(\log \OPT_F)$.
\end{corollary}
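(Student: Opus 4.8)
The plan is to deduce this from \cref{thm:iid_half_approx} by the logarithmic change of variables that the preceding discussion already sets up. Concretely, I would set $X'_i := \log X_i$ for $1 \le i \le n$. Since the $X_i$ are positive, i.i.d., and (w.l.o.g.) absolutely continuous, the $X'_i$ are i.i.d.\ absolutely continuous prices for an instance of the basic trading-prophet problem of \cref{sec:iid}; and because $\log$ is strictly increasing, $\PP(X_i < T) = \PP(X'_i < \log T)$, so $T$ being the median of the price distribution is \emph{equivalent} to $T' := \log T$ being the median of the distribution of $X'_i$. This is the only place the median hypothesis is used.

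Next I would establish two pointwise identities, valid for every realization of the prices: $\log \OPT_F = \OPT'$ and $\log \ALG_T = \ALG'_{T'}$, where $\OPT'$ and $\ALG'_{T'}$ denote the optimal offline profit and the single-threshold algorithm with threshold $T'$ for the \emph{basic} model run on the prices $X'_1,\dots,X'_n$. For the prophet this follows from the formula $\OPT_F = \prod_{j\in J} X_{j}/\prod_{i\in I} X_{i}$ recorded above ($I$, $J$ being the buying and selling periods of the offline optimum): taking logs turns this product of ratios into the basic-model profit $\sum_{j\in J} X'_j - \sum_{i\in I} X'_i$, and the feasible choices of $(I,J)$ are literally the same in both models, since in both only the extreme transactions matter (spend-all / sell-all in the budgeted model, equivalently buy-the-item / sell-the-item in the basic model). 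For the algorithm one checks that the binary state of $\ALG_T$ — ``all cash'' versus ``all stock'' — evolves exactly as the ``do not hold'' versus ``hold'' state of $\ALG'_{T'}$: it switches from cash to stock precisely when $X_i < T$ (i.e.\ $X'_i < T'$), and from stock to cash precisely when $X_i \ge T$ or $i = n$; when a prescribed transaction is vacuous (selling with no stock, or spending an empty budget) nothing happens, matching the basic-model rule exactly. Hence the realized $\log \ALG_T = \log B_{n+1}$ equals the realized profit of $\ALG'_{T'}$.

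Finally I would take expectations and invoke \cref{thm:iid_half_approx}, which applies since $T'$ is the median of the (absolutely continuous) distribution of $X'_i$:
\[
\E(\log \ALG_T) = \E(\ALG'_{T'}) \;\geq\; \frac{1}{2}\,\E(\OPT') \;=\; \frac{1}{2}\,\E(\log \OPT_F).
\]
The main obstacle is essentially bookkeeping: one must be careful that the state dynamics of the \emph{fractional} algorithm and of $\OPT_F$ genuinely collapse to the binary dynamics of the basic model. Once one accepts the already-stated observation that the offline optimum performs only extreme transactions (and $\ALG_T$ is defined to do so), the reduction is exact rather than approximate, so no constant is lost and the factor $1/2$ passes through verbatim; the only other point to watch is the absolute continuity of the $X'_i$, which is inherited from that of the $X_i$.
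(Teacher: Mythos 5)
Your proposal is correct and takes essentially the same route the paper intends: the paper sketches this corollary by noting that one should pass to prices $X'_i = \log X_i$ and apply the i.i.d.\ result, and you have simply filled in the details (the median transforms to $\log T$, the budgeted dynamics collapse pointwise to the binary basic-model dynamics, so $\log \ALG_T = \ALG'_{\log T}$ and $\log \OPT_F = \OPT'$ realization by realization). No divergence from the paper's argument and no gap.
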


\begin{corollary}
If the prices are independent and presented in uniformly random order, then there exists a threshold $T$ such that $\E(\log \ALG_T)\geq \frac{1}{16}\E(\log \OPT_F)$.
\end{corollary}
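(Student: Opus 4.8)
The plan is to reduce the claim to \cref{thm:random-order} for the basic trading-prophet problem by passing to logarithms, exactly along the lines of the paragraph preceding the corollary. Set $X'_i := \log X_i$ for each $i$; since the $X_i$ are independent and (strictly) positive, the $X'_i$ are independent real-valued random variables, presented in the same uniformly random order $\sigma$. Write $\OPT'$ and $\ALG'_{T'}$ for the optimal offline profit and the single-threshold algorithm with threshold $T' \in \R$ in the \emph{basic} model run on the instance $(X'_1,\dots,X'_n)$.

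First I would verify that $\log \OPT_F = \OPT'$ as random variables. As already observed, the prophet in the budgeted model is always all-in, so a feasible play is described by a set $I$ of ``buy'' periods and a set $J$ of ``sell'' periods subject to exactly the alternation constraint of the basic single-item model (one can only buy while holding cash and sell while holding stock), with payoff $\prod_{j\in J} X_{\sigma(j)} \big/ \prod_{i\in I} X_{\sigma(i)}$. Taking logs turns this into $\sum_{j\in J} X'_{\sigma(j)} - \sum_{i\in I} X'_{\sigma(i)}$, and maximizing over the identical feasible sets gives $\log \OPT_F = \OPT'$; in particular $\log \OPT_F \ge 0$, consistent with $\OPT_F \ge 1$. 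Next I would check that $\log \ALG_T = \ALG'_{\log T}$: the budgeted $\ALG_T$ spends its whole budget exactly when $X_{\sigma(i)} < T$ and $i < n$, i.e.\ when $X'_{\sigma(i)} < \log T$, and liquidates exactly when $X_{\sigma(i)} \ge T$ or $i = n$, which is precisely the behavior of $\ALG'_{\log T}$ (including the forced sell in period $n$); since $\ALG_T$ is also all-in, the same product-to-sum identity yields $\log \ALG_T = \ALG'_{\log T}$.

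Given these two identities, I would apply \cref{thm:random-order} to the instance $(X'_1,\dots,X'_n)$ — its absolute-continuity hypothesis is inherited, since pushing an absolutely continuous distribution forward by $\log$ keeps it absolutely continuous, and otherwise one invokes the standard perturbation argument from \cref{sec:rand-order} — obtaining a threshold $T^\star \in \R$ with $\E(\ALG'_{T^\star}) \ge \tfrac{1}{16}\E(\OPT')$. Setting $T := \exp(T^\star) > 0$ then gives
\[
\E(\log \ALG_T) = \E(\ALG'_{T^\star}) \ge \frac{1}{16}\,\E(\OPT') = \frac{1}{16}\,\E(\log \OPT_F),
\]
as desired. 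No Jensen step is needed here, since both sides of the target inequality are already expectations of logarithms, so the correspondence is linear and exact.

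The only real work is the bookkeeping in the second paragraph: confirming that the feasible buy/sell patterns of the budgeted prophet coincide with those of the basic model, and that the threshold rule translates cleanly through the logarithm, including the forced liquidation in period $n$. Everything after that is a direct invocation of \cref{thm:random-order} (equivalently, of \cref{lemma:opt-est-ran-ord} and \cref{lemma:alg-est-ran-ord} together with the threshold chosen in \cref{sec:uniform}). I do not expect a genuine obstacle — this is a corollary in the literal sense.
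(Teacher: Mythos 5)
Your proof is correct and follows exactly the reduction the paper sketches (and leaves implicit): the paper introduces the log-price change of variables in the lines just before the corollary, and your careful bookkeeping — identifying the all-in feasible plays of $\OPT_F$ with the basic model's plays, matching the budgeted $\ALG_T$ to $\ALG'_{\log T}$ including the forced liquidation at period $n$, and then invoking \cref{thm:random-order} on $(X'_1,\dots,X'_n)$ — is precisely the intended argument. Nothing is missing; the paper simply does not spell out these steps because it regards them as immediate.
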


\begin{corollary}
There is a constant $C$ so that if prices are independent and presented in uniformly random order, then there exists a threshold $T$ such that
$\E(\log \ALG_T)\geq \frac{1}{2\cdot (1+C/n)}\E(\log \OPT_F)$.
\end{corollary}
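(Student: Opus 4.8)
The plan is to derive the corollary from \cref{thm:random-order-v2} through the logarithmic change of variables announced just before the statement. Set $X'_i := \log X_i$ for each $i$; these are independent real-valued random variables (we may take them absolutely continuous, perturbing in log-space, which preserves positivity of the $X_i$) and are presented in the same uniformly random order $\sigma$. Denote by $\OPT'$ and $\ALG'_{T'}$ the hindsight-optimal profit and the profit of the single-threshold rule with threshold $T'$ in the \emph{basic} random-order model of \cref{sec:rand-order}, instantiated with the distributions of $X'_1,\dots,X'_n$.

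I would then record two pointwise identities valid for every realization of the prices and of $\sigma$. For the prophet: as noted before the corollary, the fractional prophet never buys in the last period and always liquidates, so $\OPT_F=\prod_{j\in J}X_{\sigma(j)}\big/\prod_{i\in I}X_{\sigma(i)}$ for alternating sets $I,J$ with $|I|=|J|$; taking logs, $\log\OPT_F=\sum_{j\in J}X'_{\sigma(j)}-\sum_{i\in I}X'_{\sigma(i)}$, and since $\log$ preserves the order of the prices the optimal such $I,J$ are exactly the buy/sell periods described by \cref{obs:opt_characterization} for $(X'_{\sigma(i)})_i$, so the right-hand side is precisely the quantity ``$\OPT$'' analyzed in \cref{lemma:opt-est-ran-ord}. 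For the algorithm: by construction $\ALG_T$ always holds either all cash or all stock and liquidates in period $n$, so its budget telescopes multiplicatively and $\log\ALG_T=\sum_{j\in J'}X'_{\sigma(j)}-\sum_{i\in I'}X'_{\sigma(i)}$ over the periods $I'$ where it spends all and $J'$ where it sells all; since it spends exactly when $X_{\sigma(i)}<T$ and $i<n$ and sells exactly when $X_{\sigma(i)}\ge T$ or $i=n$, i.e.\ when $X'_{\sigma(i)}<\log T$ resp.\ $X'_{\sigma(i)}\ge\log T$, these are the buy/sell periods of the basic-model rule with threshold $\log T$, so $\log\ALG_T$ is exactly the quantity ``$\ALG_{\log T}$'' analyzed in \cref{lemma:alg-est-ran-ord}. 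Taking expectations and applying those two lemmas to $(X'_i)$ gives $\E(\log\OPT_F)=\tfrac{n-1}{2}\E(|X'_{\sigma(1)}-X'_{\sigma(2)}|)$ and $\E(\log\ALG_T)=\tfrac{n-1}{2}\E(|X'_{\sigma(1)}-X'_{\sigma(2)}|\cdot\mathds{1}_{\log T\in[\min(X'_{\sigma(1)},X'_{\sigma(2)}),\max(X'_{\sigma(1)},X'_{\sigma(2)})]})$.

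Finally, the proof of \cref{thm:random-order-v2} shows exactly that, for the median $T'$ of the mixture of $X'_1,\dots,X'_n$, the ratio of these two quantities $S_1$ and $S_2$ satisfies $2(1+C/n)\ge S_1/S_2$ for a universal constant $C$; choosing the budgeted threshold $T:=\exp(T')$ then gives $2(1+C/n)\ge \E(\log\OPT_F)/\E(\log\ALG_T)$, which rearranges to the claim. The step to watch is the second identity: one must confirm that $\ALG_T$'s state really does alternate between ``all cash'' and ``all stock'' and that its forced period-$n$ liquidation matches the period-$n$ convention of the basic-model rule, so that the index sets $I',J'$ coincide with those of the transformed basic-model algorithm with no boundary discrepancy. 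A minor related subtlety is that $X'_i$ can be negative (whenever $X_i<1$), so when appealing to \cref{obs:opt_characterization,lemma:opt-est-ran-ord} one should keep in mind that they compute the balanced-trade optimum, which is exactly $\log\OPT_F$ here precisely because the fractional prophet is itself balanced; beyond this, the argument is a routine substitution into \cref{thm:random-order-v2}.
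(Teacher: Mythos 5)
Your proposal is correct and follows exactly the reduction the paper intends: transform to $X'_i=\log X_i$, observe that both $\log\OPT_F$ and $\log\ALG_T$ telescope into the basic-model profit of the prophet and of the threshold rule $\ALG_{\log T}$ on the transformed prices (the pointwise identities you record), and then invoke \cref{thm:random-order-v2}. The paper leaves these identities implicit, so your write-up is a faithful fleshing-out of the same argument rather than a different route.
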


\subsection{Multi-armed--bandit version}
Consider the following model we call $k$-armed bandit trading prophet. We are able to trade $k$ different kinds of items, each with a sequence of prices generated independently as in the basic model, but we can hold at most one item, across all kinds. More precisely, let us denote $F^{i}_{1}, \ldots, F^{i}_{n}$ the initial $n$ distributions of an item $1 \le i \le k$ which are all independent. Let $X^{i}_{j} \sim F^{i}_{j}$ for $1 \le j \le n$ denote $j$-th price of $i$-th item. On period $i$ we observe $k$ different prices $X^1_{\sigma^{1}(i)}, \ldots, X^k_{\sigma^{k}(i)}$ where $\sigma^{j} : \{1, \ldots, n \} \rightarrow \{1, \ldots, n \}$ for $1 \le j \le k$ are $k$ permutations chosen independently from the uniform distribution over all permutations. Like in the basic model, in each period we have two possibilities: Either we have an item $j$ for $1 \le j \le k$, and in this case we can sell this item with the price $X^{j}_{\sigma^{j}(i)}$; or we do not have \textit{any} item, and in this case we can buy any item $j$ for $1 \le j \le k$ for the price $X^{j}_{\sigma^{j}(i)}$. On top of that, we assume that if in a given period we have an item, then we can sell it and within the same period buy a different one.\footnote{Note that this assumption makes no difference in the original model, since it would be equivalent to simply keeping the item until the next period.} 
We start at period $1$ with no item and, as before, we want to design a decision rule that maximizes the expected profit obtained after $n$ periods.

Consider now an algorithm that in the beginning of an execution chooses one item uniformly at random, and next trades only on this item applying the threshold decision rule from \cref{sec:rand-order}. In the following, we will show that this algorithm carries the results stated in \cref{thm:random-order} and \cref{thm:random-order-v2} with $\frac{1}{k}$ multiplicative loss, against the prophet, 
The key observation involves understanding the strategy of the prophet in the multi-armed bandit version.

\begin{lemma}\label{lem:k-bandit-opt}
The expected gain of the optimal strategy in hindsight for the $k$-armed bandit trading-prophet problem is
$$\E(\OPT) = (n-1)\cdot \E\left(\max_{i \in [k]} \left[X^{i}_{\sigma^i(1)} - X^{i}_{\sigma^i(2)}\right]_+ \right), \text{ where } [\cdot]_+:=\max\{\cdot,0\}.$$
\end{lemma}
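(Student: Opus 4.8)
The strategy is to mimic the analysis of \cref{lemma:opt-est-ran-ord}, but now tracking which of the $k$ item types the prophet chooses to hold between consecutive periods. First I would fix the permutations $\sigma^1,\dots,\sigma^k$ and the realizations of all prices, and describe the optimal offline trajectory: in each period the prophet holds at most one item (of some type), and the key structural fact is that, because the $k$ streams are completely decoupled except for the shared capacity constraint, the prophet's decision reduces to choosing, for each ``slot'' between a buy and the next sell, which single type to ride. So I would argue a local-exchange characterization analogous to \cref{obs:opt_characterization}: between any two consecutive periods $i$ and $i+1$, the prophet's incremental gain from holding item $j$ across that gap is $X^j_{\sigma^j(i+1)} - X^j_{\sigma^j(i)}$ if this is positive, and otherwise it is strictly better to hold nothing; and since there is no cost to switching types within a period, these per-gap decisions are independent of one another.

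The main step is to make that decoupling rigorous. Define, for each gap $i \in \{1,\dots,n-1\}$, the quantity $G_i := \max_{j\in[k]} \big[X^j_{\sigma^j(i+1)} - X^j_{\sigma^j(i)}\big]_+$, the best profit obtainable by buying some type at period $i$ and selling it at period $i+1$. I claim $\OPT = \sum_{i=1}^{n-1} G_i$ for every realization. The inequality $\OPT \ge \sum_i G_i$ follows by exhibiting a feasible offline policy: in each period sell whatever is held and, if $G_i>0$, immediately buy the maximizing type for gap $i$ (using the within-period resale assumption in the footnote). For the reverse inequality $\OPT \le \sum_i G_i$, I would take an arbitrary offline policy, write its total profit as a telescoping sum of per-period net gains, and observe that the contribution attributable to the interval during which a fixed copy of item $j$ is held from period $a$ to period $b$ equals $X^j_{\sigma^j(b)} - X^j_{\sigma^j(a)} = \sum_{i=a}^{b-1}\big(X^j_{\sigma^j(i+1)}-X^j_{\sigma^j(i)}\big)$, and each summand is at most $G_i$ (and each gap $i$ is used by at most one held copy, since capacity is one); hence the total is at most $\sum_i G_i$. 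This gives $\OPT = \sum_{i=1}^{n-1} G_i$ pointwise.

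Finally I would take expectations. By linearity, $\E(\OPT) = \sum_{i=1}^{n-1}\E(G_i)$. For each fixed $i$, the joint law of the tuple $(X^j_{\sigma^j(i)}, X^j_{\sigma^j(i+1)})_{j\in[k]}$ is exactly the joint law of $(X^j_{\sigma^j(1)}, X^j_{\sigma^j(2)})_{j\in[k]}$, because each $\sigma^j$ is a uniformly random permutation and the streams are mutually independent; so $\E(G_i) = \E\big(\max_{j\in[k]}[X^j_{\sigma^j(1)} - X^j_{\sigma^j(2)}]_+\big)$ for every $i$, and summing the $n-1$ identical terms yields the claimed formula.

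\textbf{Main obstacle.} The delicate point is the reverse inequality $\OPT \le \sum_i G_i$: one must argue carefully that in an optimal (or arbitrary) offline policy the ``held-interval'' decomposition is valid — i.e., that the capacity-one constraint really does force the per-gap contributions to be disjointly chargeable and that within-period buy-then-sell chains (allowed by the footnote) do not create extra profit beyond what the $G_i$ already capture. Getting the bookkeeping of telescoping sums across possibly-many switches exactly right, while invoking \cref{obs:opt_characterization}-style local optimality to rule out holding a type across a gap where its price drops, is where the real work lies; everything after the pointwise identity is routine.
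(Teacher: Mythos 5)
Your proposal is correct and follows essentially the same route as the paper's proof: the paper also decomposes $\OPT$ pointwise into the per-gap quantities $G_i = \max_{j}[X^j_{\sigma^j(i+1)}-X^j_{\sigma^j(i)}]_+$, argues the upper bound from the unit-capacity constraint and the matching lower bound from the sell-then-rebuy policy permitted by the footnote, and then finishes with linearity of expectation and the exchangeability of the $\sigma^j$. You spell out the telescoping bookkeeping for the upper bound more explicitly than the paper (which states it in one sentence), but the underlying argument is identical.
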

\begin{proof}
Consider a difference in prices between period $i$ and $i + 1$ for $1 \le i \le n - 1$. On one hand, the prophet can gain at most $\max_{j \in [k]} \big( \big(X^{j}_{\sigma^{j}(i + 1)} - X^{j}_{\sigma^{j}(i)} \big) \mathds{1}_{X^{j}_{\sigma^{j}(i + 1)} > X^{j}_{\sigma^{j}(i)}} \big)$, since he can possess only one item during this period. On the other hand, he can gain exactly this value, since performing this transaction has no effects on next periods as he starts the next period without any item. Thus, we have that
$$\E(\OPT) = \E\left(\sum_{i = 1}^{n - 1} \max_{j \in [k]} 
\left[
X^{j}_{\sigma^{j}(i + 1)} - X^{j}_{\sigma^{j}(i)} \right]_+
\right).$$
Using linearity of the expectation
and the fact that the permutations $\sigma^j$ are independent and uniformly random, we conclude the formula in the statement of the lemma.
\end{proof}

This result allows us to immediately extend \cref{thm:random-order,thm:random-order-v2}.
\begin{theorem}
\label{thm:k_bandit}
There exist algorithms $\ALG_{1}$ and $\ALG_{2}$ that achieve the following approximation for $k$-armed bandit version of trading prophets:
$$\E(\ALG_{1}) \ge \frac{1}{16k} \cdot \E(\OPT);\quad\E(\ALG_{2}) \ge \bigg(\frac{1}{2k} - o(1)\bigg) \cdot \E(\OPT).$$
\end{theorem}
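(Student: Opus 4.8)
The plan is to derive \cref{thm:k_bandit} essentially as a corollary of \cref{lem:k-bandit-opt} together with the single-stream guarantees of \cref{thm:random-order,thm:random-order-v2}, the only genuine loss being a factor $k$ incurred by replacing the pointwise maximum over the $k$ streams by the sum.

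First I would set up per-stream notation: for each $i\in[k]$ let $\OPT^i$ denote the optimal offline profit obtained by trading \emph{only} stream $i$ in the basic random-order model of \cref{sec:rand-order}. By \cref{lemma:opt-est-ran-ord}, $\E(\OPT^i)=\tfrac{n-1}{2}\,\E(|X^i_{\sigma^i(1)}-X^i_{\sigma^i(2)}|)$. Since $\sigma^i(1)$ and $\sigma^i(2)$ are exchangeable, the random variable $X^i_{\sigma^i(1)}-X^i_{\sigma^i(2)}$ is symmetric about $0$, so $\E\big([X^i_{\sigma^i(1)}-X^i_{\sigma^i(2)}]_+\big)=\tfrac12\E(|X^i_{\sigma^i(1)}-X^i_{\sigma^i(2)}|)=\tfrac{1}{n-1}\E(\OPT^i)$. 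Now plug the trivial bound $\max_{i\in[k]}[\,\cdot\,]_+\le\sum_{i\in[k]}[\,\cdot\,]_+$ into \cref{lem:k-bandit-opt} and take expectations:
\[
\E(\OPT)\;\le\;(n-1)\sum_{i=1}^{k}\E\big([X^i_{\sigma^i(1)}-X^i_{\sigma^i(2)}]_+\big)\;=\;\sum_{i=1}^{k}\E(\OPT^i).
\]

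Next I would analyze the algorithms. Both $\ALG_1$ and $\ALG_2$ first pick an index $j\in[k]$ uniformly at random and then run a single-threshold policy on stream $j$ alone; this choice is independent of all prices and all permutations, and conditioned on the value of $j$, stream $j$ is exactly an instance of the basic random-order model. Hence $\E(\ALG_\ell)=\tfrac1k\sum_{j=1}^{k}\E\big(\ALG_\ell^{(j)}\bigm|\text{stream }j\text{ chosen}\big)$ for $\ell\in\{1,2\}$, where $\ALG_\ell^{(j)}$ is the single-stream policy. For $\ALG_1$ I take on stream $j$ the threshold guaranteed by \cref{thm:random-order}, so the conditional expectation is at least $\tfrac1{16}\E(\OPT^j)$, giving $\E(\ALG_1)\ge\tfrac1{16k}\sum_j\E(\OPT^j)\ge\tfrac1{16k}\E(\OPT)$. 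For $\ALG_2$ I take on stream $j$ the median of the mixture distribution $\tfrac1n\sum_i F^j_i$; by \cref{thm:random-order-v2} the conditional expectation is at least $\tfrac{1}{2(1+C/n)}\E(\OPT^j)$ for the universal constant $C$, so $\E(\ALG_2)\ge\tfrac{1}{2k(1+C/n)}\E(\OPT)=\big(\tfrac1{2k}-o(1)\big)\E(\OPT)$ as $n\to\infty$.

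The only step with any content is the inequality $\E(\OPT)\le\sum_i\E(\OPT^i)$, i.e.\ bounding the prophet's per-period gain $\max_i[\,\cdot\,]_+$ by $\sum_i[\,\cdot\,]_+$; everything else is bookkeeping. The point I would be most careful about is the validity of the random-index reduction — that conditioning on the chosen stream leaves both the within-stream product distribution and the uniform random order on that stream undisturbed — but this is immediate since the index is drawn independently of everything. (The factor-$k$ loss here is not wasteful: it is matched by the separately stated $\Omega(k)$ lower bound.)
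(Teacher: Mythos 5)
Your proposal is correct and follows essentially the same route as the paper: a uniformly random choice of stream, the single-stream guarantees of \cref{thm:random-order,thm:random-order-v2}, and the pointwise bound $\max_i[\cdot]_+\le\sum_i[\cdot]_+$ feeding into \cref{lem:k-bandit-opt}. The only cosmetic difference is that you first establish $\E(\OPT)\le\sum_i\E(\OPT^i)$ and then apply the algorithm bound, whereas the paper bounds $\E(\ALG_1)$ from below and then converts the sum of positive parts to the max; these are the same inequalities in the same order of dependence.
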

\begin{proof}

Let $\ALG_{1}$ be an algorithm that first picks the one of $k$ items uniformly and random and then applies strategy \cref{thm:random-order} to prices of the chosen item. 
$$\E(\ALG_{1}) = \sum_{i = 1}^{k}\frac{1}{k}\cdot\E(\ALG^{i}_{1}),$$
where $\E(\ALG^{i}_{1})$ denotes the expected gain of algorithm given by \cref{thm:random-order} applied to prices of $i$-th item.
By \cref{lemma:alg-est-ran-ord}, we have that
\begin{align}\label[ineq]{line:10}
\E(\ALG^{i}_{1}) \ge \frac{1}{16}\cdot\E(\OPT^{i}),
\end{align}
where $\OPT^{i}$ denotes the expectation of the optimal strategy in hindsight applied to prices of $i$-th item. By \cref{lemma:opt-est-ran-ord}, we get
$$\frac{1}{16}\cdot\E(\OPT^{i}) = \frac{1}{16}\frac{n - 1}{2}\cdot\E(|X^{i}_{\sigma^{i}(1)} - X^{i}_{\sigma^{i}(2)}|) = \frac{n - 1}{16}\cdot\E\Big(\big[X^{i}_{\sigma^{i}(1)} - X^{i}_{\sigma^{i}(2)}\big]_{+}\Big). $$
The above inequality plugged into \cref{line:10} yields
$$\E(\ALG_{1}) = \sum_{i = 1}^{k}\frac{1}{k}\cdot\E(\ALG^{i}_{1}) \ge \frac{1}{16k}\cdot\sum_{i=1}^{k}(n - 1)\cdot\E\Big(\big[X^{i}_{\sigma^{i}(1)} - X^{i}_{\sigma^{i}(2)}\big]_{+}\Big).$$
Observe that all random variables inside expectations are non-negative. Thus,
$$\sum_{i=1}^{k}(n - 1)\cdot\E\Big(\big[X^{i}_{\sigma^{i}(1)} - X^{i}_{\sigma^{i}(2)}\big]_{+}\Big) \ge (n-1)\cdot \E\left(\max_{i \in [k]} \left[X^{i}_{\sigma^i(1)} - X^{i}_{\sigma^i(2)}\right]_+ \right) =  \E(\OPT),$$
where the last equality holds by \cref{lem:k-bandit-opt}. Therefore the first part of the theorem is proven. 

To prove the second part it suffices to take an algorithm that settles the threshold with respect to \cref{thm:random-order-v2} instead of \cref{thm:random-order}. Since \cref{lemma:opt-est-ran-ord} and \cref{lem:k-bandit-opt} hold irrespectively of the threshold chosen by the algorithm, thus other parts of the above reasoning are valid and proves that  $\ALG_{2}$ has an approximation factor of $\frac{1}{2k} - o(1)$.
\end{proof}

The natural question is whether there is any algorithm with an approximation factor $o(1/k)$. Not surprisingly, the answer is negative even we restrict to i.i.d. random variables $X_{j}^{i}$.
\begin{lemma}
\label{lem:k_bandit_instance}
For $k\geq 2$, there is an instance of the $k$-armed bandit trading-prophet problem with i.i.d. prices $X_{j}^{i}$ such that for any algorithm $\ALG$ it holds that
$$\E(\ALG) \le \frac{1}{\big(1 - e^{-1/2}\big)k}\cdot \E(\OPT).$$
\end{lemma}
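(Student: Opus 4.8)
The plan is to construct an i.i.d.\ instance in which the per-item ``two-period gain'' $[X^i_{\sigma^i(1)} - X^i_{\sigma^i(2)}]_+$ is heavy-tailed enough that the prophet, who gets to take the maximum over the $k$ independent arms, gains roughly a $\log k$-type or constant-gap advantage, while any online algorithm---committed to holding at most one item---can only ever realize the gain of a single arm. The cleanest way to realize this is with a $0/1$-type (Bernoulli-scaled) distribution: let each $X^i_j$ be, say, equal to a large value $M$ with some small probability $q$ and $0$ otherwise, with $n=2$ so the problem reduces exactly to the two-period picture from~\cref{lemma:opt-est-ran-ord} and~\cref{lem:k-bandit-opt}. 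Then $[X^i_{\sigma^i(1)} - X^i_{\sigma^i(2)}]_+ = M$ precisely when arm $i$ shows $0$ in period~1 and $M$ in period~2, an event of probability $q(1-q)$; otherwise it is $0$. By~\cref{lem:k-bandit-opt}, $\E(\OPT) = M\cdot\PP(\text{at least one arm has the }0,M\text{ pattern}) = M\bigl(1-(1-q(1-q))^k\bigr)$. Choosing $q$ so that $q(1-q) \approx \tfrac{1}{2k}$ (e.g.\ $q$ slightly above $\tfrac{1}{2k}$, or just $q = \tfrac{1}{2k}$ up to lower-order terms) makes $\bigl(1-q(1-q)\bigr)^k \to e^{-1/2}$, so $\E(\OPT) \to M\,(1 - e^{-1/2})$.

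Next I would bound $\E(\ALG)$ for an arbitrary online algorithm. In period~1 the algorithm sees the $k$ first prices $X^i_{\sigma^i(1)}$; the only sensible purchase is to buy some arm showing price $0$ (buying an arm at price $M$ can never profit, since the best it can recoup in period~2 is $M$, and in fact one can argue via a short exchange/backward-induction argument that an optimal online policy never buys at the high price---this is the step to state carefully but it is elementary for $n=2$). Having bought one arm $i^\star$ at price $0$ in period~1, the algorithm then holds exactly one item and in period~2 sells it for $X^{i^\star}_{\sigma^{i^\star}(2)}$, which is $M$ with probability $q$ and $0$ otherwise, independently of period~1. Since the algorithm's choice of which zero-valued arm to buy is made from the period-1 information only, and all arms are i.i.d., the period-2 value of the chosen arm is just a fresh Bernoulli$(q)$ scaled by $M$. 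Hence $\E(\ALG) \le M q \le \tfrac{M}{2k}\cdot(1+o(1))$ (the ``$\le$'' also covering the case where no arm shows $0$ in period~1, in which case $\ALG$ gains nothing). I would make the ``never buy high'' claim rigorous by noting that if the algorithm buys arm $i$ at price $M$ in period~1 its conditional expected final profit is $\E(X^i_{\sigma^i(2)}) - M = Mq - M < 0$, strictly worse than not buying, so an optimal algorithm indeed only buys at price $0$.

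Combining, $\dfrac{\E(\ALG)}{\E(\OPT)} \le \dfrac{Mq}{M(1-(1-q(1-q))^k)}$, and with the choice of $q$ above this tends to $\dfrac{1/(2k)}{1-e^{-1/2}} = \dfrac{1}{(1-e^{-1/2})\,k}\cdot\dfrac{1}{2}\cdot 2$; I should be careful with the constant here---writing $q = \tfrac{1}{2k} + o(1/k)$ gives $Mq = \tfrac{M}{2k}(1+o(1))$ and $\E(\OPT) = M(1-e^{-1/2})(1+o(1))$, yielding $\E(\ALG)/\E(\OPT) \le \tfrac{1}{2(1-e^{-1/2})k}(1+o(1))$; to hit the stated bound $\tfrac{1}{(1-e^{-1/2})k}$ exactly one would instead take $q(1-q) = \tfrac{1}{k}\ln\!\bigl(\tfrac{1}{1-1/\text{something}}\bigr)$, or more simply push $q$ up to the value making $1-(1-q(1-q))^k = (1-e^{-1/2})$ while $q \le 1/k$---so I would tune the two-point distribution (possibly using three support values $0,\tfrac12,1$ as in~\cref{prop:iid_lb} to get an exact rather than asymptotic statement) so that the ratio lands on the claimed constant. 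The main obstacle is precisely this constant-tuning together with cleanly arguing the ``optimal online algorithm never buys at the high price'' step so that the upper bound $\E(\ALG)\le Mq$ is airtight for \emph{all} algorithms, not just threshold ones; the probabilistic core (independence across arms, the $1-(1-\cdot)^k \to 1-e^{-1/2}$ limit) is routine once $n=2$ collapses everything to the two-period formulas already proved.
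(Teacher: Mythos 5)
Your construction and analysis are essentially the paper's: an $n=2$, two-point (``0 or $M$'') i.i.d.\ instance, where $\E(\OPT)$ is computed from \cref{lem:k-bandit-opt} as $M\bigl(1-(1-q(1-q))^k\bigr)$ and the online algorithm is bounded by $\E(\ALG)\le Mq$ (because an optimal online algorithm only buys an arm showing price $0$, and period~2 is then a fresh Bernoulli$(q)$). The one place you overcomplicated things is the parameter tuning: the paper simply takes $q=1/k$ and $M=k$, and the constant falls out immediately for every $k\ge 2$ with no limits needed, since $q(1-q)=(1-1/k)/k\ge 1/(2k)$ gives $(1-q(1-q))^k\le(1-1/(2k))^k\le e^{-1/2}$, so $\E(\OPT)\ge k(1-e^{-1/2})$ while $\E(\ALG)\le Mq=1$. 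Your choice $q\approx 1/(2k)$ actually aims for a stronger bound by roughly a factor of two, but requires the extra care you flag (with $q=1/(2k)$ exactly the needed inequality $(1-q(1-q))^k\le e^{-1/2}$ fails, so you must bump $q$ slightly); picking $q=1/k$ sidesteps this entirely and still lands on the stated constant. Your ``never buy at the high price'' argument via the sign of the conditional expected profit is correct and is a slightly more explicit version of what the paper asserts without comment.
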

\begin{proof}
Consider a random variable $X$ that takes $0$ with probability $1 - \frac{1}{k}$ and $k$ with probability $\frac{1}{k}$. Assume that $n = 2$ and each random variable $X_{j}^{i}$ for $1 \le i \le k$, $1 \le j \le n$ is drawn independently according to the distribution $X$. Since there are only two periods, there can be at most one transaction. Obviously, the prophet trades only if the price on the first period is $0$ and the price of the same item on the second period is $k$. For a fixed item the probability of such prices realization is $\big(1 - \frac{1}{k}\big)\frac{1}{k}$. Since we have $k$ different items whose prices drawn independently, we see that the probability the sequence of prices $(0, k)$ does not happen in any of these prices realizations is
$$\bigg(1 - \Big(1 - \frac{1}{k}\Big)\frac{1}{k}\bigg)^{k} = \bigg(1 - \frac{1}{k} + \frac{1}{k^2}\bigg)^{k} \le \bigg(1 - \frac{1}{2k}\bigg)^{k} \le \sqrt{\frac{1}{e}}.$$
Therefore, we have
$$\E(\OPT) \ge \big(1-e^{-1/2}\big)\cdot k.$$
On the other hand, since any algorithm can make only one transaction, it always buys an item that have price $0$ in the first period. Since prices presented in the second period are independent from choices of the algorithm, it does not matter which item the algorithm selects among these with price $0$. Therefore, we have that
$$\E(\ALG) \le k \cdot \frac{1}{k} = 1,$$
which, if compared with $\E(\OPT)$, proves the lemma.
\end{proof}

\bibliographystyle{plainnat}
\bibliography{bibliography}

\newpage
\appendix

\section{Omitted Proofs and Additional Material}

\subsection{Proof of \cref{prop:iid_lb}}\label{appx:subsec:iid_lb}

\begin{proof}
Fix $n$, we construct a parameterized instance with $n$ i.i.d.~prices and obtain the claimed bound as the parameter $\varepsilon \in [0,1]$ tends to $0$.

Fix $\varepsilon \in [0,1]$ and consider
\[
X_i = \begin{cases}
1 & \text{w.p.~$\frac{\varepsilon}{2}$}\\
\frac{1}{2} & \text{w.p.~$1 - \varepsilon$}\\
0 &\text{w.p.~$\frac{\varepsilon}{2}$}
\end{cases} \qquad \text{for $i \in \{1,2, \ldots, n\}$.}
\]
To calculate the expected profit of $\OPT$, we use the characterization described in Lemma~\ref{lem:iid_upper_bound_opt}. Here, it gives 
\begin{align*}
\E(\OPT) &= \frac{n-1}{2}\cdot\E\left(|X_{1} - X_{2}|\right)\\
&= \frac{n-1}{2}\cdot\left(4\cdot\frac{1}{2}\cdot(1-\varepsilon)\cdot \frac{\varepsilon}{2} + 2\cdot 1 \cdot \left(\frac{\varepsilon}{2}\right)^2 \right) = \frac{n-1}{2}\cdot\left(\varepsilon - \frac{\varepsilon^2}{2} \right).
\end{align*}

Consider any algorithm $\ALG$. We can bound its expected profit as follows. First, selling is always optimal in the last period. Since the expected price in this period is $\frac{1}{2}$, we can replace it with a deterministic price of $\frac{1}{2}$, which by linearity of expectation does not change the profit of $\ALG$. Second, any optimal algorithm performs buy/sell operations only when the prices belong to one of the following pairs: $\left(\frac{1}{2}, \frac{1}{2} \right)$, $\left(0, \frac{1}{2} \right)$, $\left(0, 1\right)$, $\left(\frac{1}{2}, 1\right)$. In all these cases, we can distribute the profit from the buy/sell operations into two parts. The algorithm gains $0$ when it trades at a value of $\frac{1}{2}$, and it gains $\frac{1}{2}$ each time it trades at $0$ (buy) or $1$ (sell). Now, consider period $t$, and let $q(t)$ represent the probability that $\ALG$ holds an item before observing the price in this period. Since the price at period $t$ is independent of $q(t)$, thus using the above reasoning, for $t\leq n-1$ we can write that
\[
\E(\text{profit of }\ALG\text{ at period } t) \leq q(t)\frac{1}{2}\varepsilon + (1-q(t))\frac{1}{2}\varepsilon = \frac{1}{4}\varepsilon.
\]
By the linearity of expectation, we conclude that
\[
\E(\ALG) \le \frac{(n-1)}{4}\cdot\varepsilon.
\]
It follows that
\[
\frac{\E(\ALG)}{\E(\OPT)} \le \frac{\frac{n-1}{4}\cdot\varepsilon}{\frac{n-1}{2}\left(\varepsilon - \frac{\varepsilon^2}{2}\right)}.
\]
Sending $\varepsilon$ to $0$, we derive that 
\[
\lim_{\varepsilon \rightarrow 0} \frac{\E(\ALG)}{\E(\OPT)} \le \lim_{\varepsilon \rightarrow 0}  \frac{\frac{n-1}{4}\cdot\varepsilon}{\frac{n-1}{2}\left(\varepsilon - \frac{\varepsilon^2}{2}\right)} = \frac{1}{2}. \qedhere
\]

\end{proof}

\subsection{Proof of \cref{prop:rdm_order_lb}}\label{appx:rdm_order_lb}

\begin{proof}
Consider a large constant $M>0$. We define an instance with $n=2$. Take $X_1$ such that $X_1=M+2$ w.p. $M/(M+2)$, and $X_1=0$ w.p. $2/(M+2)$. Take $X_2$ such that $X_2=M$ w.p. $M/(M+2)$ and $X_2=2M+2$ w.p. $2/(M+2)$.

Notice that $\E(X_1)=M$ and $\E(X_2)=M+2$. The expectation of the optimal algorithm can be written as
\begin{align*}
    \E(\ALG)&= \frac{1}{2}\cdot \E([\E(X_2)-X_1]_+) + \frac{1}{2}\cdot\E([\E(X_1)-X_2]_+)\\
    &= \frac{1}{2}\cdot (M+2)\cdot \frac{2}{M+2} +0 = 1.
\end{align*}

Let us calculate $\E(\OPT)$. There are four possible sequences of prices where $\OPT$ buys and sells: $(M+2,2M+2), (0,M), (0,2M+2), (M,M+2)$. In all other cases $\OPT$ does nothing. Therefore,
\begin{align*}
    &\E(\OPT)\\
    &= \frac{1}{2}\cdot\left( M\cdot \frac{M}{M+2}\cdot \frac{2}{M+2} + M\cdot \frac{2}{M+2}\cdot \frac{M}{M+2} + (2M+2)\cdot  \left(\frac{2}{M+2}\right)^2 + 2\cdot  \left(\frac{M}{M+2}\right)^2 \right)\\
    &= 3 + O(1/M).
\end{align*}
This completes the proof.
\end{proof}

\subsection{Counterexample for Median of Mixture as Threshold}\label{sec:hard-median}
In this section, we consider the algorithm that sets the median of the mixture distribution as a threshold. We present two distributions of prices that, when presented in random order, show that the gap between the value achieved by this algorithm and that of the prophet can be arbitrarily large. For an arbitrary $\epsilon > 0$, consider the following distributions:
\begin{equation*}
X_{1} = \begin{cases}
             0  & \text{w.\ p.\ } 2\epsilon, \\
             1  & \text{w.\ p.\ } (1 - 2\epsilon),
       \end{cases} \quad
X_{2} = \begin{cases}
             0  & \text{w.\ p.\ } (1 - \epsilon), \\
             \frac{1}{\epsilon^{2}}  & \text{w.\ p.\ } \epsilon.
       \end{cases}
\end{equation*}
First, note that the median of the mixture distribution of $X_{1}$ and $X_{2}$ is $0$. Next, assume that an algorithm sets threshold to $T := 0$. In consequence the algorithm buys only when it sees price $0$ on the first period, which leads to
$$\E(\ALG_{T}) \leq \frac{1}{2}(1-2\epsilon)(1-\epsilon)(1 - 0) + \frac{1}{2}\cdot2\epsilon^{2}\bigg(\frac{1}{\epsilon^{2}} - 0 \bigg) \le \frac{3}{2}.$$
On the other hand,
$$\E(\OPT) = \frac{1}{2}(1-2\epsilon)(1-\epsilon)(1 - 0) + \frac{1}{2}\cdot2\epsilon^{2}\bigg(\frac{1}{\epsilon^{2}} - 0 \bigg) + \frac{1}{2}(1-2\epsilon)\epsilon\bigg(\frac{1}{\epsilon^{2}} - 1\bigg) = \Omega\bigg(\frac{1}{\epsilon}\bigg).$$
This shows that the threshold algorithm that uses the median of the mixture distribution as a threshold does not achieve a constant-factor approximation. 

\subsection{Proof of \cref{lemma:two-medians}}\label{appx:subsec:two-medians}
\begin{proof}
Let $M_{1}$ be the median of the first price distribution $F_{1}$ and let $M_{2}$ be the median of the second price distribution $F_{2}$. Let us assume without loss of generality that $M_{1} \le M_{2}$. 

We show that for at least one value from the set $\{M_{1}, M_{2}\}$ setting the threshold $T$ to this value satisfies \cref{eq:1}. 
In order to show that the bigger of the two expectations corresponding to $T$ being $M_{1}$ or $M_{2}$ is greater than $\frac{1}{4}\E(|X_{1} - X_{2}|)$ it suffices to show the following inequality
\begin{align}\label[ineq]{eq:2}
& 2\left( \E(|X_{1} - X_{2}| \cdot \mathds{1}_{M_{1} \in [\min(X_{1}, X_{2}), \max(X_{1}, X_{2})]}) + \E(|X_{1} - X_{2}| \cdot \mathds{1}_{M_{2} \in [\min(X_{1}, X_{2}), \max(X_{1}, X_{2})]}) \right) \notag\\ &\hspace*{48pt}\ge \E(|X_{1} - X_{2}|)
\end{align}

Let us first rewrite the left-hand side of \cref{eq:2}) using our assumption that $M_1 = \min\{M_1,M_2\}$ and $M_2 = \max\{M_1,M_2\}$ as follows
\begin{align}
&2\left( \E\big(|X_{1} - X_{2}| \cdot \mathds{1}_{M_{1} \in [\min(X_{1}, X_{2}), \max(X_{1}, X_{2})]}\big) + \E\big(|X_{1} - X_{2}| \cdot  \mathds{1}_{M_{2} \in [\min(X_{1}, X_{2}), \max(X_{1}, X_{2})]}\big)\right) \notag\\
&\quad= 2\bigg( \E\big((X_{2} - X_{1}) \cdot \mathds{1}_{X_{1} \le M_{1}} \cdot \mathds{1}_{X_{2} \ge M_{1}} \big) +  \E\big((X_{1} - X_{2}) \cdot \mathds{1}_{X_{1} \ge M_{1}} \cdot \mathds{1}_{X_{2} \le M_{1}}\big) \bigg) \notag\\
&\quad\qquad + 2\bigg( \E\big((X_{2} - X_{1})\cdot \mathds{1}_{X_{1} \le M_{2}} \cdot \mathds{1}_{X_{2} \ge M_{2}}\big) +  \E\big((X_{1} - X_{2})\mathds{1}_{X_{1} \ge M_{2}} \cdot \mathds{1}_{X_{2} \le M_{2}}\big) \bigg) \notag\\
&\quad= 4\bigg(\E\big((X_{2} - X_{1}) \cdot \mathds{1}_{X_1 \le M_{1}} \cdot \mathds{1}_{X_2 \ge M_{2}} \big) + \E\big((X_{1} - X_{2}) \cdot \mathds{1}_{X_1 \ge M_{2}} \cdot \mathds{1}_{X_2 \le M_{1}}\big) \bigg)\notag\\
&\quad\qquad+ 2\bigg( \E\big((X_{2} - X_{1}) \cdot \mathds{1}_{X_1 \le M_{1}} \cdot \mathds{1}_{X_2 \in [M_{1}, M_{2}]}\big) + \E\big((X_{2} - X_{1}) \cdot \mathds{1}_{X_1 \in [M_{1}, M_{2}]} \cdot \mathds{1}_{X_2 \ge M_{2}}\big)\notag\\
&\quad\qquad\qquad+ \E\big((X_{1} - X_{2}) \cdot \mathds{1}_{X_1 \ge M_{2}} \cdot \mathds{1}_{X_2 \in [M_{1}, M_{2}]} \big)+ \E\big((X_{1} - X_{2}) \cdot \mathds{1}_{X_1 \in [M_{1}, M_{2}]} \cdot \mathds{1}_{X_2 \le M_{1}}\big) \bigg). \label{eq:lhs-expanded}  
\end{align}

Using again that $M_1 = \min\{M_1,M_2\}$ and $M_2 = \max\{M_1,M_2\}$, the right-hand side of \cref{eq:2} can be expanded to
\begin{align}
\E(|X_{1} - X_{2}|) &=\phantom{+}\E(|X_{2} - X_{1}|\cdot \mathds{1}_{X_{1} \le M_1}\cdot \mathds{1}_{X_{2} \le M_1}) + \E((X_{2} - X_{1})\cdot \mathds{1}_{X_{1} \le M_1}\cdot \mathds{1}_{X_{2} \in [M_{1}, M_2]})& \notag\\ &\hspace{1cm}+\E((X_{2} - X_{1})\cdot \mathds{1}_{X_{1} \le M_1}\cdot \mathds{1}_{X_{2} \ge M_2}) \notag\\
&\hspace*{11pt}+\E((X_{1} - X_{2})\cdot \mathds{1}_{X_{1} \in [M_1, M_2]} \cdot \mathds{1}_{X_{2} \le M_1})  + \E(|X_{2} - X_{1}| \cdot \mathds{1}_{X_{1} \in [M_1, M_2]} \cdot \mathds{1}_{X_{2} \in [M_{1}, M_2]}) & \notag\\ &\hspace{1cm}+\E((X_{2} - X_{1})\cdot \mathds{1}_{X_{1} \in [M_1, M_2]} \cdot \mathds{1}_{X_{2} \ge M_2}) \notag\\
&\hspace*{11pt}+\E((X_{1} - X_{2}) \cdot \mathds{1}_{X_{1} \ge M_2} \cdot \mathds{1}_{X_{2} \le M_1})  + \E((X_{1} - X_{2}) \cdot \mathds{1}_{X_{1} \ge M_2} \cdot \mathds{1}_{X_{2} \in [M_{1}, M_2]}) & \notag\\ &\hspace{1cm}+\E(|X_{1} - X_{2}| \cdot \mathds{1}_{X_{1} \ge M_2}\cdot \mathds{1}_{X_{2} \ge M_2}). \label{eq:rhs-expanded}
\end{align}

Next, we will show how to regroup components of the sum in \cref{eq:lhs-expanded} corresponding to the left-hand side of~\cref{eq:2} to get an upper bound of the components of the sum in \cref{eq:rhs-expanded} corresponding to the right-hand side of inequality \cref{eq:2}. 

\medskip

We will use:

\begin{observation}\label{obs:technical}
If $X_1 \sim F_1$, $X_2 \sim F_2$ and $M_{1} \leq M_2$ are the medians of $F_1$ and $F_2$, then the following inequalities hold:
\begin{enumerate}
    \item $\E\big((X_{2} - X_1) \cdot \mathds{1}_{X_1 \le M_1} \cdot \mathds{1}_{X_2 \ge M_1}\big) + \E\big((X_{1} - X_2) \cdot \mathds{1}_{X_1 \ge M_1} \cdot \mathds{1}_{X_2 \le M_1}\big)  \ge \E\big(|X_{2} - X_{1}| \cdot \mathds{1}_{X_{1} \le M_1} \cdot \mathds{1}_{X_{2} \le M_1}\big)$
    \item $\E\big((X_{2} - X_{1}) \cdot \mathds{1}_{X_1 \le M_1} \cdot \mathds{1}_{X_2 \ge M_2}\big) + \E\big((X_{1} - X_{2})\cdot \mathds{1}_{X_1 \ge M_2} \cdot \mathds{1}_{X_2 \le M_1}\big) \ge \E\big(|X_{2} - X_{1}| \cdot \mathds{1}_{X_{1} \in [M_1, M_2]} \cdot \mathds{1}_{X_{2} \in [M_{1}, M_2]}\big)$
    \item $\E\big((X_{2} - X_{1}) \cdot \mathds{1}_{X_1 \le M_2} \cdot \mathds{1}_{X_2 \ge M_2})\big) + \E\big((X_{1} - X_{2}) \cdot \mathds{1}_{X_1 \ge M_2} \cdot \mathds{1}_{X_2 \le M_2})\big) \ge \E\big(|X_{1} - X_{2}| \cdot \mathds{1}_{X_{1} \ge M_2} \cdot \mathds{1}_{X_{2} \ge M_2}\big)$
    
\end{enumerate}
\end{observation}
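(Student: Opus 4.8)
The plan is to establish all three inequalities with the same three-step recipe: (i) bound the right-hand side from above via a triangle inequality anchored at the appropriate median(s); (ii) bound each of the two summands on the left-hand side from below; (iii) compare the resulting quantities using the independence of $X_1$ and $X_2$ together with the median identities $\PP(X_1 \le M_1) = \PP(X_1 \ge M_1) = \tfrac12$ and $\PP(X_2 \le M_2) = \PP(X_2 \ge M_2) = \tfrac12$, which combined with the assumption $M_1 \le M_2$ give $\PP(X_2 \le M_1) \le \tfrac12 \le \PP(X_2 \ge M_1)$ and $\PP(X_1 \ge M_2) \le \tfrac12 \le \PP(X_1 \le M_2)$.

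For the first inequality, on the event $\{X_1 \le M_1,\, X_2 \le M_1\}$ we have $|X_2 - X_1| \le (M_1 - X_1) + (M_1 - X_2)$, so by independence the right-hand side is at most $\PP(X_2 \le M_1)\, \E\big((M_1 - X_1)\mathds{1}_{X_1 \le M_1}\big) + \tfrac12\, \E\big((M_1 - X_2)\mathds{1}_{X_2 \le M_1}\big)$. On the left-hand side, on $\{X_1 \le M_1,\, X_2 \ge M_1\}$ we have $X_2 - X_1 \ge M_1 - X_1 \ge 0$, so the first summand is at least $\PP(X_2 \ge M_1)\, \E\big((M_1 - X_1)\mathds{1}_{X_1 \le M_1}\big) \ge \PP(X_2 \le M_1)\, \E\big((M_1 - X_1)\mathds{1}_{X_1 \le M_1}\big)$; likewise on $\{X_1 \ge M_1,\, X_2 \le M_1\}$ we have $X_1 - X_2 \ge M_1 - X_2 \ge 0$, so the second summand is at least $\tfrac12\, \E\big((M_1 - X_2)\mathds{1}_{X_2 \le M_1}\big)$. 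Summing these two lower bounds recovers exactly the upper bound on the right-hand side. The third inequality is the reflection of this argument through $M_2$: replace each factor $M_1 - X$ by $X - M_2$ and interchange the roles of ``$\le$'' and ``$\ge$'', now using $\PP(X_1 \le M_2) \ge \tfrac12 \ge \PP(X_1 \ge M_2)$.

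For the second inequality, bound the right-hand side using $|X_2 - X_1| \le M_2 - M_1$ on $\{X_1, X_2 \in [M_1, M_2]\}$ together with $\PP(X_1 \in [M_1, M_2]) \le \tfrac12$ and $\PP(X_2 \in [M_1, M_2]) \le \tfrac12$, which gives that it is at most $(M_2 - M_1)/4$. For the left-hand side, on $\{X_1 \le M_1,\, X_2 \ge M_2\}$ we have $X_2 - X_1 \ge M_2 - M_1$, so by independence the first summand is at least $(M_2 - M_1)\, \PP(X_1 \le M_1)\, \PP(X_2 \ge M_2) = (M_2 - M_1)/4$, while the second summand is nonnegative since on its support $X_1 \ge M_2 \ge M_1 \ge X_2$. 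Hence the left-hand side is at least $(M_2 - M_1)/4 \ge$ the right-hand side.

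I do not expect a genuine obstacle here: the argument is elementary once the triangle inequalities are set up. The only delicate point is the direction of the probability comparisons --- ensuring that every factor we drop or lower-bound goes the right way --- which relies on keeping straight that $M_1$ is the median of $F_1$, $M_2$ that of $F_2$, and $M_1 \le M_2$; the bookkeeping of the indicator products in the first and third inequalities is where a sign error is most likely to creep in.
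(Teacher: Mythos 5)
Your proof is correct and follows essentially the same strategy as the paper's: charge the ``same-side-of-median'' event on the right to the ``opposite-sides'' events on the left, using independence together with the median inequalities $\PP(X_2\le M_1)\le\tfrac12\le\PP(X_2\ge M_1)$ and $\PP(X_1\ge M_2)\le\tfrac12\le\PP(X_1\le M_2)$; for part (2), the two arguments are identical (bound both sides by $(M_2-M_1)/4$). The only difference is in the bookkeeping for parts (1) and (3): the paper splits $|X_1-X_2|$ by the sign of $X_1-X_2$ and transfers one variable past the median in each half, whereas you apply the triangle inequality through $M_1$ (resp.\ $M_2$) and transfer each resulting term separately --- a slightly more symmetric decomposition of the same charging idea.
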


\begin{proof}
\begin{enumerate}
    \item Since $M_{2}$ is the median of $F_{2}$ and $M_{1} \le M_{2}$ we have that 
    $$\PP(X_{2} \ge M_{1}) \ge \PP(X_{2} < M_{1}).$$
    Therefore, we have that
    $$\E\big((X_{2} - X_1) \cdot \mathds{1}_{X_1 \le M_1} \cdot \mathds{1}_{X_2 \ge M_1}\big) \ge \E\big((X_2 - X_1) \cdot \mathds{1}_{X_1 \le M_1} \cdot \mathds{1}_{X_2 \le M_1} \cdot \mathds{1}_{X_2 > X_1}\big).$$
    Similarly, because $\PP(X_{1} \le M_{1}) = \PP(X_{1} > M_{1})$ and we decrease values of $X_{1}$ on the right-hand side, we get that
    $$\E\big((X_{1} - X_2)\cdot \mathds{1}_{X_1 \ge M_1} \cdot \mathds{1}_{X_2 \le M_1}\big) \ge \E((X_1 - X_2) \cdot \mathds{1}_{X_1 \le M_1} \cdot \mathds{1}_{X_2 \le M_1}\cdot\mathds1_{X_1 \ge X_2}).$$
    Since
    \begin{align*}
    &\E\big((X_2 - X_1) \cdot \mathds{1}_{X_1 \le M_1} \cdot \mathds{1}_{X_2 \le M_1} \cdot \mathds1_{X_2 > X_1}\big) + \E((X_1 - X_2) \cdot \mathds{1}_{X_1 \le M_1} \cdot \mathds{1}_{X_2 \le M_1}\cdot\mathds1_{X_1 \ge X_2}) \\
    &\qquad= \E\big(|X_{2} - X_{1}| \cdot \mathds{1}_{X_{1} \le M_1} \cdot \mathds{1}_{X_{2} \le M_1}\big),
    \end{align*}
    thus $1)$ must hold. 
    \item 
    Observe, that the inequality: 
    $$(X_2 - X_1) \cdot \mathds{1}_{X_1 \le M_1} \cdot \mathds{1}_{X_2 \ge M_2} \ge |X_{2} - X_{1}| \cdot \mathds{1}_{X_{1} \in [M_1, M_2]} \cdot \mathds{1}_{X_{2} \in [M_{1}, M_2]}$$ 
    holds with probability $1$.
    Also, because $M_{1}, M_{2}$ are medians we have that
    $$\PP\big(\mathds{1}_{X_1 \le M_1} \cdot \mathds{1}_{X_2 \ge M_2}\big) \ge \frac{1}{4} \ge \PP\big(\mathds{1}_{X_{1} \in [M_1, M_2]}\cdot \mathds{1}_{X_{2} \in [M_{1}, M_2]}\big),$$
    which together gives that
    $$\E\big((X_{2} - X_{1}) \cdot \mathds{1}_{X_1 \le M_1} \cdot \mathds{1}_{X_2 \ge M_2}\big) \ge \E\big(|X_{2} - X_{1}| \cdot \mathds{1}_{X_{1} \in [M_1, M_2]} \cdot \mathds{1}_{X_{2} \in [M_{1}, M_2]}\big).$$
    Observe, that the random variable $(X_{1} - X_{2})\cdot \mathds{1}_{X_1 \ge M_2} \cdot \mathds{1}_{X_2 \le M_1}$ is non-negative, since we assumed $M_{1} \le M_{2}$. Combining this with the previous inequality leads to
    \begin{align*}
        &\E\big((X_{2} - X_{1}) \cdot \mathds{1}_{X_1 \le M_1} \cdot \mathds{1}_{X_2 \ge M_2}\big) + \E\big((X_{1} - X_{2})\cdot \mathds{1}_{X_1 \ge M_2} \cdot \mathds{1}_{X_2 \le M_1}\big)\\
        &\qquad\geq\E\big(|X_{2} - X_{1}| \cdot \mathds{1}_{X_{1} \in [M_1, M_2]} \cdot \mathds{1}_{X_{2} \in [M_{1}, M_2]}\big) ,
    \end{align*}
    which proves $2)$.
    \item 
    Again, since $M_2$ is the median of $F_{2}$ and $M_{1} \le M_{2}$ we have that
    $$\PP(X_1 \le M_{2}) \ge \PP(X_{1} \ge M_{2}),$$
    which implies the following inequality
    $$\E\big((X_{2} - X_{1}) \cdot \mathds{1}_{X_1 \le M_2} \cdot \mathds{1}_{X_2 \ge M_2})\big) \ge \E\big((X_{2} - X_{1}) \cdot \mathds{1}_{X_{1} \ge M_2} \cdot \mathds{1}_{X_{2} \ge M_2} \cdot \mathds{1}_{X_{2} \ge X_1} \big).$$
    It also holds that
    $$\E\big((X_{1} - X_{2}) \cdot \mathds{1}_{X_1 \ge M_2} \cdot \mathds{1}_{X_2 \le M_2})\big) \ge
    \E\big((X_{1} - X_{2}) \cdot \mathds{1}_{X_{1} \ge M_2} \cdot \mathds{1}_{X_{2} \ge M_2} \cdot \mathds{1}_{X_{1} \ge X_2} \big),$$
    since $\PP(X_{2} \le M_{2}) = \PP(X_{2} > M_{2})$ and the left-hand side increases values of $X_{2}$.
    Since
    \begin{align*}
        &\E\big((X_{2} - X_{1}) \cdot \mathds{1}_{X_{1} \ge M_2} \cdot \mathds{1}_{X_{2} \ge M_2} \cdot \mathds{1}_{X_{2} \ge X_1} \big) + \E\big((X_{1} - X_{2}) \cdot \mathds{1}_{X_{1} \ge M_2} \cdot \mathds{1}_{X_{2} \ge M_2} \cdot \mathds{1}_{X_{1} \ge X_2} \big)\\
        &\qquad=\E\big(|X_{1} - X_{2}| \cdot \mathds{1}_{X_{1} \ge M_2} \cdot \mathds{1}_{X_{2} \ge M_2}\big),
    \end{align*}
    adding both sides of the two above inequalities we get $3)$.
\end{enumerate}
This finishes the proof of the observation.
\end{proof}

Recall, that we wanted to compare \cref{eq:lhs-expanded} with \cref{eq:rhs-expanded}. By adding inequalities in \cref{obs:technical} side by side we get that
\begin{align}
&\E\big((X_{2} - X_1) \cdot \mathds{1}_{X_1 \le M_1} \cdot \mathds{1}_{X_2 \ge M_1}\big) + \E\big((X_{1} - X_2) \cdot \mathds{1}_{X_1 \ge M_1} \cdot \mathds{1}_{X_2 \le M_1}\big) \notag\\
+\;& \E\big((X_{2} - X_{1}) \cdot \mathds{1}_{X_1 \le M_1} \cdot \mathds{1}_{X_2 \ge M_2}\big) + \E\big((X_{1} - X_{2})\cdot \mathds{1}_{X_1 \ge M_2} \cdot \mathds{1}_{X_2 \le M_1}\big) \notag\\
+\;&\E\big((X_{2} - X_{1}) \cdot \mathds{1}_{X_1 \le M_2} \cdot \mathds{1}_{X_2 \ge M_2})\big) + \E\big((X_{1} - X_{2}) \cdot \mathds{1}_{X_1 \ge M_2} \cdot \mathds{1}_{X_2 \le M_2})\big) \notag\\
\ge\hspace*{11pt}&\E\big(|X_{2} - X_{1}| \cdot \mathds{1}_{X_{1} \le M_1} \cdot \mathds{1}_{X_{2} \le M_1}\big) \notag\\
+\;& \E\big(|X_{2} - X_{1}| \cdot \mathds{1}_{X_{1} \in [M_1, M_2]} \cdot \mathds{1}_{X_{2} \in [M_{1}, M_2]}\big) \notag\\ 
+\;&\E\big(|X_{1} - X_{2}| \cdot \mathds{1}_{X_{1} \ge M_2} \cdot \mathds{1}_{X_{2} \ge M_2}\big). \label[ineq]{eq:7}  
\end{align}
Note that
\begin{align*}
&\E\big((X_{2} - X_1) \cdot \mathds{1}_{X_1 \le M_1} \cdot \mathds{1}_{X_2 \ge M_1}\big) + \E\big((X_{1} - X_2) \cdot \mathds{1}_{X_1 \ge M_1} \cdot \mathds{1}_{X_2 \le M_1}\big) \\
=\hspace*{11pt}&\E\big((X_{2} - X_1) \cdot \mathds{1}_{X_1 \le M_1} \cdot \mathds{1}_{X_2 \in [M_1, M_2]}\big) + \E\big((X_{2} - X_1) \cdot \mathds{1}_{X_1 \le M_1} \cdot \mathds{1}_{X_2 \ge M_2}\big) \\
+\;&\E\big((X_{1} - X_2) \cdot \mathds{1}_{X_1 \in [M_1, M_2]} \cdot \mathds{1}_{X_2 \le M_1}\big) + \E\big((X_{1} - X_2) \cdot \mathds{1}_{X_1 \ge M_2} \cdot \mathds{1}_{X_2 \le M_1}\big)
\end{align*}
and similarly
\begin{align*}
&\E\big((X_{2} - X_{1}) \cdot \mathds{1}_{X_1 \le M_2} \cdot \mathds{1}_{X_2 \ge M_2})\big) + \E\big((X_{1} - X_{2}) \cdot \mathds{1}_{X_1 \ge M_2} \cdot \mathds{1}_{X_2 \le M_2})\big) \\
=\hspace*{11pt}& \E\big((X_{2} - X_{1}) \cdot \mathds{1}_{X_1 \le M_1} \cdot \mathds{1}_{X_2 \ge M_2})\big) + \E\big((X_{2} - X_{1}) \cdot \mathds{1}_{X_1 \in [M_1, M_2]} \cdot \mathds{1}_{X_2 \ge M_2})\big) \\
+\;& \E\big((X_{1} - X_{2}) \cdot \mathds{1}_{X_1 \ge M_2} \cdot \mathds{1}_{X_2 \le M_1})\big) + \E\big((X_{1} - X_{2}) \cdot \mathds{1}_{X_1 \ge M_2} \cdot \mathds{1}_{X_2 \in [M_1, M_2]})\big).
\end{align*}
Therefore \cref{eq:7} can be rewritten as
\begin{align*}
&3\bigg(\E\big((X_{2} - X_{1}) \cdot \mathds{1}_{X_1 \le M_{1}} \cdot \mathds{1}_{X_2 \ge M_{2}} \big) + \E\big((X_{1} - X_{2}) \cdot \mathds{1}_{X_1 \ge M_{2}} \cdot \mathds{1}_{X_2 \le M_{1}}\big) \bigg)\notag\\
+\;& \E\big((X_{2} - X_{1}) \cdot \mathds{1}_{X_1 \le M_{1}} \cdot \mathds{1}_{X_2 \in [M_{1}, M_{2}]}\big) + \E\big((X_{2} - X_{1}) \cdot \mathds{1}_{X_1 \in [M_{1}, M_{2}]} \cdot \mathds{1}_{X_2 \ge M_{2}}\big)\notag\\
+\;& \E\big((X_{1} - X_{2}) \cdot \mathds{1}_{X_1 \ge M_{2}} \cdot \mathds{1}_{X_2 \in [M_{1}, M_{2}]} \big)+ \E\big((X_{1} - X_{2}) \cdot \mathds{1}_{X_1 \in [M_{1}, M_{2}]} \cdot \mathds{1}_{X_2 \le M_{1}}\big) \notag\\
\ge\hspace*{11pt}&\E\big(|X_{2} - X_{1}| \cdot \mathds{1}_{X_{1} \le M_1} \cdot \mathds{1}_{X_{2} \le M_1}\big) \\
+\;& \E\big(|X_{2} - X_{1}| \cdot \mathds{1}_{X_{1} \in [M_1, M_2]} \cdot \mathds{1}_{X_{2} \in [M_{1}, M_2]}\big) \notag\\
+\;&\E\big(|X_{1} - X_{2}| \cdot \mathds{1}_{X_{1} \ge M_2} \cdot \mathds{1}_{X_{2} \ge M_2}\big).
\end{align*}
%\textcolor{red}{HERE: Sorry, I don't see how you get the LHS of the following inequality. The indicators don't seem to match those in the observation.}
By adding 
\begin{align*}
&\bigg(\E\big((X_{2} - X_{1}) \cdot \mathds{1}_{X_1 \le M_{1}} \cdot \mathds{1}_{X_2 \ge M_{2}} \big) + \E\big((X_{1} - X_{2}) \cdot \mathds{1}_{X_1 \ge M_{2}} \cdot \mathds{1}_{X_2 \le M_{1}}\big) \bigg)\notag\\
+\;&\E\big((X_{2} - X_{1}) \cdot \mathds{1}_{X_1 \le M_{1}} \cdot \mathds{1}_{X_2 \in [M_{1}, M_{2}]}\big) + \E\big((X_{2} - X_{1}) \cdot \mathds{1}_{X_1 \in [M_{1}, M_{2}]} \cdot \mathds{1}_{X_2 \ge M_{2}}\big)\notag\\
+\;& \E\big((X_{1} - X_{2}) \cdot \mathds{1}_{X_1 \ge M_{2}} \cdot \mathds{1}_{X_2 \in [M_{1}, M_{2}]} \big)+ \E\big((X_{1} - X_{2}) \cdot \mathds{1}_{X_1 \in [M_{1}, M_{2}]} \cdot \mathds{1}_{X_2 \le M_{1}}\big),
\end{align*}
to both sides of the above inequality we get that
\begin{align*}
&4\bigg(\E\big((X_{2} - X_{1}) \cdot \mathds{1}_{X_1 \le M_{1}} \cdot \mathds{1}_{X_2 \ge M_{2}} \big) + \E\big((X_{1} - X_{2}) \cdot \mathds{1}_{X_1 \ge M_{2}} \cdot \mathds{1}_{X_2 \le M_{1}}\big) \bigg)\notag\\
+\;& 2\bigg( \E\big((X_{2} - X_{1}) \cdot \mathds{1}_{X_1 \le M_{1}} \cdot \mathds{1}_{X_2 \in [M_{1}, M_{2}]}\big) + \E\big((X_{2} - X_{1}) \cdot \mathds{1}_{X_1 \in [M_{1}, M_{2}]} \cdot \mathds{1}_{X_2 \ge M_{2}}\big)\notag\\
&\hspace*{2pt}\;+\E\big((X_{1} - X_{2}) \cdot \mathds{1}_{X_1 \ge M_{2}} \cdot \mathds{1}_{X_2 \in [M_{1}, M_{2}]} \big)+ \E\big((X_{1} - X_{2}) \cdot \mathds{1}_{X_1 \in [M_{1}, M_{2}]} \cdot \mathds{1}_{X_2 \le M_{1}}\big) \bigg)\\
%%%%
\ge\hspace*{11pt}&\E(|X_{2} - X_{1}|\cdot \mathds{1}_{X_{1} \le M_1}\cdot \mathds{1}_{X_{2} \le M_1}) + \E((X_{2} - X_{1})\cdot \mathds{1}_{X_{1} \le M_1}\cdot \mathds{1}_{X_{2} \in [M_{1}, M_2]})& \notag\\ 
&+\;\E((X_{2} - X_{1})\cdot \mathds{1}_{X_{1} \le M_1}\cdot \mathds{1}_{X_{2} \ge M_2}) \notag\\
+\;&\E((X_{1} - X_{2})\cdot \mathds{1}_{X_{1} \in [M_1, M_2]} \cdot \mathds{1}_{X_{2} \le M_1}) + \E(|X_{2} - X_{1}| \cdot \mathds{1}_{X_{1} \in [M_1, M_2]} \cdot \mathds{1}_{X_{2} \in [M_{1}, M_2]}) & \notag\\ 
&+ \E((X_{2} - X_{1})\cdot \mathds{1}_{X_{1} \in [M_1, M_2]} \cdot \mathds{1}_{X_{2} \ge M_2}) \notag\\
+\;&\E((X_{1} - X_{2}) \cdot \mathds{1}_{X_{1} \ge M_2} \cdot \mathds{1}_{X_{2} \le M_1}) + \E((X_{1} - X_{2}) \cdot \mathds{1}_{X_{1} \ge M_2} \cdot \mathds{1}_{X_{2} \in [M_{1}, M_2]}) & \notag\\
&+ \E(|X_{1} - X_{2}| \cdot \mathds{1}_{X_{1} \ge M_2}\cdot \mathds{1}_{X_{2} \ge M_2}), \end{align*}
which exactly compares values of \cref{eq:lhs-expanded} and \cref{eq:rhs-expanded}.
\end{proof}

\subsection{Tightness of~\cref{lemma:two-medians}}

\begin{observation}
There is an instance for which the factor $\frac{1}{4}$ in~\cref{lemma:two-medians} is tight when  $T$ is the best of the two medians.
\end{observation}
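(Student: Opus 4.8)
The plan is to exhibit an explicit pair of (absolutely continuous) distributions $F_1,F_2$ and compute everything by hand. Since for $n=2$ \cref{lemma:opt-est-ran-ord,lemma:alg-est-ran-ord} reduce the question to comparing $\E(|X_1-X_2|)$ with $\E(|X_1-X_2|\cdot\mathds 1_{M_i\in[\min(X_1,X_2),\max(X_1,X_2)]})$ for $i\in\{1,2\}$, it suffices to produce $F_1,F_2$ for which the better of these two quantities equals (a limit of) exactly $\tfrac14\E(|X_1-X_2|)$; together with the lower bound already furnished by \cref{lemma:two-medians}, this says the factor $\tfrac14$ cannot be improved when one is forced to take $T\in\{M_1,M_2\}$.

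Concretely, I would normalize the medians to $M_1=0\le M_2=1$ and build $F_1,F_2$ as mixtures of a few point masses, each smeared by an independent $\mathrm{Uniform}[-\varepsilon,\varepsilon]$ perturbation so that the distributions are continuous and $M_1,M_2$ are unambiguous; the free data are the atom locations and weights, subject only to the median constraints $\PP(X_1\le 0)=\PP(X_2\ge 1)=\tfrac12$. For such an instance $\E(|X_1-X_2|)$, $\E(|X_1-X_2|\mathds 1_{0\in[\min,\max]})$ and $\E(|X_1-X_2|\mathds 1_{1\in[\min,\max]})$ are finite sums over the atom-pairs, piecewise linear in the atom data. I would then let $\varepsilon\to 0$, impose that the last two both equal a quarter of the first, and solve the resulting small system. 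Equivalently — and this is the cleanest way to organize the search — one arranges the atoms on well-separated scales so that the configurations contributing the bulk of $\E(|X_1-X_2|)$ are precisely those in which \emph{neither} median separates $X_1$ from $X_2$ (the "$(L,L)$'', "$(C,C)$'', "$(R,R)$'' pictures of \cref{fig:tech-proof}), while the configurations that straddle a median contribute only a controlled amount; one checks along the way that every inequality invoked in \cref{obs:technical} holds with equality on the limiting instance, which is exactly the sense in which "the analysis is tight''. Once the instance is fixed, the verification is a routine evaluation of finitely many expectations.

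The main obstacle I anticipate is this balancing step itself: the two medians must be handled symmetrically (so that neither threshold does better than $\tfrac14$), and one must choose the scales so that the "straddling'' mass does not overwhelm the "non‑straddling'' mass — in the various attempts one naturally overshoots $\tfrac14$, so pinning the ratio down to $\tfrac14$ requires solving the constraint system carefully rather than guessing. A secondary, purely technical point is the behaviour of the threshold rule at realizations equal to a median (where it neither buys nor sells); the $\mathrm{Uniform}[-\varepsilon,\varepsilon]$ smoothing removes this issue in the $\varepsilon\to 0$ limit, so it costs nothing but should be mentioned explicitly.
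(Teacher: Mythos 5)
Your high-level plan is on the right track and closely mirrors the paper's approach: build $F_1,F_2$ from a handful of point masses smeared by a small uniform perturbation so that the distributions are continuous and the medians are pinned down, let the smoothing tend to zero, and arrange the atom positions and weights so that neither median captures more than a quarter of $\E(|X_1-X_2|)$. You also correctly flag the two essential design constraints: symmetry between $M_1$ and $M_2$ (so neither threshold wins) and well-separated scales so the straddling and non-straddling contributions balance at exactly $\tfrac14$.

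However, the observation is an existence claim, and its entire content is the explicit instance. Your write-up never produces one: you describe the \emph{form} of the instance and the constraint system you would solve (``pinning the ratio down to $\tfrac14$ requires solving the constraint system carefully rather than guessing''), but you do not give atom locations and weights, and you do not evaluate $\E(|X_1-X_2|)$ and $\E(|X_1-X_2|\cdot\mathds{1}_{M_i\in[\min(X_1,X_2),\max(X_1,X_2)]})$ to confirm the ratio. The step you identify as the main obstacle is exactly the step you leave undone, so this is a plan rather than a proof. For reference, the paper's construction places both $X_1$ and $X_2$ on a tight cluster near a common value $5$ (a few atoms separated by tiny gaps $\varepsilon_1,\varepsilon_2$ carrying $\Theta(1)$ mass, which pins each median inside the cluster), together with a probability-$\gamma$ outlier for each: $X_1=0$ and $X_2=10$. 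Then $\E(|X_1-X_2|)\approx 10\gamma$, while a median threshold lies in the cluster and only captures roughly one outlier paired against half of the opposing cluster, about $\tfrac{5}{2}\gamma$, giving the ratio $\tfrac14$ in the limit. One smaller inaccuracy in your sketch: you say tightness should manifest as every inequality of \cref{obs:technical} holding with equality on the limiting instance; that is a stronger condition than tightness of the final $\tfrac14$ factor, since the proof of \cref{lemma:two-medians} does additional regrouping on top of \cref{obs:technical}, and it is not needed to verify the observation.
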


\begin{proof}[Proof sketch.]
Consider the following price distributions:
\begin{equation*}
    X_{1} =
    \begin{cases}
      5 + \varepsilon_{1} + \varepsilon_{2} & \text{with prob. }  \frac{1}{2} - \gamma,  \\
      5 + \varepsilon_{1} & \text{with prob. } \gamma,  \\
      5 & \text{with prob. } \frac{1}{2} - \gamma, \\
      0 & \text{with prob. } \gamma; 
    \end{cases}\\
\end{equation*}
\begin{equation*}
    X_{2} =
    \begin{cases}
      10 & \text{with prob. }  \gamma,  \\
      5 & \text{with prob. } \frac{1}{2} - \gamma, \\
      5 - \varepsilon_{1} & \text{with prob. } \gamma,\\
      5 - \varepsilon_{1} - \varepsilon_{2}  & \text{with prob. } \frac{1}{2} - \gamma.
    \end{cases}\\
\end{equation*}
Assume that $\gamma \sim \frac{1}{100}$ and $\varepsilon_{1}, \varepsilon_{2}$ are much smaller than $\gamma$. We see that $\E(|X_{1} - X_{2}|) \sim 2\cdot 5\gamma = \frac{1}{10} = \frac{20}{200}$.

Now, consider $T$ being the median of $X_{1}$ (the case when $T$ is the median of $X_{2}$ is symmetric). $T$ is  $[5, 5 + \varepsilon_{1}]$. For this threshold we have that
\begin{align*}
\E( |X_{1} - X_{2}| \cdot\mathds{1}_{T \in [\min(X_{1}, X_{2}), \max(X_{1}, X_{2})]} ) &\sim (\varepsilon_{1} + \varepsilon_{2}) \frac{1}{2} + (5 + \varepsilon_{1})\gamma\cdot\gamma + 5(\frac{1}{2} - \gamma)\gamma + 10\gamma^{2}\\
&\sim 0 + \frac{5}{10000} + \frac{5}{200} + \frac{1}{1000} ~ \sim \frac{5}{200},
\end{align*}
completing the proof.
\end{proof}

\end{document}